\documentclass[11pt,oneside]{article}
\usepackage[margin=1in]{geometry}

\usepackage{amsmath,amsthm}
\usepackage[natbibapa]{apacite}
\DeclareMathOperator*{\argmax}{argmax}

\usepackage{graphicx}
\usepackage{mathtools}
\usepackage{extarrows}
\usepackage{verbatim}
\usepackage[hidelinks]{hyperref}

\usepackage{enumitem}
\usepackage{amssymb} 
\allowdisplaybreaks	

\usepackage{caption}

\numberwithin{equation}{section}
\theoremstyle{plain}                
\newtheorem{theorem}{Theorem}[section]
\newtheorem{lemma}[theorem]{Lemma}

\newtheorem{corollary}[theorem]{Corollary}

\theoremstyle{definition}           
\newtheorem{definition}[theorem]{Definition}

\newtheorem{assumption}[theorem]{Assumption}
\newtheorem{notation}[theorem]{Notation}
\theoremstyle{remark}
\newtheorem{remark}[theorem]{Remark}

\newcommand{\lemref}[1]{Lemma~\ref{#1}}

\usepackage[toc,title,page]{appendix}

\providecommand{\R}{} \renewcommand{\R}{{\mathbb R}}
 
\newcommand{\N}{{\mathbb N}}

\newcommand\ue{\underline{\epsilon}}
\newcommand\be{\overline{\epsilon}}

\newcommand{\cU}{\mathcal{U}}
\newcommand{\eM}{\epsilon_M}
\newcommand{\bx}{\mathbf{x}}
\newcommand{\bp}{\mathbf{p}}
\newcommand{\sn}{\mathrm{sn}}
\newcommand{\D}{\mathbb{D}}

\title{Portfolio Choice under General Utility \\ with Transaction Costs and Search Frictions}

\author{
	\textsc{Tae Ung Gang}
	\thanks{Department of Mathematical Sciences, UNIST, South Korea (E-mail: {\it gahnghansahnae@unist.ac.kr})}
	\and
	\textsc{Donghan Kim}
	\thanks{Department of Mathematical Sciences, KAIST, South Korea (E-mail: {\it kimdonghan@kaist.ac.kr})}
}

\begin{document}

\maketitle

\begin{abstract}
    We study finite-horizon portfolio optimization with proportional transaction costs and trading opportunities arriving at the jump times of a Cox process. Borrowing and short-selling are prohibited, while utility functions need not be concave, increasing, or differentiable. The admissible class includes differentiable utilities with asymptotic elasticity greater than or equal to one. The associated Hamilton--Jacobi--Bellman equation is semi-linear and nonlocal, with control entering only through a zeroth-order term. We show that the normalized equation admits a unique bounded solution that is continuous in all state variables and classical in time and log-price. The proof combines a contraction argument with interior Schauder estimates, and verification identifies the value function and yields an optimal Markovian feedback strategy. Numerical examples with non-concave utilities exhibit, at fixed time and pre-trade wealth, multiple intervals associated with the same trading action, departures from the usual buy--no-trade--sell ordering, and abrupt switches in the optimal post-trade risky weight.
\end{abstract}

\bigskip

\noindent \textbf{Keywords:} Portfolio optimization; non-concave utility; transaction costs; search frictions; HJB equation; optimal trading strategy.

\bigskip

\section{Introduction}

Since the seminal contributions of \citet{Merton1969, Merton1971}, frictionless portfolio optimization has provided a benchmark for continuous-time investment theory. In Merton's frictionless market, an investor with constant relative risk aversion~(CRRA) maintains a constant fraction of wealth in the risky asset, typically requiring continuous rebalancing. Market illiquidity limits the applicability of this prescription. Two prominent sources of illiquidity are proportional transaction costs, which make each trade costly, and search frictions, which restrict trading to times when a counterparty becomes available.

When trading is continuously available but incurs proportional transaction costs, the investor no longer continuously tracks the frictionless target. \citet{MagillConstantinides1976} observed that the optimal policy is described by a no-trade region, within which portfolio adjustment is not optimal. \citet{DavisNorman1990} provided a rigorous analysis of the infinite-horizon consumption--investment problem with power utility using singular stochastic control; their optimal policy keeps the portfolio in the closure of a no-trade wedge through purchases and sales at its boundaries. \citet{ShreveSoner1994} relaxed the technical assumptions of that analysis and used viscosity methods to establish regularity and characterize the optimal policy. Finite-horizon problems were studied by \citet{Liu2002} and \citet{Dai2009}, with the latter developing a parabolic double-obstacle formulation and analyzing the associated free boundaries. \citet{Janecek2004} and \citet{Gerhold2014} developed small-cost asymptotics for trading boundaries and welfare losses. These studies work with concave utilities, often of power or logarithmic type, and establish the buy--no-trade--sell ordering of the optimal trading regions.

A different kind of illiquidity arises when trading is restricted to exogenous, typically random, times. Early contributions consider regularly spaced trading dates \citep{Rogers2001} or the arrival times of a Poisson process \citep{Rogers2002,Matsumoto2006}, the latter obtaining the value function for logarithmic utility together with an asymptotic expansion of the optimal strategy. \citet{Pham2008} studied optimal consumption when the risky asset can be traded and observed only at Poisson times, characterizing the value functions through a coupled system of integro-differential equations of \cite{PT2009}. \citet{Cretarola2011} established $C^1$ regularity for the value functions and derived optimal policies in feedback form. Related developments include finite-horizon problems and convergence to the liquid-market limit \citep{Gassiat2011}, regime-switching liquidity \citep{Gassiat2014}, liquidity breakdowns \citep{Diesinger2010}, and allocation between liquid and illiquid assets \citep{Ang2014}. This literature emphasizes the effect of restricted trading opportunities rather than proportional costs incurred at each trade.

When both frictions are present, the investor can rebalance only at trading opportunities and incurs proportional costs whenever a trade is executed. \citet{Gang2023} characterize the no-trade region for logarithmic utility in a model combining transaction costs and search frictions; for power utility, the same authors develop a unified asymptotic framework \cite{TJ2026} in which both frictions become small, connecting the limiting regimes with transaction costs alone and search frictions alone.

Much of this literature assumes concave utility. Nonconcave objectives nevertheless arise naturally in financial decision-making. Examples include delegated portfolio management with convex incentive compensation \citep{Carpenter2000}, S-shaped value functions of prospect theory \citep{Berkelaar2004, JinZhou2008, KahnemanTversky1979}, aspiration and goal-reaching preferences \citep{AristidouGigaLeeZapatero2025, Browne1999, DiecidueVandeVen2008}, and classical utilities with alternating curvature \citep{FriedmanSavage1948, Markowitz1952}.

In suitable unconstrained complete-market settings, one can study non-concave utility maximization via concavification: under appropriate assumptions, replacing the utility with its concave envelope preserves the optimal value \citep{Reichlin2013}. This replacement need not remain valid under trading constraints or transaction costs. Existing results on non-concave utility maximization under trading constraints or frictions, including those in \cite{DaiKouQianWan2022,QianYang2026}, characterize the value function using viscosity solutions of the HJB equation; to the best of our knowledge, these studies do not establish a classical-solution characterization. The underlying difficulty is structural: when trading is continuously available but incurs proportional transaction costs, the HJB equation takes the form of a variational inequality with gradient constraints, with free boundaries separating the buy, no-trade, and sell regions \citep{Dai2009,ShreveSoner1994}. The classical regularity results cited above exploit the concavity and homothetic structure of power utility functions, while convex duality methods likewise require concavity. Neither approach therefore applies directly to non-concave utilities, and concavification cannot generally be used to overcome this difficulty in the presence of transaction costs, as shown in \cite{QianYang2026}.

The main message of this paper is that search frictions remove this obstruction. Because trades occur only at Cox arrival times, cash and share holdings remain constant between opportunities, while the log-price evolves as a diffusion. The trading decision enters the HJB equation through a nonlocal zeroth-order term: the arrival intensity multiplied by the difference between the best post-trade value and the current value. The HJB equation is therefore semi-linear and uniformly parabolic in the log-price, with the holdings playing the role of parameters. This structure permits classical regularity in the diffusive variables without imposing concavity or homotheticity on utility.

We consider a finite-horizon market with one risk-free asset and one risky asset whose drift and volatility may depend on time and the current price. Purchases and sales incur price-dependent proportional costs, and trading opportunities arrive at the jump times of a Cox process with time- and price-dependent intensity. Borrowing and short-selling are prohibited, and the investor maximizes expected utility of terminal liquidated wealth. For some $p>0$, our only requirement on the utility $U$ is that
\[
    r\longmapsto\frac{U(e^r)}{\mathcal{U}_p(e^r)},\qquad \mathcal{U}_p(w):=1+w^p+w^{-p},
\]
be bounded and H\"older continuous on $\mathbb{R}$. Thus $U$ need not be increasing, concave, or differentiable. The admissible class also includes differentiable utilities with asymptotic elasticity greater than or equal to one; we do not impose the asymptotic elasticity restriction used in the standard concave-utility framework of \citet{Sch1999}.

Our main contributions are as follows.
\begin{itemize}
    \item \emph{Classical solvability.} We prove that the normalized HJB equation admits a unique bounded solution that is continuous in all state variables and classical in time and log-price before maturity. Its derivatives satisfy local parabolic H\"older estimates uniformly over the portfolio positions. The proof combines a Feynman--Kac fixed-point construction, estimates in an intrinsic H\"older seminorm, and interior Schauder estimates for linear parabolic equations.

    \item \emph{Verification and optimal feedback.} We identify both the deterministic value function and the conditional value process and construct an optimal Markovian feedback strategy through a Borel measurable maximizing selector. At each trading opportunity, the optimal decision is obtained by comparing the best attainable post-trade values. The resulting action can be interpreted as a purchase, a sale, or no trade, without requiring a particular ordering or connectedness of the corresponding action sets.

    \item \emph{Qualitative features of optimal trading.} Numerical examples with smoothed aspiration utilities exhibit multiple intervals associated with the same action along fixed-time, fixed-wealth slices, reversals of the usual buy--no-trade--sell ordering, and abrupt switches in the optimal post-trade risky weight. In one example, the computed weight switches directly from zero to one. We prove that any discontinuity of an optimal post-trade risky-weight selector must occur at a state with multiple global maximizers yielding different post-trade weights. For the example with two aspiration levels, the competing decisions can be interpreted as securing one aspiration level while gambling for another.
\end{itemize}

The distinction between regularity in the diffusive and portfolio variables is important. We obtain classical regularity in time and log-price, whereas continuity in the portfolio positions is sufficient for the verification argument because those positions change only at trading opportunities. We also note that the smoothing of the aspiration utilities is essential to our approach; genuine jumps in $U$ violate the H\"older condition, and we do not claim convergence of values or policies as the smoothing vanishes.

The rest of the paper is organized as follows. Section~\ref{sec:model} introduces the market model, and Section~\ref{sec:objective} formulates the optimization problem with admissibility constraints. Section~\ref{sec:HJB} derives the HJB equation, establishes existence, uniqueness, and regularity, and proves the verification theorem. Section~\ref{sec:optimal-portfolio} characterizes optimal trades, while Section~\ref{sec:numerical} presents numerical examples of non-concave utilities and their economic interpretation. Auxiliary lemmas are collected in the Appendix.

\bigskip

\section{Model} \label{sec:model}

Throughout the paper, we fix a finite horizon $T \in (0,\infty)$. Let $(\Omega,\mathcal{F},\mathbb{F}=(\mathcal{F}_t)_{t\in[0, T]},\mathbb{P})$ be a filtered probability space satisfying the usual conditions, and let $B=(B_t)_{t\in[0,T]}$ be a standard $\mathbb{F}$-Brownian motion.

Our financial market consists of one risk-free asset $S^{(0)}$ bearing no interest, whose price is normalized to
\[
    S_t^{(0)} \equiv 1, \qquad t\in[0,T],
\]
and one risky asset $S=(S_t)_{t\in[0,T]}$, whose dynamics are given by
\begin{equation*}
    d S_{t} = S_{t} \Big( \mu(t, S_{t}) d t + \sigma(t, S_{t}) d B_{t} \Big), \qquad t\in[0,T].
\end{equation*}
Here, the real-valued functions $\mu$ and $\sigma$ defined on $[0, T] \times (0, \infty)$ are assumed to be bounded, $(t, x) \mapsto \mu(t, e^{x})$ and $(t, x) \mapsto \sigma(t, e^{x})$ are Lipschitz, and also satisfy
\begin{equation} \label{Assumption_0}
    \inf \big\{ \sigma(t, s) : (t, s) \in [0, T] \times (0, \infty) \big\} > 0.
\end{equation}

We assume $S_0>0$, hence $S_t>0$ for all $t\in[0,T]$ almost surely. By It\^o's lemma, the log-price process $L_t:=\ln S_t$ satisfies
\begin{equation}    \label{eq:logstock}
    dL_t = \Big( \mu(t,e^{L_t})-\frac{\sigma^2(t,e^{L_t})}{2} \Big)dt + \sigma(t,e^{L_t})\,dB_t.
\end{equation}

We consider two sources of market illiquidity:
\begin{enumerate}[label=(\roman*)]
    \item \textbf{[Transaction costs]} Buying the stock at time $t$ incurs proportional cost $\overline\epsilon(S_t)\in(0,1)$, while selling incurs proportional cost $\underline\epsilon(S_t)\in(0,1)$.
    \item \textbf{[Search frictions]} Trading is possible only when an exogenous opportunity arrives. Let $P=(P_t)_{t\in[0,T]}$ be an $\mathbb{F}$-adapted Cox process with continuous compensator
    \[
        A_t:=\int_0^t\lambda(u,S_u)\,du.
    \]
    In particular, $\widetilde{P}_t:=P_t-A_t$ is an $\mathbb{F}$-martingale. Such a process may be constructed as $P_t=\overline{P}_{A_t}$ from a unit-rate Poisson process $\overline{P}$ independent of $B$. Trading opportunities occur only at jump times of $P$; in particular, rebalancing at time $0$ is not allowed.
\end{enumerate}

We shall impose some regularity conditions on the market parameters.
\begin{assumption}  \label{Assumption_1}
    Let $\overline{\epsilon},\underline{\epsilon} \in C^1((0,\infty); (0,1))$ be the transaction cost functions and let $\lambda:[0,T]\times(0,\infty)\to(0,\infty)$ be the intensity function. The maps $l\mapsto\overline{\epsilon}(e^l)$, $l\mapsto\underline{\epsilon}(e^l)$, and $(t,l)\mapsto\lambda(t,e^l)$ are bounded and Lipschitz continuous. We additionally assume that
    \begin{equation*}
        \epsilon_{M} := \sup_{s \in (0, \infty)} \big\{ \max \{ \underline{\epsilon}(s), \overline{\epsilon}(s) \} \big\} < 1 \quad \text{and} \quad
        \Lambda := \sup_{(t, s) \in [0, T] \times (0, \infty)} \lambda(t, s) < \infty;
    \end{equation*}
\end{assumption}

We note that the transaction cost conditions of Assumption \ref{Assumption_1} are automatically satisfied when $\overline{\epsilon}$ and $\underline{\epsilon}$ are constants in $(0,1)$.

For $x\in\mathbb R$, we write $x^+ := \max\{x,0\}$ and
$x^- := \max\{-x,0\}$. A \emph{trading strategy} is a predictable process $N=(N_t)_{t\in[0,T]}$, where $N_t$ denotes the number of stock shares traded at time $t$: $N_t>0$ corresponds to buying, $N_t<0$ to selling, and $N_t=0$ to no trading. 

Let $H_t$ denote the number of stock shares held at time $t$, and let $W_{t}^{(0)}$ denote the wealth invested in the risk-free asset $S^{(0)}$ at time $t$. Since trading is possible only at jump times of $P$, both $H$ and $W^{(0)}$ are pure-jump processes. Their dynamics are
\begin{align}
    \Delta H_t &= N_t\,\Delta P_t, \label{SDE_H}
    \\
    \Delta W_t^{(0)} &=
    \Big( -[1+\overline\epsilon(S_t)]S_tN_t^+ +[1-\underline\epsilon(S_t)]S_tN_t^- \Big)\Delta P_t. \label{SDE_W0}
\end{align}

We denote the wealth invested in the risky asset $S$ by 
\[
    W_t^{(1)} := S_t H_t, \qquad t\in[0,T].
\]
Since $S$ is continuous and $H$ jumps only when $P$ jumps, integration by parts gives
\begin{equation}    \label{eq:W1dyn}
    dW_t^{(1)} = d(S_tH_t) = H_{t-}\,dS_t + S_t\,dH_t
    = W_{t-}^{(1)}\big(\mu(t,S_t)\,dt+\sigma(t,S_t)\,dB_t\big) + S_tN_t\,dP_t.        
\end{equation}

\bigskip

\section{Objective} \label{sec:objective}

We denote by
\[
    \kappa_0 := 0, \qquad \kappa_n := \inf\{t \ge 0:P_t=n\}, \qquad n \in \mathbb{N},
\]
the $n$-th jump time of the Cox process $P$ with convention $\inf\emptyset=\infty$.

We define the total wealth $W$ and the liquidated total wealth $W^{\mathrm{liq}}$ as follows:
\begin{align}   
    W_t &:= W_t^{(0)} + W_t^{(1)} = W_t^{(0)} + S_t H_t,    \label{eq:W}
    \\
    W_t^{\mathrm{liq}} &:= W_t^{(0)} + W_t^{(1)} - \underline{\epsilon}(S_t)W_t^{(1)} = W_t^{(0)} + [1-\underline\epsilon(S_t)]S_t H_t. \nonumber
\end{align}

In our market model, we do not allow borrowing cash and short-selling of the risky asset. Also, we assume that there is no trading at initial time. Therefore, the initial cash and initial number of holding stock shares satisfy the following assumption.

\begin{assumption}[Initial portfolio]   \label{ass:init}
We assume that $S_{0}, W_{0}^{(0)}, H_{0}$ are deterministic and 
\begin{equation} \label{Initial_condition}
        W_0^{(0)} \ge 0, \qquad H_0 \ge 0, \qquad W_0^{\mathrm{liq}} > 0.
    \end{equation}
\end{assumption}

Since both cash and stock holdings are constrained to be nonnegative, their post-trade values must also be nonnegative. To be more specific, let $n \in \mathbb{N}$ be given. Since $\Delta P_{\kappa_{n}}=1$, we have 
\begin{equation} \label{Post_trade_H_W0}
    H_{\kappa_{n}} = H_{\kappa_{n}-}+N_{\kappa_{n}} \quad \text{and} \quad W_{\kappa_{n}}^{(0)} = W_{\kappa_{n}-}^{(0)} - [1+\overline{\epsilon}(S_{\kappa_{n}})]S_{\kappa_{n}} N_{\kappa_{n}}^{+} + [1-\underline{\epsilon}(S_{\kappa_{n}})]S_{\kappa_{n}} N_{\kappa_{n}}^{-}.
\end{equation}
As we want to make $H_{\kappa_{n}} \ge 0$, it is required that $N_{\kappa_{n}} \ge - H_{\kappa_{n}-}$ from the first equality of \eqref{Post_trade_H_W0}. To make $W_{\kappa_{n}}^{(0)} \ge 0$, there are two cases. If $N_{\kappa_{n}}>0$, then $N_{\kappa_{n}}^{+}=N_{\kappa_{n}}$ and $N_{\kappa_{n}}^{-}=0$. So the second equality of \eqref{Post_trade_H_W0} implies $W_{\kappa_{n}}^{(0)} = W_{\kappa_{n}-}^{(0)} - [1+\overline{\epsilon}(S_{\kappa_{n}})]S_{\kappa_{n}} N_{\kappa_{n}}$ and we require 
\[
    N_{\kappa_{n}} \le \frac{W_{\kappa_{n}-}^{(0)}}{[1+\overline{\epsilon}(S_{\kappa_{n}})]S_{\kappa_{n}}}.
\]
If $N_{\kappa_{n}}\le0$, then
    \[
        N_{\kappa_{n}} \le 0 \le \frac{W_{\kappa_{n}-}^{(0)}}{[1+\overline{\epsilon}(S_{\kappa_{n}})]S_{\kappa_{n}}},
    \]
because $W_{\kappa_{n}-}^{(0)}\ge0$. Hence we require
\[
  N_{u} \in \bigg[ - H_{u-}, \frac{W_{u-}^{(0)}}{[1 + \overline{\epsilon}(S_{u})] S_{u}} \bigg]
\]
for all $u = \kappa_{n}$. For technical convenience, we impose the same state-dependent constraint at every time.

\begin{definition}[Admissible strategy] \label{def:admissible}
    A predictable process $N=(N_t)_{t\in[0,T]}$ is called \emph{admissible} if $N_{0} = 0$ and 
    \begin{equation} \label{Admissible}
        \mathbb{P} \bigg( N_{t} \in \bigg[ - H_{t-}, \frac{W_{t-}^{(0)}}{[1 + \overline{\epsilon}(S_{t})]S_{t}} \bigg] \quad \text{for every } t \in (0, T] \bigg) = 1.
    \end{equation}
    We write $\mathcal{N}(0)$ for the class of admissible strategies on $[0,T]$. More generally, when the problem is restarted at time $t$, the corresponding admissible class on $[t,T]$ will be denoted by $\mathcal{N}(t)$, suppressing the dependence on the current state from the notation.
\end{definition}

\begin{lemma}   \label{lem:admissible}
    Let $N\in\mathcal{N}(0)$. Under Assumption \ref{ass:init},
    \begin{enumerate}[label=(\roman*)]
        \item $W_{\kappa_n}^{(0)} \ge 0$ and $H_{\kappa_n} \ge 0$ on $\{\kappa_n \le T\}$ for each $n \in \mathbb{N}$;
        \item $W_{t} \ge W_t^{\mathrm{liq}} > 0$ for all $t\in[0,T]$ a.s.
    \end{enumerate}
\end{lemma}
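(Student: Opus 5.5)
The plan is to prove (i) by induction on $n$ and then derive (ii) from (i) and the fact that holdings are frozen between jump times.

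\emph{Induction for (i).} Set $\kappa_0=0$. By Assumption~\ref{ass:init}, $W^{(0)}_{\kappa_0}\ge0$ and $H_{\kappa_0}\ge0$. Suppose both inequalities hold at $\kappa_{n-1}$ on $\{\kappa_{n-1}\le T\}$. Since $P$ does not jump on $(\kappa_{n-1},\kappa_n)$, \eqref{SDE_H} and \eqref{SDE_W0} show that $H$ and $W^{(0)}$ are constant there. Hence
\[
H_{\kappa_n-}=H_{\kappa_{n-1}}\ge0,\qquad W^{(0)}_{\kappa_n-}=W^{(0)}_{\kappa_{n-1}}\ge0 .
\]
Apply \eqref{Admissible} at $t=\kappa_n$ and use \eqref{Post_trade_H_W0}. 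First, $H_{\kappa_n}=H_{\kappa_n-}+N_{\kappa_n}\ge0$. For the cash account we split on the sign of $N_{\kappa_n}$:
\begin{itemize}
\item If $N_{\kappa_n}>0$, the upper bound in \eqref{Admissible} gives $W^{(0)}_{\kappa_n}=W^{(0)}_{\kappa_n-}-[1+\be(S_{\kappa_n})]S_{\kappa_n}N_{\kappa_n}\ge0$.
\item If $N_{\kappa_n}\le0$, then $W^{(0)}_{\kappa_n}=W^{(0)}_{\kappa_n-}+[1-\ue(S_{\kappa_n})]S_{\kappa_n}N_{\kappa_n}^-\ge0$, because $\ue<1$ and $S>0$.
\end{itemize}
All statements hold a.s. outside the null set excluded in \eqref{Admissible}.

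\emph{Proof of (ii).} Fix $t\in[0,T]$ and let $n=P_t$, so $\kappa_n\le t<\kappa_{n+1}$. Then $H_t=H_{\kappa_n}\ge0$ and $W^{(0)}_t=W^{(0)}_{\kappa_n}\ge0$. Since $\ue\ge0$ and $S_tH_t\ge0$,
\[
W_t-W^{\mathrm{liq}}_t=\ue(S_t)S_tH_t\ge0 .
\]

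For strict positivity, assume first $t<\kappa_1$. Then the holdings are still $(W_0^{(0)},H_0)$, and $W^{\mathrm{liq}}_t=W^{(0)}_0+[1-\ue(S_t)]S_tH_0$. This is positive: if $H_0=0$ then $W^{(0)}_0=W^{\mathrm{liq}}_0>0$, and if $H_0>0$ the second term is positive.

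The main obstacle is propagating strict positivity across trades, since only nonnegativity of each leg is preserved. The step I would carry out is to show, by the same induction, that $(W^{(0)}_{\kappa_n},H_{\kappa_n})\ne(0,0)$.
\begin{itemize}
\item A purchase $N>0$ produces $H_{\kappa_n}>0$.
\item A sale $N=-m<0$ with $m\le H_{\kappa_n-}$ gives $W^{(0)}_{\kappa_n}\ge[1-\ue(S_{\kappa_n})]S_{\kappa_n}m>0$, using $\ue<1$.
\item If $N=0$, the pre-trade pair is unchanged, and it is nonzero by the induction hypothesis.
\end{itemize}
Once the pair is nonzero, $W^{(0)}_t+[1-\ue(S_t)]S_tH_t>0$ for $t\in[\kappa_n,\kappa_{n+1})$, because $1-\ue>0$ and $S_t>0$.

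Finally, since $P_T<\infty$ a.s., every $t\in[0,T]$ lies in some interval $[\kappa_n,\kappa_{n+1})$ (with $\kappa_{n+1}=\infty$ if needed). Taking the union over $n$ of the corresponding a.s. events gives the claim simultaneously for all $t\in[0,T]$, a.s.
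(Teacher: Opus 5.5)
Your proof is correct and follows the paper's route. For (i) you induct over the jump times of $P$, and for (ii) you show by the same induction that $(W^{(0)},H)$, which is frozen between trades, never reaches $(0,0)$.

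The differences are cosmetic. You rule out a trade landing at $(0,0)$ by a direct case split on the sign of $N_{\kappa_n}$ and read off $W^{\mathrm{liq}}_t>0$ from $1-\underline{\epsilon}>0$. The paper argues by contradiction: $H_t=0$ forces $N_t=-H_{t-}$, hence a zero pre-trade pair; it then uses $W^{\mathrm{liq}}_t\ge(1-\epsilon_M)W_t>0$.

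One small wording fix: in your last sentence you want the intersection, not the union, of the full-probability events. In fact everything is pathwise on the single event in \eqref{Admissible} intersected with $\{P_T<\infty\}$.
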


\begin{proof}
    By Assumption~\ref{ass:init}, the initial cash position and stock holding are nonnegative. Moreover, the interval constraint in Definition~\ref{def:admissible}, together with \eqref{Post_trade_H_W0}, preserves the nonnegativity of both variables at every jump time. Since $W^{(0)}$ and $H$ are constant between consecutive jump times, an induction over the jump times of $P$ proves (i).

    To prove (ii), we first prove that $W_t > 0$ for all $t \in [0, T]$, a.s. By \eqref{Initial_condition},
    \[
        W_0^{\mathrm{liq}} = W_0^{(0)} + [1-\underline{\epsilon}(S_0)]S_0H_0 > 0.
    \]
    Since $W_0^{(0)}\ge0$, $H_0\ge0$, $S_0>0$, and $1-\underline{\epsilon}(S_0)>0$, it follows that $(W_0^{(0)},H_0)\neq(0,0)$. We now show that the portfolio cannot reach $(0,0)$ at a trading time. Let $t\in(0,T]$ be a jump time of $P$, and suppose that $(W_{t-}^{(0)},H_{t-})\neq(0,0)$. Since $\Delta P_t=1$, equations \eqref{SDE_H} and \eqref{SDE_W0} yield
    \begin{align*}
        H_t &= H_{t-}+N_t,
        \\
        W_t^{(0)} &= W_{t-}^{(0)} - [1+\overline{\epsilon}(S_t)]S_tN_t^{+} + [1-\underline{\epsilon}(S_t)]S_tN_t^{-}.
    \end{align*}
    By admissibility, $W_t^{(0)}\ge0$ and $H_t\ge0$. Assume, to obtain a contradiction, that $W_t^{(0)}=H_t=0$. The equality $H_t=0$ implies $N_t=-H_{t-} \le 0$. Therefore, $N_t^{+}=0$, $N_t^{-}=H_{t-}$, and hence
    \[
        0 = W_t^{(0)} = W_{t-}^{(0)} + [1-\underline{\epsilon}(S_t)]S_tH_{t-}.
    \]
    Both terms on the right-hand side are nonnegative, and $[1-\underline{\epsilon}(S_t)]S_t>0$. Consequently, $W_{t-}^{(0)}=0$, $H_{t-}=0$ which contradicts $(W_{t-}^{(0)},H_{t-})\neq(0,0)$. Thus, $(W_t^{(0)},H_t)\neq(0,0)$ after every trading opportunity. Since $(W^{(0)},H)$ is constant between consecutive jump times, induction over the jump times of $P$ gives
    \[
        (W_t^{(0)},H_t)\in[0,\infty)^2\setminus\{(0,0)\}, \qquad \text{for all } t\in[0,T], \text{ a.s.},
    \]
    and we conclude that
    \[
        W_t = W_t^{(0)}+S_tH_t >0, \qquad \text{for all } t\in[0,T], \text{ a.s.}
    \]
    Moreover, since $\underline{\epsilon}(S_t)\le\epsilon_M<1$,
    \begin{equation*}
        W_t^{\mathrm{liq}} = W_t^{(0)} + [1-\underline{\epsilon}(S_t)]S_tH_t \ge (1-\epsilon_M) \big(
        W_t^{(0)}+S_tH_t \big) = (1-\epsilon_M)W_t > 0.
    \end{equation*}
    Clearly, $W_t^{\mathrm{liq}}\le W_t$. Therefore, $W_{t} \ge W_t^{\mathrm{liq}} > 0$ for all $t\in[0,T]$, a.s.
\end{proof}

An investor wishes to maximize expected utility from terminal liquidated wealth:
\begin{equation}    \label{Objective}
    \sup_{N\in\mathcal{N}(0)} \mathbb{E}\Big[ U\big(W_{T}^{\mathrm{liq}}\big) \Big].
\end{equation}
Here $U:(0,\infty)\to \R$ is a utility function, \textit{but none of monotonicity, concavity, or differentiability needs to be assumed}.

For technical reasons, we impose the following assumption on the utility function $U$.

\begin{assumption}[Utility function] \label{U_assumption}
    Let $U:(0,\infty)\to \R$ be the utility function. For $q \in (0,\infty)$, let $\cU_{q}(w) := 1+w^{q}+w^{-q}$ defined on $(0, \infty)$. Then, there exist constants $p > 0$ and $\delta_{U} \in (0, 1]$ such that 
    \begin{align*}
        x \mapsto \frac{U(e^{x})}{\cU_{p}(e^{x})} \text{ is bounded and $\delta_{U}$-H\"older continuous on $\R$.}
    \end{align*}
\end{assumption}

For example, a strictly increasing function that is neither concave nor convex
\begin{equation}    \label{eq:U_counterexample}
    U(w) = w + \frac{1}{2} \sin(w), \qquad w>0,
\end{equation}
satisfies Assumption \ref{U_assumption}. In particular, for differentiable $U$ our framework allows utilities whose asymptotic elasticity
\[
    \limsup_{w\to\infty}\frac{wU'(w)}{U(w)}
\]
is greater than or equal to $1$. The utility in \eqref{eq:U_counterexample} has asymptotic elasticity $3/2$. Thus, we do not impose the asymptotic elasticity restriction used in the concave-utility framework of \citet{Sch1999}.

From now on, we suppose that Assumptions \ref{ass:init} and \ref{U_assumption} hold.

\bigskip

\section{HJB equation}  \label{sec:HJB}

In this section, we derive the dynamic programming equation associated with the optimization problem \eqref{Objective}.

Let $\mathbb{D} := [0,\infty)^{2} \setminus \{ (0,0) \}$. For $(t,l,x,y) \in [0,T] \times \R \times \mathbb{D}$, let $\mathbb{E}_{t,l,x,y}$ denote expectation for the controlled system restarted from 
\[
    (L_{t}, W_{t}^{(0)}, H_{t}) = (l,x,y).
\]
Let $\mathcal{N}(t,l,x,y)$ denote the class of admissible controls on $[t,T]$ for this restarted system. When the dependence on the initial state is clear, we simply write $\mathcal{N}(t)$. We use $W^{\mathrm{liq},N}$ when it is necessary to indicate the dependence of the wealth process on the control $N$.

We define the deterministic Markov value function by
\begin{equation} \label{Value_define}
    V(t,l,x,y) := \sup_{N \in \mathcal{N}(t,l,x,y)} \mathbb{E}_{t,l,x,y} \Big[ U \big( W_{T}^{\mathrm{liq}, N} \big) \Big], \qquad (t,l,x,y) \in [0,T] \times \R \times \mathbb{D}.
\end{equation}
For simplicity, we write $\cU := \cU_{p}$. As in Assumption \ref{U_assumption}, we introduce the normalized value function
\begin{equation*}
    v(t,l,x,y) := \frac{V(t,l,x,y)}{\cU \big( x + y e^{l} \big)}.
\end{equation*}

At a trading opportunity, the investor chooses a trade size obtained from \eqref{Admissible}
\[
    n \in \mathcal{M}(l,x,y) := \bigg[ -y, \frac{x e^{-l}}{1 + \overline{\epsilon}(e^{l})} \bigg].
\]

For a test function $\phi$, define the post-trade intervention operator by
\begin{equation*}
    (\mathcal{J}\phi)(t,l,x,y;n) :=
    \begin{cases}
        \phi\big( t,l, x - [1+\overline\epsilon(e^{l})] e^{l} n, y + n \big), & n \ge 0,
        \\[0.4em]
        \phi\big( t,l, x - [1-\underline\epsilon(e^{l})] e^{l} n, y + n \big), & n < 0.
    \end{cases}
\end{equation*}

Between trading times, the variables $x$ and $y$ remain frozen, while the log-price $L = \ln S$ evolves as in \eqref{eq:logstock}, hence the local diffusion operator acting on the $l$-variable is defined as
\begin{equation*}
    (\mathcal{L}\phi)(t,l,x,y) := \bigg( \mu(t,e^{l}) - \frac{\sigma^{2}(t,e^{l})}{2} \bigg) \frac{\partial \phi}{\partial l}(t,l,x,y) + \frac{\sigma^{2}(t,e^{l})}{2} \frac{\partial^{2}\phi}{\partial l^{2}}(t,l,x,y).
\end{equation*}

Therefore, the dynamic programming principle formally suggests that $V$ satisfies
\begin{equation*}
    \begin{cases}
    0 = \dfrac{\partial V}{\partial t}(t,l,x,y) + (\mathcal{L}V)(t,l,x,y) - \lambda(t,e^{l}) V(t,l,x,y) 
    \\
    \hspace{2.5cm} + \lambda(t,e^{l}) \displaystyle\sup_{n \in \mathcal{M}(l,x,y)} (\mathcal{J}V)(t,l,x,y;n),
    \hfill \qquad \qquad (t,l,x,y) \in [0,T) \times \R \times \mathbb{D},
    \\[0.8em]
    V(T,l,x,y) = U \big( x + [1-\underline{\epsilon}(e^{l})] y e^{l} \big), \hfill \qquad \qquad  (l,x,y) \in \R \times \mathbb{D}.
    \end{cases}
\end{equation*}
In what follows, we normalize $V$ in order to work with a bounded object and to apply the fixed-point/PDE estimates more conveniently. Hence, we formally derive the normalized equation for $\nu$ of 
\begin{equation*}
    \nu(t,l,x,y) := e^{\alpha t} v(t,l,x,y)
\end{equation*}
for a fixed constant $\alpha \ge 0$. For a test function $\phi$, define the normalized intervention operator by
\begin{equation*}
    (\mathcal{K}\phi)(t,l,x,y;n) :=
    \begin{cases}
        \phi\big( t,l, x - [1+\overline\epsilon(e^{l})] e^{l} n, y + n \big)
        \dfrac{\cU \big( x + y e^{l}-\overline{\epsilon}(e^{l}) e^{l}n \big)}{\cU \big( x + y e^{l} \big)}, & n \ge 0,
        \\[1.0em]
        \phi\big( t,l, x - [1-\underline\epsilon(e^{l})] e^{l} n, y + n \big) \dfrac{\cU\big( x + y e^{l}+\underline{\epsilon}(e^{l})e^{l}n \big)}{\cU\big( x + y e^{l} \big)}, & n < 0.
    \end{cases}
\end{equation*}

Set
\begin{align}
    \widetilde{\mu}(t,l,x,y) &:= \mu(t,e^{l}) - \frac{\sigma^{2}(t,e^{l})}{2} + \frac{\sigma^{2}(t,e^{l}) y e^{l} \cU'\big( x + y e^{l} \big)}{\cU \big( x + y e^{l} \big)}, \label{eq:mu_tilde}
    \\
    \gamma(t,l,x,y) &:= \frac{\sigma^{2}(t,e^{l}) \big[ y e^{l} \cU' \big( x + y e^{l} \big) + (y e^{l})^{2} \cU'' \big( x + y e^{l} \big) \big]}{2\cU \big( x + y e^{l} \big)} + \bigg( \mu(t,e^{l}) - \frac{\sigma^{2}(t,e^{l})}{2} \bigg) \frac{ y e^{l} \cU' \big( x + y e^{l} \big)}{\cU \big( x + y e^{l} \big)}.       \label{eq:gamma_define}
\end{align}
To handle regularity of $v$, reparameterize the admissible trade size by setting 
\begin{equation*}
    n = \theta(l,x,y,z) := z \bigg( \frac{x e^{-l}}{1 + \overline{\epsilon}(e^{l})} + y \bigg) - y.
\end{equation*}
We observe that 
\begin{equation*}
    n \in \mathcal{M}(l,x,y) \quad \Longleftrightarrow \quad z \in [0,1].
\end{equation*}
We further define the notation
\[
  K(\phi)(t,l,x,y,z) := (\mathcal{K}\phi)(t,l,x,y;\theta(l,x,y,z))
\]
for a test function $\phi$. Then the transformed function $\nu$ formally solves
\begin{equation} \label{eq:HJB_nu}
    \begin{cases}
    0 = \dfrac{\partial \nu}{\partial t}(t,l,x,y) + \widetilde{\mu}(t,l,x,y) \dfrac{\partial \nu}{\partial l}(t,l,x,y) + \dfrac{\sigma^{2}(t,e^{l})}{2} \dfrac{\partial^{2}\nu}{\partial l^{2}}(t,l,x,y)
    \\[0.8em]
    \qquad + \big( -\alpha - \lambda(t, e^{l}) + \gamma(t,l,x,y) \big) \nu(t,l,x,y)
    \\[0.8em]
    \qquad + \lambda(t, e^{l}) \displaystyle\sup_{z \in [0,1]} K(\nu)(t,l,x,y,z), \hfill (t,l,x,y) \in [0,T) \times \R \times \mathbb{D},
    \\[0.8em]
    \nu(T,l,x,y) = e^{\alpha T} g(l,x,y), \hfill (l,x,y) \in \R \times \mathbb{D},
    \end{cases}
\end{equation}
where 
\[
  g(l,x,y) := \frac{U \big( x+[1-\underline{\epsilon}(e^{l})] y e^{l} \big)}{\cU \big( x + y e^{l} \big)}.
\]

The following lemma guarantees the regularity of a solution for \eqref{eq:HJB_nu}. In what follows, we use the following notations for each set $A$ 
\[
    \| f \|_{\infty; A} := \sup_{(u, \mathbf{y}) \in A} |f(u, \mathbf{y})|, \qquad [f]_{\frac{\delta}{2}, \delta; A} := \sup_{\substack{(u_{1}, \mathbf{y}_{1}), (u_{2}, \mathbf{y}_{2}) \in A \\ (u_{1}, \mathbf{y}_{1}) \neq (u_{2}, \mathbf{y}_{2})}} \frac{|f(u_{1}, \mathbf{y}_{1}) - f(u_{2}, \mathbf{y}_{2})|}{|u_{1} - u_{2}|^{\frac{\delta}{2}} + |\mathbf{y}_{1} - \mathbf{y}_{2}|^{\delta}},
\]
\[
    \| f \|_{C^{1+\frac{\delta}{2},2+\delta}(A)} := \bigg[ \frac{\partial f}{\partial t} \bigg]_{\frac{\delta}{2}, \delta; A} + \bigg[ \frac{\partial^{2} f}{\partial z^{2}} \bigg]_{\frac{\delta}{2}, \delta; A} + \| f \|_{\infty; A} + \bigg\| \frac{\partial f}{\partial t} \bigg\|_{\infty; A} + \bigg\| \frac{\partial f}{\partial z} \bigg\|_{\infty; A} + \bigg\| \frac{\partial^{2} f}{\partial z^{2}} \bigg\|_{\infty; A},
\]
\begin{align*}
    C^{1+\frac{\delta}{2},2+\delta}(A) &:= \big\{ A \ni (t,z) \mapsto f : \| f \|_{C^{1+\frac{\delta}{2},2+\delta}(A)} < \infty \big\},
    \\
    C_{\mathrm{loc}}^{n+\frac{\delta}{2},2n+\delta}(A) &:= \big\{ f : f|_{D} \in C^{n+\frac{\delta}{2},2n+\delta}(D) \text{ for any compact } D \subset A \big\}, \quad n \in \{ 0,1 \},
    \\
    C_{bc}(A) &:= \{ f : A \to \R : f \text{ is bounded and continuous} \}.
\end{align*}
For simplicity, we will suppress the dependency of set $A$ whenever no confusion arises.

\begin{lemma} \label{lem:value_regularity}
    Let $\delta \in (0,\delta_{U}] \cap (0, 1)$ be given. Suppose that $\alpha$ is chosen sufficiently large. Then there exists a unique solution $\nu$ to \eqref{eq:HJB_nu} such that
    \begin{equation}    \label{nu_regularity_1}
        \nu(\cdot,\cdot,x,y) \in C_{\mathrm{loc}}^{1+\frac{\delta}{2},2+\delta}([0,T) \times \R) \quad \text{ for each } (x,y) \in \mathbb{D}, \qquad \nu \in C_{bc}([0,T] \times \R \times \mathbb{D}),
    \end{equation}
    and 
    \begin{equation} \label{nu_regularity_2}
        \sup_{(x, y) \in \D} \| \nu(\cdot, \cdot, x, y) \|_{C^{1+\frac{\delta}{2},2+\delta} ([0,T-\xi] \times [-R,R])} < \infty \quad \text{ for each } (\xi, R) \in (0, T) \times (0, \infty).
    \end{equation}
\end{lemma}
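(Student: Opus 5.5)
We prove the lemma by a contraction argument in a space of bounded continuous functions, followed by a bootstrap to classical regularity via Schauder theory. First we record the bounds on the coefficients. Since $\cU(w)=1+w^p+w^{-p}$, the ratios $w\cU'(w)/\cU(w)$ and $w^2\cU''(w)/\cU(w)$ are bounded on $(0,\infty)$. Because $0\le ye^l\le x+ye^l$, the drift $\widetilde\mu$ in \eqref{eq:mu_tilde} and the zeroth-order coefficient $\gamma$ in \eqref{eq:gamma_define} are bounded uniformly on $[0,T]\times\R\times\D$. We also need their Lipschitz regularity in $l$, uniformly in $(x,y)$. To see this, write $ye^l\cU'(x+ye^l)/\cU(x+ye^l)=\rho\cdot h(w)$ with $w=x+ye^l$, $\rho=ye^l/w\in[0,1]$ and $h(w)=w\cU'/\cU$. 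Both $\partial_l\rho=\rho(1-\rho)$ and $\partial_l h(w)=\rho\, w h'(w)$ are bounded.

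For the intervention term, observe that post-trade wealth satisfies
\[
(1-\eM)(x+ye^l)\le w'\le x+ye^l .
\]
Hence the ratio $\cU(w')/\cU(x+ye^l)$ lies in $[c,1+C]$, with $c,C$ depending only on $p$ and $\eM$. Consequently $\|K(\phi)\|_\infty\le C_K\|\phi\|_\infty$ uniformly in $(t,l,x,y,z)$. Finally, $g$ is bounded by Assumption~\ref{U_assumption} and the same ratio bound.

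\textbf{Step 1: fixed point.} For $\phi\in C_{bc}([0,T]\times\R\times\D)$ and fixed $(x,y)$, let $\Gamma\phi$ be the Feynman--Kac solution of the linear equation
\[
\partial_t\nu+\widetilde\mu\,\partial_l\nu+\tfrac{\sigma^2}{2}\partial_{ll}\nu+(\gamma-\alpha-\lambda)\nu+\lambda\sup_{z\in[0,1]}K(\phi)=0,
\qquad \nu(T)=e^{\alpha T}g .
\]
Concretely,
\[
\Gamma\phi(t,l,x,y)=\mathbb{E}\Big[e^{\int_t^T(\gamma-\alpha-\lambda)}e^{\alpha T}g(\widetilde L_T)+\int_t^T e^{\int_t^s(\gamma-\alpha-\lambda)}\lambda\sup_z K(\phi)(s,\widetilde L_s)\,ds\Big],
\]
where $\widetilde L$ is the diffusion with drift $\widetilde\mu$ and volatility $\sigma$; it is well posed because both coefficients are Lipschitz in $l$. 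Since $|\sup_z K\phi_1-\sup_z K\phi_2|\le C_K\|\phi_1-\phi_2\|_\infty$, we obtain
\[
\|\Gamma\phi_1-\Gamma\phi_2\|_\infty\le \frac{\Lambda C_K}{\alpha-\|\gamma\|_\infty}\,\|\phi_1-\phi_2\|_\infty .
\]
Choosing $\alpha$ large makes $\Gamma$ a contraction.

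It remains to check that $\Gamma$ maps $C_{bc}$ into itself. This requires continuity of $\sup_z K(\phi)$ in $(t,l,x,y)$, which follows from continuity of $\theta$ and of the post-trade map, compactness of $[0,1]$, and joint continuity of $\phi$. We also need continuity of the Feynman--Kac expression, which follows from continuous dependence of the SDE on the initial point and parameters, together with bounded convergence.

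One point needs care: continuity at $t=T$ when $g$ is only continuous, not uniformly continuous, in $(x,y)$ near the boundary of $\D$. We handle it by working with $g$ in the variables $(l,w,\rho)$, in which $g$ is H\"older by Assumption~\ref{U_assumption}. The unique fixed point is the candidate $\nu$. Uniqueness within bounded continuous solutions follows because any such classical solution is represented by the same Feynman--Kac formula, and so is a fixed point of $\Gamma$.

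\textbf{Step 2: H\"older regularity of the source.} We show that $f:=\lambda\sup_zK(\nu)$ is locally $(\delta/2,\delta)$-H\"older in $(t,l)$, uniformly in $(x,y)$. This is the main obstacle. The difficulty is that $K$ shifts the portfolio argument $(x,y)$ by an amount depending on $l$, so H\"older continuity of $\nu$ in $l$ alone does not suffice. We need a seminorm that is intrinsic to the wealth coordinates, namely H\"older continuity in $(t,l,\log w,\rho)$ uniformly over the state.

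We would first prove that $\Gamma$ preserves the closed set of $\phi$ with such an intrinsic seminorm bounded by $M$. The estimate comes from coupling two diffusions started at nearby points and using the Lipschitz coefficients together with the $\delta$-H\"older terminal datum. The contribution of the source then carries the factor $\Lambda C/(\alpha-\|\gamma\|_\infty)$, which is small for large $\alpha$, so the ball is invariant with $M$ independent of $(x,y)$. The contraction of Step 1 then produces a fixed point in this closed set. Under this seminorm, $K(\nu)$ is H\"older in $(t,l)$ uniformly in $z$ and $(x,y)$, and taking the supremum over $z$ preserves the H\"older bound.

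\textbf{Step 3: Schauder bootstrap.} For fixed $(x,y)$, $\nu(\cdot,\cdot,x,y)$ is a bounded solution in the Feynman--Kac sense of a linear, uniformly parabolic equation in $(t,l)$. Its coefficients $\widetilde\mu$, $\sigma^2$ and $\gamma$ are Lipschitz, hence $C^{\delta/2,\delta}$, and its source $f$ is $C^{\delta/2,\delta}$, with all bounds uniform in $(x,y)$. We invoke interior Schauder estimates on $[0,T-\xi]\times[-R,R]$, after first solving the corresponding Cauchy--Dirichlet problem on slightly larger cylinders and identifying the solution with $\nu$ by uniqueness. This yields $C^{1+\delta/2,2+\delta}$ bounds depending only on the sup-norms and H\"older norms of the data, which gives \eqref{nu_regularity_2} and the first part of \eqref{nu_regularity_1}.
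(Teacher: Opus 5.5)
Your proposal is correct and follows essentially the paper's route. It uses a Feynman--Kac contraction on $C_{bc}$ for large $\alpha$, then propagates an intrinsic H\"older seminorm through the fixed-point map; the paper's weighted metric $\rho$ is locally equivalent to your distance in $(t,l,\log w,\pi)$, cf.\ Lemma~\ref{lem:composition}. It finishes with interior Schauder estimates uniform in $(x,y)$, where your Cauchy--Dirichlet identification on cylinders replaces the paper's global Cauchy problem plus Feynman--Kac uniqueness. Two details need tightening: the invariant set must also carry a sup-norm bound, since the seminorm estimate for $\Gamma\phi$ involves $\|\phi\|_\infty$ with $\alpha$-dependent constants; and the H\"older bound for $\sup_{z}K(\phi)$ must handle the change of branch at $z=b(\mathbf{x})$ (the paper's Lemma~\ref{lem:patchedK}), which in your coordinates follows from a segment-crossing argument because both branches coincide at $z=b$.
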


\begin{proof}
In this proof, we shall use the notation of the vector form $\mathbf{x} := (l,x,y)$. Given $t \in [0, T]$, define the process $L_{u}^{(t, \mathbf{x})}$ for $u \ge t$ by 
\begin{align} \label{L_process}
  & d L_{u}^{(t, \mathbf{x})} = \widetilde{\mu}(u,L_{u}^{(t, \mathbf{x})},x,y) d u + \sigma(u, e^{L_{u}^{(t, \mathbf{x})}}) d B_{u} \quad \text{ with } \; L_{t}^{(t, \mathbf{x})} = l
\end{align}
where the strong solution for \eqref{L_process} is uniquely determined due to \lemref{lem:coeff}. Considering \eqref{eq:HJB_nu} and Feynman-Kac formula, we want to prove $\psi(\nu) = \nu$, where the map $\psi$ on $C_{bc}([0,T] \times \R \times \mathbb{D})$ equipped with $\| f \|_{\infty} := \sup_{(s, \mathbf{y}) \in [0,T] \times \R \times \mathbb{D}} |f(s, \mathbf{y})|$ is defined by 
\begin{align}
  & \psi(w)(t, \mathbf{x}) = \mathbb{E} \bigg[ \exp \bigg( \int_{t}^{T} \Big[ - \alpha - \lambda \Big( a, e^{L_{a}^{(t, \mathbf{x})}} \Big) + \gamma(a, L_{a}^{(t, \mathbf{x})},x,y) \Big] d a \bigg) e^{\alpha T} g \Big( L_{T}^{(t, \mathbf{x})}, x, y \Big) \nonumber \\
  & \qquad + \int_{t}^{T} \exp \bigg( \int_{t}^{u} \Big[ - \alpha - \lambda \Big( a, e^{L_{a}^{(t, \mathbf{x})}} \Big) + \gamma(a, L_{a}^{(t, \mathbf{x})},x,y) \Big] d a \bigg) \nonumber \\
  & \qquad \qquad \cdot \lambda \Big( u, e^{L_{u}^{(t, \mathbf{x})}} \Big) \sup_{z \in [0,1]} K(w) \big( u, L_{u}^{(t, \mathbf{x})}, x, y, z \big) d u \bigg]. \label{contraction_map}
\end{align}
First recall from Notation \ref{not:master} the barycentric coordinate $b(l,x,y) = \frac{y e^{l} [1 + \overline{\epsilon}(e^{l})]}{y e^{l} [1 + \overline{\epsilon}(e^{l})] + x}$ and observe that 
\begin{align*}
    & z < b(L,x,y) \; \Longleftrightarrow \; \theta(L,x,y,z) \in [-y, 0); \\
    & z \ge b(L,x,y) \; \Longleftrightarrow \; \theta(L,x,y,z) \in \bigg[ 0,\frac{x e^{-L}}{1 + \overline{\epsilon}(e^{L})} \bigg],
\end{align*}
so that, by (iii) of \lemref{lem:post_trade}, 
\begin{equation} \label{U_term_bounded}
  \begin{split}
    & \bigg| \frac{\cU \big( x + y e^{L} - \overline{\epsilon}(e^{L}) e^{L} \theta(L,x,y,z) \big)}{\cU \big( x + y e^{L} \big)} \mathbf{1}_{ \{ z \ge b(L,x,y) \} } \bigg| \le c_{\cU}; \\
    & \bigg| \frac{\cU \big( x + y e^{L} + \underline{\epsilon}(e^{L}) e^{L} \theta(L,x,y,z) \big)}{\cU \big( x + y e^{L} \big)} \mathbf{1}_{ \{ z < b(L,x,y) \} } \bigg| \le c_{\cU},
  \end{split}
\end{equation}
where $c_{\cU}$ is the constant from \lemref{lem:post_trade}. We restrict $\alpha > c_{0} := 2 \Lambda c_{\cU} + \Lambda + \| \gamma \|_{\infty}$.

The proof of the lemma proceeds in three steps. \bigskip

\textbf{Step 1. Find fixed point of $\psi$ on $[0, T] \times \R \times \mathbb{D}$}

(i) First, we show that $\psi$ maps $C_{bc}([0,T] \times \R \times \mathbb{D})$ into itself.
\begin{align}
  \big|\psi(w)(t,\mathbf{x})\big| & \le e^{\alpha T} \mathbb{E} \Big[ \Big| g \Big( L_{T}^{(t, \mathbf{x})}, x, y \Big) \Big| \Big] \nonumber \\
  & \quad + \| w \|_{\infty} \, \mathbb{E} \bigg[ \int_{t}^{T} \exp \bigg( \int_{t}^{u} \Big[ - \alpha - \lambda \Big( a, e^{L_{a}^{(t, \mathbf{x})}} \Big) + \gamma(a, L_{a}^{(t, \mathbf{x})}, x, y) \Big] d a \bigg) \nonumber \\
  & \qquad \cdot \lambda \Big( u, e^{L_{u}^{(t, \mathbf{x})}} \Big) \sup_{z \in [0,1]} \bigg\{ \frac{\cU \big( x + y e^{L} - \overline{\epsilon}(e^{L}) e^{L} \theta(L,x,y,z) \big)}{ \cU \big( x + y e^{L} \big)} \mathbf{1}_{ \{ z \ge b(L,x,y) \} } \nonumber \\
  & \qquad \quad + \frac{\cU \big( x + y e^{L} + \underline{\epsilon}(e^{L}) e^{L} \theta(L,x,y,z) \big)}{\cU \big( x + y e^{L} \big)} \mathbf{1}_{ \{ z < b(L,x,y) \} } \bigg|_{L = L_{u}^{(t, \mathbf{x})}} d u \bigg] \\
  & \le e^{\alpha T} C_{*} + \sup_{u \in [0, T]} \mathbb{E} \bigg[ \int_{u}^{T} e^{(- \alpha + \Lambda + \| \gamma \|_{\infty}) (z - u)} \Lambda c_{\cU} d z \bigg] \| w \|_{\infty} \label{phi_upper_bound} \\
  & \le e^{\alpha T} C_{*} + \frac{1}{2} \| w \|_{\infty}, \nonumber
\end{align}
where \eqref{phi_upper_bound} holds due to \eqref{U_term_bounded} and (iv) of \lemref{lem:coeff}. Taking the supremum over $(t,\mathbf{x})$ yields
\begin{equation}    \label{phi_bounded}
    \|\psi(w)\|_\infty \le e^{\alpha T} C_{*} + \frac{1}{2} \| w \|_{\infty}.
\end{equation}
Moreover, we can check first that the map $(l,x,y) \mapsto g(l,x,y)$ is bounded and continuous on $\R \times \mathbb{D}$. Second, the dynamics \eqref{L_process} with assumption $t_{1} < t_{2} \le u$ (the case of $t_{2} < t_{1} \le u$ is shown similarly) implies that there exists a generic constant $C>0$ such that 
\begin{align*}
    \mathbb{E} \bigg[ \Big( L_{u}^{(t_{1}, \mathbf{x}_{1})} - L_{u}^{(t_{2}, \mathbf{x}_{2})} \Big)^{2} \bigg] & \le C (l_{1} - l_{2})^{2} + C \mathbb{E} \bigg[ \bigg( \int_{t_{2}}^{u} \Big( \widetilde{\mu} \big( s,L_{s}^{(t_{1}, \mathbf{x}_{1})},x_{1},y_{1} \big) - \widetilde{\mu} \big( s,L_{s}^{(t_{2}, \mathbf{x}_{2})},x_{2},y_{2} \big) \Big) d s \bigg)^{2} \bigg] \nonumber \\
    & \quad + C \mathbb{E} \bigg[ \bigg( \int_{t_{2}}^{u} \Big( \sigma \Big( s, e^{L_{s}^{(t_{1}, \mathbf{x}_{1})}} \Big) - \sigma \Big( s, e^{L_{s}^{(t_{2}, \mathbf{x}_{2})}} \Big) \Big) d B_{s} \bigg)^{2} \bigg] \nonumber \\
    & \quad + C \mathbb{E} \bigg[ \bigg( \int_{t_{1}}^{t_{2}} \widetilde{\mu} \big( s,L_{s}^{(t_{1}, \mathbf{x}_{1})},x_{1},y_{1} \big) d s \bigg)^{2} \bigg] + C \mathbb{E} \bigg[ \bigg( \int_{t_{1}}^{t_{2}} \sigma \Big( s,e^{L_{s}^{(t_{1}, \mathbf{x}_{1})}} \Big) d B_{s} \bigg)^{2} \bigg] \\
    & \le C \big[ \rho \big( (t_{1}, \mathbf{x}_{1}), (t_{2}, \mathbf{x}_{2}) \big) \big]^{2} + C \int_{t_{2}}^{u} \mathbb{E} \bigg[ \Big( L_{s}^{(t_{1}, \mathbf{x}_{1})} - L_{s}^{(t_{2}, \mathbf{x}_{2})} \Big)^{2} \bigg] d s
\end{align*}
where Cauchy-Schwarz inequality is used in the first inequality, and It\^o's isometry, the fact that $(t, x) \mapsto \mu(t, e^{x})$ and $(t, x) \mapsto \sigma(t, e^{x})$ are bounded Lipschitz, (iii) of \lemref{lem:coeff}, and (i) of \lemref{lem:random} are used in the last inequality.

Now the Gronwall inequality gives 
\begin{equation} \label{L_expectation_Holder}
  \mathbb{E} \big| L_{u}^{(t_{1}, \mathbf{x}_{1})} - L_{u}^{(t_{2}, \mathbf{x}_{2})} \big| \le C \rho \big( (t_{1}, \mathbf{x}_{1}), (t_{2}, \mathbf{x}_{2}) \big)
\end{equation}
for some generic constant $C > 0$. Furthermore, the continuity of $(t,l,x,y,z) \mapsto K(w)(t,l,x,y,z)$ implies the continuity of $(t,l,x,y) \mapsto \sup_{z \in [0,1]} K(w)(t,l,x,y,z)$ by the maximum theorem. For any sequence $(t_n,\mathbf{x}_n)\to(t,\mathbf{x})$, the stability estimate \eqref{L_expectation_Holder}, the continuity of $g$ and $\sup_{z\in[0,1]}K(w)$, and the bounds in \eqref{U_term_bounded} allow us to apply the standard subsequence argument and dominated convergence in \eqref{contraction_map}. Hence $\psi(w)(t_n,\mathbf{x}_n)\to\psi(w)(t,\mathbf{x})$, proving that $\psi(w)$ is continuous.

\medskip

(ii) Second, we show that $\psi$ is a contraction.

By similar way as \eqref{phi_bounded}, for any $w_{1}, w_{2} \in C_{bc}([0, T] \times \R \times \mathbb{D})$ 
\begin{align}
  & \| \psi(w_{1}) - \psi(w_{2}) \|_{\infty} \le \frac{1}{2} \| w_{1} - w_{2} \|_{\infty}. \label{phi_contraction}
\end{align}

Hence, the Banach fixed point theorem says that there exists a unique $\hat{w} \in C_{bc}([0, T] \times \R \times \mathbb{D})$ such that $\psi(\hat{w}) = \hat{w}$.

\bigskip

\textbf{Step 2. Extend regularity of the fixed point to locally H\"older spaces}

Let $\tilde K(f)(t,l,x,y) := \sup_{z \in [0,1]} K(f)(t,l,x,y,z)$. Next, we want to show that 
\begin{equation*}
  \big\| \tilde K(\hat w) \big\|_{\infty} + \big[ \tilde K(\hat w) \big]_{\sn} \le 2 c_{\cU} c_{\alpha} + C (2 C(\alpha) + 4 c_{\alpha}) < \infty
\end{equation*}
where given $t_{i} \in [0,T]$ and $\mathbf{x}_{i} = (l_{i}, x_{i}, y_{i}) \in \R \times \mathbb{D}$ for each $i \in \{ 1, 2 \}$ a new seminorm $[\cdot]_{\sn}$ is defined as 
\begin{align*}
    [f]_{\sn} & := \sup \Bigg\{ \bigg| \frac{f(t_{1}, \mathbf{x}_{1}) - f(t_{2}, \mathbf{x}_{2})}{\big[ \rho \big( (t_{1}, \mathbf{x}_{1}), (t_{2}, \mathbf{x}_{2}) \big) \big]^{\delta}} \bigg| : 0 < \rho \big( (t_{1}, \mathbf{x}_{1}), (t_{2}, \mathbf{x}_{2}) \big) \le 1 \Bigg\}, \qquad \text{ and } \\
    \rho \big( (t_{1}, \mathbf{x}_{1}), (t_{2}, \mathbf{x}_{2}) \big) & := \sqrt{|t_{1} - t_{2}|} + |l_{1} - l_{2}| + \frac{|x_{1} - x_{2}| + \max \big\{ e^{l_{1}}, e^{l_{2}} \big\} |y_{1} - y_{2}|}{\min \big\{ x_{1} + y_{1} e^{l_{1}}, x_{2} + y_{2} e^{l_{2}} \big\}}.
\end{align*}
We construct a sequence of functions $\{ w_{n}(t, \mathbf{x}) \}_{n \in \mathbb{N}} \subseteq C_{bc}([0, T] \times \R \times \mathbb{D})$ starting at $w_{0}(t, \mathbf{x}) = 0$ satisfying 
\begin{align*}
  & w_{n + 1}(t, \mathbf{x}) = \mathbb{E} \bigg[ \exp \bigg( \int_{t}^{T} \Big[ - \alpha - \lambda \Big( a, e^{L_{a}^{(t, \mathbf{x})}} \Big) + \gamma(a, L_{a}^{(t, \mathbf{x})},x,y) \Big] d a \bigg) e^{\alpha T} g \Big( L_{T}^{(t, \mathbf{x})}, x, y \Big) \nonumber \\
  & \qquad + \int_{t}^{T} \exp \bigg( \int_{t}^{u} \Big[ - \alpha - \lambda \Big( a, e^{L_{a}^{(t, \mathbf{x})}} \Big) + \gamma(a, L_{a}^{(t, \mathbf{x})},x,y) \Big] d a \bigg) \nonumber \\
  & \qquad \qquad \cdot \lambda \Big( u, e^{L_{u}^{(t, \mathbf{x})}} \Big) \sup_{z \in [0,1]} K(w_{n}) \big( u, L_{u}^{(t, \mathbf{x})}, x, y, z \big) d u \bigg].
\end{align*}
Suppose $\alpha \ge c_{0}$ so that $\lim_{n \to \infty} w_{n}(t, \mathbf{x}) = \hat{w}(t, \mathbf{x})$ uniformly on $(t, \mathbf{x}) \in [0, T] \times \R \times \mathbb{D}$ (implying $\lim_{n \to \infty} \| w_{n} \|_{\infty} = \| \hat{w} \|_{\infty}$).

Therefore, 
\begin{equation*}
    w_{1}(t, \mathbf{x}) = \mathbb{E} \bigg[ \exp \bigg( \int_{t}^{T} \Big[ - \alpha - \lambda \Big( a, e^{L_{a}^{(t, \mathbf{x})}} \Big) + \gamma(a, L_{a}^{(t, \mathbf{x})},x,y) \Big] d a \bigg) e^{\alpha T} g \Big( L_{T}^{(t, \mathbf{x})}, x, y \Big) \bigg]
\end{equation*}
with $[w_{1}]_{\sn} < \infty$ due to \lemref{lem:pathfun} and (iv) of \lemref{lem:coeff}.

We easily find a positive constant $c_{\alpha}$ which may depend on $\alpha$ such that 
\begin{equation*}
  \| w_{1} \|_{\infty} =: c_{\alpha} < \infty
\end{equation*}
and thus \eqref{phi_contraction} implies 
\begin{equation} \label{w_n_boundedness}
  \begin{split}
  & \| w_{n} \|_{\infty} \le \| w_{1} \|_{\infty} + \frac{1}{2} \| w_{n - 1} \|_{\infty} \; \Longrightarrow \; \| w_{n} \|_{\infty} \le \| w_{1} \|_{\infty} \sum_{k = 0}^{n - 1} 2^{- k} \; \Longrightarrow \; \sup_{n \in \mathbb{N}} \| w_{n} \|_{\infty} \le 2 c_{\alpha} \\
  & \text{and thus } \| \hat{w} \|_{\infty} \le 2 c_{\alpha}.
  \end{split}
\end{equation}
For simplicity, we use Notation~\ref{not:master}. Fix two points $(t_i,\mathbf{x}_i)\in[0,T)\times\mathbb{R}\times\mathbb{D}$, $i\in\{1,2\}$, with
\[
    \rho\big((t_1,\mathbf{x}_1),(t_2,\mathbf{x}_2)\big)\le1,
\]
and assume without loss of generality that $t_1\le t_2$; the opposite case is symmetric. One can observe that 
\begin{align*}
  | w_{n + 1}(t_{1}, &\mathbf{x}_{1}) - w_{n + 1}(t_{2}, \mathbf{x}_{2}) | \\
  & \le | w_{1}(t_{1}, \mathbf{x}_{1}) - w_{1}(t_{2}, \mathbf{x}_{2}) | + \bigg| \mathbb{E} \bigg[ \int_{t_1}^{T}\!A_1(u)\,K^{(n)}_1(u)\,du \bigg] -\mathbb{E} \bigg[ \int_{t_2}^{T}\!A_2(u)\,K^{(n)}_2(u)\,du \bigg] \bigg| \\
  & \le | w_{1}(t_{1}, \mathbf{x}_{1}) - w_{1}(t_{2}, \mathbf{x}_{2})| + \underbrace{\mathbb{E} \bigg[ \int_{t_2}^{T}\!A_1(u)\, \big|K^{(n)}_1(u)-K^{(n)}_2(u)\big|\,du \bigg]}_{=:(\mathrm{I})} \nonumber \\ 
  & \qquad + \underbrace{\mathbb{E} \bigg[ \int_{t_2}^{T}\!\big|A_1(u)-A_2(u)\big|\,
      \big|K^{(n)}_2(u)\big|\,du \bigg]}_{=:(\mathrm{II})}
 +\underbrace{\mathbb{E} \bigg[ \int_{t_1}^{t_2}\!A_1(u)\,
      \big|K^{(n)}_1(u)\big|\,du \bigg]}_{=:(\mathrm{III})}.
\end{align*}
We can evaluate 
\begin{align}
  (\mathrm{I}) & \le \Lambda \int_{t_{2}}^{T} e^{(-\alpha+\Lambda+\|\gamma\|_\infty)(u-t_1)} \, \mathbb{E} \Big[ \Big| K^{(n)}_{1}(u) - K^{(n)}_{2}(u) \Big| \Big] \, du \le \frac{\Lambda \, C}{\alpha - \Lambda - \|\gamma\|_{\infty}} \big([w_{n}]_{sn} + 2 \|w_n\|_{\infty} \big) \, \rho^{\delta}, \label{eq:termI} \\
  (\mathrm{II}) & \le c_{\cU} \, \|w_n\|_{\infty} \, \mathbb{E} \bigg[ \int_{t_{2}}^{T} \big| A_{1}(u) - A_{2}(u) \big| \,du \bigg] \nonumber \\
  & \le c_{\cU} \, \|w_n\|_{\infty} \int_{t_{2}}^{T} \Big( e^{(\Lambda + \|\gamma\|_{\infty}) T} \, C \, \mathbb{E} \big[ \big| L^{1}_{u} -L^{2}_{u} \big| \big] + \Lambda \, \mathbb{E} \big[ \big| \mathcal{E}_{1}(u)-\mathcal{E}_{2}(u) \big| \big] \Big) du \le C_{\alpha} T \rho^{\delta} \label{eq:termII}, \\
  (\mathrm{III}) & \le \Lambda e^{\|\gamma\|_{\infty} T} c_{\cU} \|w_{n}\|_{\infty} |t_{1} - t_{2}| \le \Lambda e^{\|\gamma\|_{\infty} T} c_{\cU} \|w_n\|_{\infty} \rho^{\delta}, \label{eq:termIII}
\end{align}
where \eqref{eq:termI} is derived from \eqref{eq:patchedK} together with (ii) of \lemref{lem:random}, \eqref{eq:termII} is from \lemref{lem:pathfun} and \eqref{L_expectation_Holder}, and \eqref{eq:termIII} is from the mean value theorem. All in all, we deduce 
\begin{equation} \label{w_n_in_Holder}
  [w_{n+1}]_{\sn} \le C(\alpha) + \frac{\tilde{C}}{\alpha} [w_{n}]_{\sn}
\end{equation}
where constant $\tilde C > 0$ is independent of $\alpha$.

We take $\alpha \ge \max \{ c_{0}, 2 \tilde{C} \}$. Then \eqref{w_n_in_Holder} implies that for any $n \in \mathbb{N}$, 
\begin{equation*}
  [w_{n}]_{\sn} \le C(\alpha) \sum_{k=0}^{n-1} 2^{-k} = 2 (1 - 2^{-n}) C(\alpha).
\end{equation*}
Hence, $[\hat{w}]_{\sn} \le 2 C(\alpha)$. Due to \eqref{w_n_boundedness} and \lemref{lem:patchedK} applied to $f = \hat{w}$, we have 
\begin{equation} \label{tilde_K_sn_finite}
    \begin{split}
    & \big\| \tilde{K}(\hat{w}) \big\|_{\infty} \le c_{\cU} \| \hat{w} \|_{\infty} \le 2 c_{\alpha} c_{\cU}, \\
    & \big[ \tilde{K}(\hat{w}) \big]_{\sn} \le C ( [\hat{w}]_{\sn} + 2 \Vert \hat{w} \Vert_{\infty} ) \le C (2 C(\alpha) + 4 c_{\alpha}) < \infty.
    \end{split}
\end{equation}

\bigskip

\textbf{Step 3. Use general PDE theory}

Fix $(x,y) \in \D$ and consider the linear PDE 
\begin{equation} \label{Linear_PDE}
    \begin{cases}
    0 = \dfrac{\partial \nu}{\partial t}(t,l,x,y) + \widetilde{\mu}(t,l,x,y) \dfrac{\partial \nu}{\partial l}(t,l,x,y) + \dfrac{\sigma^{2}(t,e^{l})}{2} \dfrac{\partial^{2}\nu}{\partial l^{2}}(t,l,x,y) + f_{x,y}(t,l)
    \\[0.8em]
    \qquad + \big( -\alpha - \lambda(t, e^{l}) + \gamma(t,l,x,y) \big) \nu(t,l,x,y), \hfill (t,l,x,y) \in [0,T) \times \R \times \mathbb{D},
    \\[0.8em]
    \nu(T,l,x,y) = e^{\alpha T} g(l,x,y), \hfill (l,x,y) \in \R \times \mathbb{D},
    \end{cases}
\end{equation}
where 
\[
    f_{x,y}(t,l) := \lambda(t, e^{l}) \sup_{z \in [0,1]} K(\hat{w})(t,l,x,y,z).
\]
In this step, we use general PDE theory in \cite{AvnerFriedman1964}, where H\"older norm is defined as 
\[
  \sup_{(u_{1}, \mathbf{y}_{1}) \neq (u_{2}, \mathbf{y}_{2})} \frac{|f(u_{1}, \mathbf{y}_{1}) - f(u_{2}, \mathbf{y}_{2})|}{(|u_{1} - u_{2}| + |\mathbf{y}_{1} - \mathbf{y}_{2}|^{2})^{\frac{\delta}{2}}}
\]
in contrast to $[f]_{\frac{\delta}{2}, \delta}$. Nevertheless, this difference does not make any issues because the Cauchy-Schwarz inequality and subadditivity of concave function imply 
\[
  (|u_{1} - u_{2}| + |\mathbf{y}_{1} - \mathbf{y}_{2}|^{2})^{\frac{\delta}{2}} \le 2^{\frac{\delta}{2}} (|u_{1} - u_{2}|^{\frac{\delta}{2}} + |\mathbf{y}_{1} - \mathbf{y}_{2}|^{\delta}),
\]
and increasing function of $x \mapsto x^{\frac{\delta}{2}}$ implies 
\[
   |u_{1} - u_{2}|^{\frac{\delta}{2}} + |\mathbf{y}_{1} - \mathbf{y}_{2}|^{\delta} \le 2 (|u_{1} - u_{2}| + |\mathbf{y}_{1} - \mathbf{y}_{2}|^{2})^{\frac{\delta}{2}}.
\]
Due to $\big[ \tilde K(\hat w) \big]_{\sn}<\infty$, \eqref{U_term_bounded}, Assumption \ref{Assumption_1}, and (ii) of \lemref{lem:seminorm}, the map
$(t,l) \mapsto f_{x,y}(t,l)$ belongs to $C^{\frac\delta2,\delta}([0,T]\times\mathbb{R}) \cap C_{bc}([0,T]\times\mathbb{R})$ whose norm is bounded uniformly in $(x,y)$, and the same holds for the map $(t,l) \mapsto g(l,x,y)$, and $(t,l) \mapsto \tilde\mu(t,l,x,y)$, $(t,l) \mapsto \frac{1}{2} \sigma^{2}(t,e^{l})$ and $(t,l) \mapsto -\alpha-\lambda(t,e^{l})+\gamma(t,l,x,y)$.

Fix $(x,y) \in \mathbb{D}$ and perform the time reversal $s=T-t$. Define
\[
    a(s,l):=\frac{1}{2}\sigma^2(T-s,e^l), \qquad b_{x,y}(s,l):=\widetilde{\mu}(T-s,l,x,y),
\]
\[
    c_{x,y}(s,l):=-\alpha-\lambda(T-s,e^l)+\gamma(T-s,l,x,y), \qquad \check{f}_{x,y}(s,l):=-f_{x,y}(T-s,l).
\]
The time-reversed equation is the Cauchy problem for $u_{x, y}$
\[
    a(s,l) \dfrac{\partial^{2} u_{x, y}}{\partial l^{2}}(s,l) + b_{x,y}(s,l) \dfrac{\partial u_{x, y}}{\partial l}(s,l) + c_{x,y}(s,l) u_{x, y}(s,l) - \dfrac{\partial u_{x, y}}{\partial s}(s,l)=\check{f}_{x,y}(s,l),
\]
with initial condition $u_{x, y}(0,l)=e^{\alpha T}g(l,x,y)$.

The condition \eqref{Assumption_0} gives uniform parabolicity, while Lemma~\ref{lem:coeff}, \eqref{tilde_K_sn_finite}, and Lemma \ref{lem:seminorm} imply that the coefficients and $\check{f}_{x,y}$ are bounded and locally H\"older continuous in $l$, uniformly in $s$ and $(x,y)\in\mathbb{D}$. The initial datum is bounded and continuous. Hence, the hypotheses of \cite[Chapter~1, Theorem~12]{AvnerFriedman1964} are satisfied; in particular, the growth conditions in that theorem follow from boundedness. The representation given there, together with the Gaussian estimate for the fundamental solution, yields a bounded continuous solution $u_{x, y}$ of the time-reversed Cauchy problem.

Furthermore, Chapter~3, Theorem~10 of \cite{AvnerFriedman1964}, applied with $p=0$ on arbitrary bounded cylinders compactly contained in $(0,T] \times \mathbb{R}$, implies that $u_{x,y}$, $\frac{\partial u_{x, y}}{\partial s}$, $\frac{\partial u_{x, y}}{\partial l}$, $\frac{\partial^{2} u_{x, y}}{\partial l^{2}}$ are locally H\"older continuous. Transforming back to the original time variable and defining $\widetilde{w}(t,l,x,y):=u_{x,y}(T-t,l)$, we obtain
\[
    \widetilde{w}(\cdot,\cdot,x,y) \in C_{bc}([0,T]\times\mathbb{R}) \cap C_{\mathrm{loc}}^{1+\frac{\delta}{2},2+\delta}([0,T)\times\mathbb{R}), \qquad (x,y)\in\mathbb{D},
\]
and $\widetilde{w}$ solves \eqref{Linear_PDE}.

Next, we claim that any $\bar w(\cdot,\cdot,x,y)\in C^{1,2}([0,T)\times\mathbb{R})\cap C_{bc}([0,T]\times\mathbb{R})$ solving \eqref{Linear_PDE} coincides with the right-hand side of \eqref{contraction_map} evaluated at $\hat w$; that is, $\bar{w}(\cdot,\cdot,x,y)=\psi(\hat{w})(\cdot,\cdot,x,y)$ for each $(x, y) \in \D$. Indeed, fix $(t,\bx)$, let
\begin{equation*}
  \tau_{R}:=\inf\{s\ge t:|L^{(t,\bx)}_{s}-l|\ge R\}
\end{equation*}
and $\xi \in (0, T-t)$. Applying It\^o's lemma to $s\mapsto \mathcal{E}(s)\,\bar w(s,L^{(t,\bx)}_{s},x,y)$ on $[t,\tau_{R} \wedge (T-\xi)]$, where 
\[
  \mathcal{E}(s) := \exp \bigg( \int_{t}^{s} \Big[ -\alpha - \lambda \Big( a, e^{L_a^{(t,\mathbf{x})}} \Big) + \gamma(a, L_a^{(t,\mathbf{x})}, x, y) \Big]\, da \bigg),
\]
and then take expectations. Then the stochastic integral is a true martingale because on the stopped interval the process remains in the compact set $[t,T-\xi]\times[l-R,l+R]$, where $\frac{\partial \bar{w}}{\partial l}$ is bounded, and $\mathcal{E}$ is bounded by (i) of \lemref{lem:coeff}. By \eqref{Linear_PDE}, the time integral term is $-\mathcal{E}(s)f_{x,y}(s,L_s^{(t,\mathbf{x})})\,ds$. Moving its integral to the other side and taking expectations gives the following identity.
\begin{align*}
  \bar w(t,\bx) = \mathbb{E} \Bigg[ \mathcal{E} \big(\tau_{R} \wedge (T-\xi)\big)\,
\bar{w} \Big( \tau_{R} \wedge (T-\xi), L^{(t,\bx)}_{\tau_{R} \wedge (T-\xi)},x,y \Big) + \int_{t}^{\tau_{R} \wedge (T-\xi)} \mathcal{E}(s) f_{x,y}(s,L_s^{(t,\mathbf{x})}) \,ds
\Bigg].
\end{align*}
Since $\sup_{s \in [t,T]}|L^{(t,\bx)}_{s}-l|<\infty$ a.s. due to (i) of \lemref{lem:random}, $\tau_{R} \wedge (T-\xi) \uparrow T-\xi$ as $R \to \infty$, and $\bar w$, $f_{x,y}$, $g$, and $\mathcal{E}$ are bounded, so the dominated convergence can be applied; letting then $\xi \downarrow 0$ and using the continuity of $\bar{w}$ up to $t=T$ together with the terminal condition in \eqref{Linear_PDE}, we obtain $\bar w(t,\bx)=\psi(\hat w)(t,\bx)$, which proves the
claim.

In particular, $\tilde{w}(\cdot,\cdot,x,y)=\psi(\hat{w})(\cdot,\cdot,x,y)=\hat{w}(\cdot,\cdot,x,y)$ for each $(x, y) \in \D$; hence $\tilde{w}=\hat{w}$ on $[0,T] \times \mathbb{R} \times \D$. Since we know $\hat{w} \in C_{bc}([0,T]\times\mathbb{R}\times \D)$, $\nu:=\hat{w}$ is the unique classical solution of \eqref{eq:HJB_nu}. Finally, let $u$ be any other bounded classical solution of \eqref{eq:HJB_nu}. Applying the same Feynman-Kac argument to $u$ shows that $u=\psi(u)$. Since $\psi$ has a unique fixed point in $C_{bc}([0,T]\times\mathbb{R}\times\mathbb{D})$, it follows that $u=\hat{w}$. This proves \eqref{nu_regularity_1}.

It remains to prove \eqref{nu_regularity_2}. Work in the reversed time variable and write
\[
    u_{x,y}(s,l):=\nu(T-s,l,x,y), \qquad \check{f}_{x,y}(s,l):=-f_{x,y}(T-s,l).
\]
For $\xi\in(0,T)$ and $R>0$, define the nested cylinders
\[
    Q_{\xi,R} := [\xi,T] \times [-R,R], \qquad
    \widetilde{Q}_{\xi,R} := \Big[ \frac{\xi}{2},T \Big] \times [-R-1,R+1].
\]
After translating the time variable by $\xi/2$, Chapter~3, Theorem~5 of \cite{AvnerFriedman1964} applies on the larger cylinder $\widetilde{Q}_{\xi,R}$; the norm appearing in that theorem is the weighted interior norm defined in Chapter~3, Section~2. Since $Q_{\xi,R}$ has a strictly positive parabolic distance from the initial and lateral boundary of $\widetilde{Q}_{\xi,R}$, the weighted estimate implies the ordinary interior estimate
\[
    \| u_{x,y}\|_{C^{1+\frac{\delta}{2},2+\delta}(Q_{\xi,R})}
    \le C_{\xi,R} \Big( \|u_{x,y}\|_{\infty} + \|\check{f}_{x,y}\|_{C^{\frac{\delta}{2},\delta}(\widetilde{Q}_{\xi,R})} \Big),
\]
where $C_{\xi,R}$ is independent of $(x,y)\in\mathbb{D}$. Returning to the original time variable gives
\[
    \| \nu(\cdot,\cdot,x,y) \|_{C^{1+\frac{\delta}{2},2+\delta}([0,T-\xi] \times [-R,R])} \le C_{\xi,R} \Big( \|\nu\|_{\infty} + \| f_{x,y} \|_{C^{\frac{\delta}{2},\delta}([0,T-\xi/2] \times [-R-1,R+1])} \Big).
\]
By \eqref{w_n_boundedness}, \eqref{tilde_K_sn_finite}, Lemma~\ref{lem:seminorm}, and the boundedness and H\"older continuity of $\lambda$, the right-hand side is bounded uniformly in $(x,y)\in\mathbb{D}$. Taking the supremum over $(x,y)$ proves \eqref{nu_regularity_2}.
\end{proof}

On the original filtered probability space, fix an admissible strategy up to time $t$, and let $(W_t^{(0)},H_t)$ be the resulting portfolio state. Let $\mathcal{N}_t$ denote the class of admissible continuation controls on $(t,T]$, all initialized from this same state, and define the conditional value by
\begin{equation}\label{eq:conditional_value}
    \mathcal{V}_t := \operatorname*{ess\,sup}_{N \in \mathcal{N}_t}
    \mathbb{E} \Big[ U\big(W_T^{\mathrm{liq},N}\big)\,\Big|\,\mathcal{F}_t \Big], \qquad t\in[0,T]
\end{equation}
where $W_T^{\mathrm{liq},N}$ denotes the terminal liquidated wealth under the admissible strategy $N$. The following theorem identifies the value function and verifies the optimal feedback control.

\begin{theorem}[Verification] \label{Verification}
    Let $\nu$ be the unique bounded classical solution of \eqref{eq:HJB_nu}, and define
    \[
        \overline{V}(t,l,x,y) := e^{-\alpha t} \nu(t,l,x,y) \cU(x+ye^l).
    \]
    Then
    \begin{equation*}
        V(t,l,x,y) = \overline{V}(t,l,x,y), \qquad (t,l,x,y)\in[0,T]\times\mathbb{R}\times\mathbb{D}
    \end{equation*}
    where $V$ is defined in \eqref{Value_define}. Moreover, the conditional value process defined in \eqref{eq:conditional_value} satisfies for every fixed $t\in[0,T]$
    \begin{equation}\label{eq:conditional_verification}
        \mathcal{V}_t = V \big(t,L_t,W_t^{(0)},H_t \big) = \overline{V} \big( t,L_t,W_t^{(0)},H_t \big) \quad \text{a.s.}
    \end{equation}
    The measurable feedback selector constructed in the proof is optimal.
\end{theorem}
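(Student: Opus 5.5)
We prove the verification theorem by a standard martingale argument for the process $\overline{V}(s,L_s,W^{(0)}_s,H_s)$ along an arbitrary admissible strategy, using the regularity from \lemref{lem:value_regularity}. First we undo the normalization. Since $\nu$ solves \eqref{eq:HJB_nu} classically in $(t,l)$ with $(x,y)$ frozen, the function $\overline{V}=e^{-\alpha t}\nu\,\cU(x+ye^l)$ solves the unnormalized HJB equation classically on $[0,T)\times\R$ for each $(x,y)\in\D$. The coefficients $\widetilde\mu,\gamma$ and the operator $\mathcal{K}$ were designed precisely so that this holds. The terminal condition is $\overline V(T,\cdot)=U(x+[1-\underline\epsilon(e^l)]ye^l)$. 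Moreover $\sup_{z}K(\nu)$ corresponds to $\cU\cdot\sup_{n\in\mathcal{M}}(\mathcal{J}\overline V)$.

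\textbf{Upper bound.} Fix $(t,l,x,y)$ and $N\in\mathcal{N}(t)$. Between the jump times $\kappa_n$, the pair $(W^{(0)},H)$ is frozen. We apply It\^o's formula for jump processes to $\overline V(s,L_s,W^{(0)}_s,H_s)$ on $[t,\tau_R\wedge(T-\xi)]$, where $\tau_R$ is the exit time of $L$ from $[l-R,l+R]$. The continuous part is $(\partial_t+\mathcal{L})\overline V$. The jump part is $\int(\mathcal{J}\overline V(\cdot;N_s)-\overline V)\,dP_s$, which we compensate by $\lambda\,ds$; predictability of $N$ together with continuity of $\overline V$ in all variables (\eqref{nu_regularity_1}) makes the integrand predictable.

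Admissibility gives $N_s\in\mathcal{M}$, so $\mathcal{J}\overline V(\cdot;N_s)\le\sup_n\mathcal{J}\overline V$. The HJB equation then makes the drift nonpositive, and we obtain $\overline V(t,\cdot)\ge\mathbb{E}[\overline V(\tau_R\wedge(T-\xi),\ldots)]$.

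\textbf{Main obstacle: passing to the limit.} Letting $R\to\infty$ and $\xi\downarrow0$ is the step I expect to be hardest. We need uniform integrability of $\overline V$ along the path, i.e.\ of $\cU(W_s)$ with $|\overline V|\le e^{0}\|\nu\|_\infty\cU(W_s)$. I would bound $W_s^{\pm p}$ using Lemma~\ref{lem:admissible}. Trades with costs satisfy $W_{s}\le W_{s-}$ and $W_s\ge(1-\epsilon_M)W_{s-}$, so
\[
(1-\epsilon_M)^{P_s-P_t}\,\frac{S_s}{S_t}\wedge 1\cdot W_t\;\lesssim\;W_s\;\lesssim\;W_t\max\Big(1,\sup_{u\le s}\frac{S_u}{S_t}\Big).
\]
More carefully, $W$ is dominated by a stock/cash mixture, and $(1-\epsilon_M)^{-pP}$ has exponential moments because $\lambda\le\Lambda$. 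The resulting domination is $\sup_s\cU(W_s)\le C\,(1+\sup_u S_u^{2p}S_t^{-2p}+\dots)\,(1-\epsilon_M)^{-p(P_T-P_t)}$, which is integrable since $\mu,\sigma$ are bounded (via (i) of \lemref{lem:random} and moment bounds for a Cox process with bounded intensity). The local martingale terms are true martingales on the stopped interval, since $\partial_l\overline V$ is bounded on compacts uniformly in $(x,y)$ by \eqref{nu_regularity_2}, and the compensated jump integrand is bounded by the same domination. With dominated convergence and continuity of $\overline V$ at $t=T$ we get $\overline V(t,l,x,y)\ge\mathbb{E}[U(W^{\mathrm{liq},N}_T)]$, hence $\overline V\ge V$.

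\textbf{Optimal feedback and equality.} The map $(t,l,x,y,z)\mapsto K(\nu)$ is continuous on a product with compact $z$-factor $[0,1]$. A measurable maximum theorem (Kuratowski--Ryll-Nardzewski, or Aumann selection for a continuous Carath\'eodory function) then yields a Borel selector $z^*(t,l,x,y)$ attaining the supremum. We set $N^*_s:=\theta(L_s,W^{(0)}_{s-},H_{s-},z^*(s,L_s,W^{(0)}_{s-},H_{s-}))$ and $N^*_t=0$; this is predictable, since $L$ is continuous and the left limits are predictable, and it is admissible because $z^*\in[0,1]$. The state is well defined by recursion over the jump times. For this strategy the drift vanishes identically, and the same localization argument gives equality, so $V=\overline V$ and $N^*$ is optimal.

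\textbf{Conditional statement.} For \eqref{eq:conditional_verification} we repeat the argument conditionally on $\mathcal{F}_t$. The same It\^o computation from $t$, with $\mathcal{F}_t$-measurable initial state, gives $\mathbb{E}[U(W_T^{\mathrm{liq},N})\mid\mathcal{F}_t]\le\overline V(t,L_t,W^{(0)}_t,H_t)$ for every $N\in\mathcal{N}_t$. This holds because the bounds above are pathwise given the state at $t$, and the Markov/strong-solution structure of \eqref{eq:logstock} applies together with the Cox construction $P=\overline P_{A}$. The feedback $N^*$ started at time $t$ from the current state attains equality, so the ess sup equals $\overline V(t,L_t,W_t^{(0)},H_t)=V(t,\ldots)$ a.s.
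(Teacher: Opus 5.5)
Your architecture matches the paper's: a stopped It\^o formula on $[t,\tau_R\wedge(T-\xi)]$, the sign of the drift from the HJB equation, the limits $R\to\infty$ then $\xi\downarrow0$ via uniform integrability, a Borel maximizing selector for the feedback, and a conditional repetition. However, the step you single out as the main obstacle contains a genuine error.

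\textbf{The domination of $\sup_u\cU(W_u)$ is false.} It rests on a pathwise comparison of the controlled wealth with the running maximum/minimum of the price, and no such comparison holds, because a predictable strategy can harvest price oscillations. Start fully invested and use the rule ``at an opportunity, sell everything if $S\ge cS_t$; buy with all cash if $S\le S_t$'', with $c>(1+\epsilon_M)/(1-\epsilon_M)$. Each round trip multiplies the share holding by at least about $c(1-\epsilon_M)/(1+\epsilon_M)$. Consider the positive-probability event that $S$ oscillates $k$ times between $S_t$ and $cS_t$ with opportunities near the turning points. On it, $W_s/W_t$ is at least of order $c\,\big(c\tfrac{1-\epsilon_M}{1+\epsilon_M}\big)^k$, while $\sup_u S_u/S_t\approx c$. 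The mirror-image strategy breaks your lower bound in the same way. Wealth is dominated by a stock/cash mixture only for buy-and-hold. Lemma~\ref{lem:random}(i) controls the price, not the controlled wealth, so it cannot supply the integrability you need.

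\textbf{The same gap affects your martingale claim.} The derivative
$\partial_l\overline V=e^{-\alpha s}\cU(x+ye^l)\big(\partial_l\nu+\nu\,ye^l\cU'(x+ye^l)/\cU(x+ye^l)\big)$
is not bounded on $[t,T-\xi]\times[-R,R]$ uniformly in $(x,y)\in\D$, because of the factor $\cU(x+ye^l)$. From \eqref{nu_regularity_2} you only get a bound of the form $C_{\xi,R}\,\cU(W_{s-})$ for the $dB$-integrand.

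\textbf{What closes both gaps.} The paper proves the needed estimate in Lemma~\ref{lem:uniform-integrability}. Start from
$dW_s=\pi_{s-}W_{s-}(\mu\,ds+\sigma\,dB_s)-(\text{trading costs})\,dP_s$
with $\pi_{s-}\in[0,1]$ (no borrowing, no short-selling). Then $W_s/W_t$ is an exponential with three controlled pieces:
\begin{itemize}
\item its drift is bounded;
\item its Brownian part $M=\int\pi\sigma\,dB$ has $\langle M\rangle_T\le\|\sigma\|_\infty^2T$ uniformly over strategies;
\item its jump factor lies in $[(1-\epsilon_M)^{P_T-P_t},1]$.
\end{itemize}
Combine the exponential maximal inequality for $M$, the Cox bound $\mathbb{E}[(1-\epsilon_M)^{-cP_T}]\le\exp(\Lambda T[(1-\epsilon_M)^{-c}-1])$, and Cauchy--Schwarz. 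This gives $\sup_N\mathbb{E}[\sup_u\cU^2(W_u)]<\infty$, which makes both stochastic integrals square-integrable martingales and the stopped family uniformly integrable.

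\textbf{Two smaller points.} First, in the limit $\xi\downarrow0$ you need $\Delta P_T=0$ a.s.\ (continuous compensator), so that the pre-terminal state coincides with $(W^{(0)}_T,H_T)$. Second, for the conditional statement the bound is not pathwise but a conditional moment bound. Integrability of $\cU^2(W_t)$ comes from the same estimate applied from the original initial state: the paper concatenates with the strategy used up to $t$ and passes to the limit using the $L^1$-contraction of conditional expectation.
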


\begin{proof}
For $t=T$, both the deterministic and conditional assertions follow directly from the terminal condition. Hence, fix $t\in[0,T)$.

Since $W^{(0)}$ and $H$ are constant between consecutive jump times of $P$, we consider 
\begin{equation} \label{Ito_target}
   e^{- \alpha s} \nu \big( s, L_{s}, W_{s}^{(0)}, H_{s} \big) \cU \big( W_{s}^{(0)} + H_{s} e^{L_{s}} \big)
\end{equation}
on each inter-jump time interval first. At a jump time $u$, the jump part is
\begin{align}
    & e^{-\alpha u} \nu \big( u,L_u,W_u^{(0)},H_u \big) \cU \big( W_u^{(0)}+H_u e^{L_u} \big) - e^{-\alpha u} \nu \big( u,L_u,W_{u-}^{(0)},H_{u-} \big) \cU\big( W_{u-}^{(0)}+H_{u-}e^{L_u} \big) \nonumber
    \\
    &= e^{-\alpha u} \cU \big( W_{u-}^{(0)}+H_{u-}e^{L_u} \big) \Big[ K(\nu) \big( u,L_u,W_{u-}^{(0)},H_{u-},Z_u \big) - \nu \big( u,L_u,W_{u-}^{(0)},H_{u-} \big) \Big]. \label{Jump_term}
\end{align}
Therefore we get the cumulative jump part of \eqref{Ito_target} on $s \in \big[ t, \tau_{R}^{L} \wedge (T-\xi) \big]$ for given $\xi \in (0,T-t)$, where $\tau_{R}^{L}$ is defined as
\[
    \tau_R^{L} := \inf\{s \ge t : |L_{s}| \ge R\},
\]
by summing \eqref{Jump_term} as follows: 
\begin{equation} \label{Cumulative_jump_part}
    \int_{t}^{\tau_{R}^{L} \wedge (T-\xi)} e^{-\alpha s} \cU \big( W_{s-}^{(0)}+H_{s-}e^{L_s} \big) \Big[ K(\nu) \big( s,L_s,W_{s-}^{(0)},H_{s-},Z_s \big) - \nu \big( s,L_s,W_{s-}^{(0)},H_{s-} \big) \Big]\,dP_s.
\end{equation}
Using 
\[
    \tilde{P}_s := P_s-\int_0^s \lambda \big( u,e^{L_u} \big)\,du, \qquad dP_s = d\tilde{P}_s+\lambda \big( s,e^{L_s} \big)\,ds,
\]
where $\tilde{P}$ is a martingale, \eqref{Cumulative_jump_part} is divided into the compensator part
\[
    \int_{t}^{\tau_{R}^{L} \wedge (T-\xi)} e^{-\alpha s} \cU \big( W_{s-}^{(0)}+H_{s-}e^{L_s} \big) \lambda \big( s,e^{L_s} \big) \Big[ K(\nu) \big( s,L_s,W_{s-}^{(0)},H_{s-},Z_s \big) - \nu \big( s,L_s,W_{s-}^{(0)},H_{s-} \big) \Big]\,ds.
\]
and local martingale part 
\[
    \int_{t}^{\tau_{R}^{L} \wedge (T-\xi)} e^{-\alpha s} \cU \big( W_{s-}^{(0)}+H_{s-}e^{L_s} \big) \Big[ K(\nu) \big( s,L_s,W_{s-}^{(0)},H_{s-},Z_s \big) - \nu \big( s,L_s,W_{s-}^{(0)},H_{s-} \big) \Big]\,d \tilde{P}_{s}.
\]
Hence, we apply It\^o's lemma for semimartingale (see, e.g., Proposition 8.19 of \cite{RamaCont2004}) in the variables $(s,L_{s})$ on $s \in \big[ t, \tau_{R}^{L} \wedge (T-\xi) \big]$ to \eqref{Ito_target} so that 
\begin{align} \label{value_Ito}
  & e^{- \alpha s} \nu \big( s, L_{s}, W_{s}^{(0)}, H_{s} \big) \cU \big( W_{s}^{(0)} + H_{s} e^{L_{s}} \big) \big|_{s = \tau_{R}^{L} \wedge (T-\xi)} - e^{- \alpha t} \nu \big( t, L_{t}, W_{t}^{(0)}, H_{t} \big) \cU \big( W_{t}^{(0)} + H_{t} e^{L_{t}} \big) \\
  & \quad = \int_{t}^{\tau_{R}^{L} \wedge (T-\xi)} e^{- \alpha s} \cU \big( x + y e^{l} \big) \bigg[ \frac{\partial \nu}{\partial s}(s,l,x,y) + \widetilde{\mu}(s,l,x,y) \frac{\partial \nu}{\partial l}(s,l,x,y) + \frac{\sigma^{2} \big( s, e^{l} \big)}{2} \frac{\partial^{2} \nu}{\partial l^{2}}(s,l,x,y) \nonumber \\
  & \qquad + \Big( - \alpha - \lambda \big( s,e^{l} \big) + \gamma(s,l,x,y) \Big) \nu(s,l,x,y) + \lambda \big( s,e^{l} \big) K(\nu) (s,l,x,y,Z_{s}) \bigg] \bigg|_{(l,x,y) = (L_{s-}, W_{s-}^{(0)}, H_{s-})} d s \nonumber \\
  & \quad \; + \int_{t}^{\tau_{R}^{L} \wedge (T-\xi)} e^{- \alpha s} \cU \big( x + y e^{l} \big) \sigma \big( s, e^{l} \big) \bigg( \frac{\partial \nu}{\partial l}(s,l,x,y) \nonumber \\
  & \qquad \quad + \frac{\nu(s,l,x,y) y e^{l} \cU' \big( x + y e^{l} \big)}{\cU \big( x + y e^{l} \big)} \bigg) \bigg|_{(l,x,y) = (L_{s-}, W_{s-}^{(0)}, H_{s-})} d B_{s} \nonumber \\
  & \quad \; + \int_{t}^{\tau_{R}^{L} \wedge (T-\xi)} e^{-\alpha s} \cU \big( x + y e^{l} \big) \Big[ K(\nu) (s,l,x,y,Z_s) - \nu(s,l,x,y) \Big] \bigg|_{(l,x,y) = (L_{s-}, W_{s-}^{(0)}, H_{s-})} \,d \tilde{P}_{s} \nonumber
\end{align}
where 
\begin{equation*}
    Z_{s} := \frac{N_{s}+H_{s-}}{\frac{W_{s-}^{(0)} e^{- L_{s}}}{1 + \overline{\epsilon}(e^{L_{s}})} + H_{s-}}.
\end{equation*}
On the full-probability event in Definition \ref{def:admissible},
\[
    N_s\in\mathcal{M} \big( L_s,W_{s-}^{(0)},H_{s-} \big) \quad \text{and} \quad Z_s\in[0,1]
\]
for every $s\in(0,T]$, so that all integrals in \eqref{value_Ito} are well defined.

Next, we check that the integrals with respect to $d B_{s}$ and $d \tilde{P}_{s}$ are martingales. For the integral with respect to $d B_{s}$, using Notation~\ref{not:master} together with \eqref{eq:Fbounds}, we have 
\[
  \bigg| \frac{H_{s-} e^{L_{s}}}{W_{s-}} \bigg| = \pi_{s-} \in [0,1] \qquad \text{and} \qquad \bigg| \frac{W_{s-} \cU'(W_{s-})}{\cU(W_{s-})} \bigg| = | F'(\ln W_{s-}) | \le \| F' \|_{\infty} < \infty,
\]
so the integrand of $d B_{s}$ is bounded by 
\[
  \cU(W_{s-}) \| \sigma \|_{\infty} \bigg( \sup_{(x, y) \in \D} \| \nu(\cdot, \cdot, x, y) \|_{C^{1+\frac{\delta}{2},2+\delta} ([0, T-\xi] \times [-R,R])} + \| \nu \|_{\infty} \| F' \|_{\infty} \bigg).
\]
Hence, the corresponding stochastic integral is a square-integrable martingale due to \eqref{nu_regularity_1}, \eqref{nu_regularity_2}, and \eqref{Square_integrable}. For the integral with respect to $d \tilde{P}_{s}$, we know $|K(\nu) - \nu| \le (c_{\cU}+1) \|\nu\|_{\infty}$ so the integrand of $d \tilde{P}_{s}$ is bounded by $(c_{\cU}+1) \|\nu\|_{\infty} \cU(W_{s-})$, hence the integral is a square-integrable martingale due to \eqref{nu_regularity_1} and \eqref{Square_integrable}. Taking expectations under $\mathbb{E}_{t,l,x,y}$ in \eqref{value_Ito}, and using \eqref{eq:HJB_nu}, the expression in square brackets in the $ds$-integral of \eqref{value_Ito} equals
\[
    \lambda \big( s,e^{L_s} \big) \Big[ K(\nu) \big( s,L_s,W_{s-}^{(0)},H_{s-},Z_s \big) - \sup_{z\in[0,1]} K(\nu) \big( s,L_s,W_{s-}^{(0)},H_{s-},z \big) \Big] \le 0.
\]
Therefore, we have 
\begin{equation*}
  \mathbb{E}_{t,l,x,y} \Big[ e^{- \alpha s} \nu \big( s, L_{s}, W_{s}^{(0)}, H_{s} \big) \cU \big( W_{s}^{(0)} + H_{s} e^{L_{s}} \big) |_{s = \tau_{R}^{L} \wedge (T-\xi)} \Big] \le e^{- \alpha t} \nu(t, l, x, y) \cU \big( x + y e^{l} \big).
\end{equation*}
In the following, we claim that
\begin{align} 
    \mathbb{E}_{t,l,x,y} & \bigg[ e^{- \alpha T} \nu \big( T, L_{T}, W_{T}^{(0)}, H_{T} \big) \cU \big( W_{T}^{(0)} + H_{T} e^{L_{T}} \big) \bigg] \label{Double_limits_Expectation}
    \\
    & = \lim_{\xi \downarrow 0} \lim_{R \to \infty} \mathbb{E}_{t,l,x,y} \bigg[ e^{- \alpha s} \nu \big( s, L_{s}, W_{s}^{(0)}, H_{s} \big) \cU \big( W_{s}^{(0)} + H_{s} e^{L_{s}} \big) \Big|_{s = \tau_{R}^{L} \wedge (T-\xi)} \bigg].  \nonumber
\end{align}
For notational convenience, define
\[
    \Theta_s := e^{-\alpha s} \nu \big( s,L_s,W_s^{(0)},H_s \big) \cU \big( W_s^{(0)}+H_s e^{L_s} \big), \qquad \tau_{R,\xi} := \tau_R^L \wedge (T-\xi).
\]
By \eqref{nu_regularity_1} and Lemma \ref{lem:uniform-integrability},
\begin{align*}
    \sup_{(R,\xi) \in (0, \infty) \times (0,T-t)} \mathbb{E}_{t,l,x,y} \Big[ \big| \Theta_{\tau_{R,\xi}} \big|^{2} \Big]
    &\le \|\nu\|_\infty^2 \, \mathbb{E}_{t,l,x,y} \bigg[ \sup_{u\in[t,T]} \cU^2(W_u) \bigg] < \infty.
\end{align*}
Hence the family $\{\Theta_{\tau_{R,\xi}} : (R,\xi) \in (0, \infty) \times (0,T-t)\}$ is uniformly integrable.

Fix $\xi\in(0,T-t)$. Since $L$ has continuous paths on the compact interval $[t,T-\xi]$, we have $\tau_R^L>T-\xi$ for all sufficiently large $R$, almost surely. Consequently, $\Theta_{\tau_{R,\xi}} \to \Theta_{T-\xi}$ as $R\to\infty$ a.s. Uniform integrability and Vitali's convergence theorem therefore imply
\[
    \lim_{R\to\infty} \mathbb{E}_{t,l,x,y} [\Theta_{\tau_{R,\xi}}] = \mathbb{E}_{t,l,x,y} [\Theta_{T-\xi}].
\]

Next, since the compensator of $P$ is continuous,
\[
    \mathbb{E}[\Delta P_T] = \mathbb{E} \bigg[ \int_{\{T\}} \lambda(s,e^{L_s})\,ds \bigg] = 0.
\]
As $\Delta P_T\ge0$, it follows that $\Delta P_T=0$, hence $W_{T-}^{(0)}=W_T^{(0)}$ and $H_{T-}=H_T$ a.s. Using the continuity of $L$ and $\nu$, we obtain $\Theta_{T-\xi} \to \Theta_T$ as $\xi \to 0$ a.s. The family $\{\Theta_{T-\xi} : \xi \in (0,T-t)\}$ is uniformly integrable by the same $L^2$ estimate above. Another application of Vitali's convergence theorem gives
\[
    \lim_{\xi \downarrow 0} \mathbb{E}_{t,l,x,y} [\Theta_{T-\xi}] = \mathbb{E}_{t,l,x,y} [\Theta_T].
\]
Combining the two limits proves \eqref{Double_limits_Expectation}.

Finally, combine \eqref{Double_limits_Expectation} with the terminal condition of \eqref{eq:HJB_nu} to derive 
\begin{equation*}
    \mathbb{E}_{t,l,x,y} \big[ U \big(W_T^{\mathrm{liq},N}\big) \big] \le e^{- \alpha t} \nu(t, l, x, y) \cU \big( x + y e^{l} \big).
\end{equation*}
Fix $(t,l,x,y) \in [0,T] \times \R \times \D$. To make the above inequality an equality, we can choose a Borel measurable $z^{*}(s,l,x,y)$ such that 
\begin{equation*}
  z^{*}(s,l,x,y) \in \argmax_{z \in [0,1]} K(\nu)(s, l, x, y, z)
\end{equation*}
because $(s,l,x,y,z) \mapsto K(\nu)(s,l,x,y,z)$ is continuous on $[0,T] \times \R \times \D \times [0,1]$ (e.g. \lemref{lem:measurable}).

For $s \in [t,T]$, let $\hat{W}_{s}^{(0),t,l,x,y}, \hat{H}_{s}^{\,t,l,x,y}$ be the solutions of SDEs in \eqref{SDE_W0} and \eqref{SDE_H} starting at 
\[
    \hat{W}_t^{(0),t,l,x,y}:=x,\qquad \hat{H}_t^{t,l,x,y}:=y
\]
respectively, equipped with the following trading strategy 
\[
  \hat{N}_{s}^{t,l,x,y}:=
    \begin{cases}
    0, & s = t,\\ \theta \big( L_{s}, \hat{W}_{s-}^{(0),t,l,x,y}, \hat{H}_{s-}^{\,t,l,x,y}, \hat{Z}_{s}^{\,t,l,x,y} \big), & s \in (t,T]
    \end{cases}
\]
recursively at the jump times of $P$ (which is possible because $P$ has finitely many jumps on $[0,T]$ almost surely). Here,  
\[
  \hat{Z}_{s}^{t,l,x,y}: = z^* \big( s, L_{s}, \hat{W}_{s-}^{(0),t,l,x,y}, \hat{H}_{s-}^{t,l,x,y} \big).
\]
Such solutions exist and are uniquely determined due to the structures of \eqref{SDE_W0} and \eqref{SDE_H}.

We can also check that \\
(1) $s \mapsto L_{s}$ is continuous and adapted, thus predictable; \\
(2) $s \mapsto \hat{W}_{s-}^{(0),t,l,x,y}$ and $s \mapsto \hat{H}_{s-}^{t,l,x,y}$ are predictable, hence $s \mapsto \hat{N}_{s}^{t,l,x,y}$ is predictable; \\
(3) Since $\hat{Z}_{s}^{\,t,l,x,y} \in [0,1]$, the definition of $\theta$ implies 
\[
    \hat{N}_s^{t,l,x,y} \in \bigg[ -\hat{H}_{s-}^{t,l,x,y}, \frac{\hat{W}_{s-}^{(0),t,l,x,y}}{[1+\overline{\epsilon}(e^{L_s})]e^{L_s}} \bigg], \qquad s\in(t,T].
\]
Together with $\hat{N}_t^{t,l,x,y}=0$, this proves that $\hat{N}^{t,l,x,y}\in\mathcal{N}(t,l,x,y)$. Now we let $\hat{W}_{s}^{liq} := \hat{W}_{s}^{(0),t,l,x,y} + [1 - \underline{\epsilon}(e^{L_{s}})] \hat{H}_{s}^{t,l,x,y} e^{L_{s}}$. Using these processes, we conclude 
\begin{equation*}
  V(t,l,x,y) = \sup_{N \in \mathcal{N}(t,l,x,y)} \mathbb{E}_{t,l,x,y} \Big[ U \big(W_T^{\mathrm{liq},N} \big) \Big] = \mathbb{E}_{t,l,x,y} \Big[ U \big( W_T^{\mathrm{liq},\hat{N}^{t,l,x,y}} \big) \Big] = \overline{V}(t,l,x,y).
\end{equation*}
We finally return to the original filtered probability space. For $t=T$, \eqref{eq:conditional_verification} follows directly from the terminal condition. Hence, fix $t\in[0,T)$. Let $N\in\mathcal{N}_t$ be arbitrary and, for $R>0$ and $\xi\in(0,T-t)$, set $\tau_{R,\xi}:=\tau_R^L\wedge(T-\xi)$. Repeating the argument of stopped It\^o's lemma on $[t,\tau_{R,\xi}]$ and taking conditional expectations with respect to $\mathcal{F}_t$, we obtain almost surely
\[
    \mathbb{E} \Big[ e^{-\alpha\tau_{R,\xi}} \nu \big( \tau_{R,\xi},L_{\tau_{R,\xi}},W_{\tau_{R,\xi}}^{(0),N},H_{\tau_{R,\xi}}^N \big) \cU \big( W_{\tau_{R,\xi}}^{(0),N}+H_{\tau_{R,\xi}}^Ne^{L_{\tau_{R,\xi}}} \big) \Big| \mathcal{F}_t \Big]
    \le \overline{V}\big(t,L_t,W_t^{(0)},H_t\big).
\]
For the original-space argument, concatenate $N\in\mathcal{N}_t$ with the admissible strategy used up to time $t$. The original initial state estimate in Lemma~\ref{lem:uniform-integrability} then implies that the stopped family is uniformly integrable. Hence, by the same terminal limit argument used after \eqref{Double_limits_Expectation}, the random variable inside the conditional expectation converges in $L^1$ to $U(W_T^{\mathrm{liq},N})$, first as $R\to\infty$ and then as $\xi\downarrow0$. Since conditional expectation is a contraction on $L^1$, the corresponding conditional expectations converge in $L^1$. Passing to an almost surely convergent subsequence and using the stopped inequality, we obtain
\[
    \mathbb{E} \Big[ U \big( W_T^{\mathrm{liq},N} \big) \, \Big| \, \mathcal{F}_t \Big] \le \overline{V} \big( t,L_t,W_t^{(0)},H_t \big),\qquad \text{a.s.}
\]

Let $z^*$ be the Borel measurable maximizing selector constructed above. Initialize the feedback-controlled state process by $\hat{W}_t^{(0),t}:=W_t^{(0)}$, $\hat{H}_t^{t}:=H_t$. For $s\in(t,T]$, define
\[
    \hat{Z}_s^{t}:=z^* \big( s,L_s,\hat{W}_{s-}^{(0),t},\hat{H}_{s-}^{t} \big), \qquad \hat{N}_s^{t}:=\theta \big( L_s,\hat{W}_{s-}^{(0),t},\hat{H}_{s-}^{t},\hat{Z}_s^{t} \big),
\]
and set $\hat{N}_t^{t}:=0$. Since $P$ has finitely many jumps on $[t,T]$ almost surely, the corresponding state process $(\hat{W}^{(0),t},\hat{H}^{t})$ is constructed recursively at the jump times of $P$.

The process $L$ is continuous and adapted, and hence predictable, while $\hat{W}_{-}^{(0),t}$ and $\hat{H}_{-}^{t}$ are predictable. Therefore, the Borel measurability of $z^*$ implies that $\hat{Z}^{t}$ and $\hat{N}^{t}$ are predictable. Since $\hat{Z}_s^{t}\in[0,1]$ and $\theta(l,x,y,\cdot)$ maps $[0,1]$ into $\mathcal{M}(l,x,y)$, we conclude that $\hat{N}^{t}\in\mathcal{N}_t$. 

For this feedback control, the finite-variation drift vanishes. Repeating the preceding stopped argument with equality and using the same $L^1$ passage to the terminal time yields
\[
    \mathbb{E} \Big[ U \big( W_T^{\mathrm{liq},\hat{N}^{t}} \big) \Big| \mathcal{F}_t \Big] = \overline{V} \big( t,L_t,W_t^{(0)},H_t \big), \qquad \text{a.s.}
\]
Taking the essential supremum over $N\in\mathcal{N}_t$, we conclude that
\[
    \mathcal{V}_t = \overline{V} \big( t,L_t,W_t^{(0)},H_t \big) = V \big( t,L_t,W_t^{(0)},H_t \big), \qquad \text{a.s.}
\]
This proves \eqref{eq:conditional_verification}.
\end{proof}

\bigskip

\section{Optimal portfolio} \label{sec:optimal-portfolio}

Although the utility function $U$ may not be concave, we can characterize an optimal trading strategy.

\begin{corollary}[Optimal portfolio] \label{Optimal_portfolio}
    Given current time $t \in [0, T)$, current log price $l \in \R$, and current positions $(x,y) \in \mathbb{D}$, define the set of optimal trade parameters 
    \[
      \hat{Z}(t, l, x, y) := \argmax_{z \in [0,1]} K(\nu)(t,l,x,y,z)
    \]
    and the two sets of maximizers 
    \begin{align*}
      R_{\mathrm{buy}} & := \argmax_{z \in [b(l,x,y),1]} \bigg( \nu \big( t,l, x - [1+\overline\epsilon(e^{l})] e^{l} \theta(l,x,y,z), y + \theta(l,x,y,z) \big) \frac{\cU \big( x + y e^{l} - \overline{\epsilon}(e^{l}) e^{l} \theta(l,x,y,z) \big)}{\cU \big( x + y e^{l} \big)} \bigg), \\
      R_{\mathrm{sell}} & := \argmax_{z \in [0,b(l,x,y)]} \bigg( \nu \big( t,l, x - [1-\underline\epsilon(e^{l})] e^{l} \theta(l,x,y,z), y + \theta(l,x,y,z) \big) \frac{\cU \big( x + y e^{l} + \underline{\epsilon}(e^{l}) e^{l} \theta(l,x,y,z) \big)}{\cU \big( x + y e^{l} \big)} \bigg).
    \end{align*}
    Then $\hat{Z}(t,l,x,y)$ is characterized as 
    \begin{equation*}
      \hat{Z}(t,l,x,y) = \argmax_{z \in R_{\mathrm{buy}} \cup R_{\mathrm{sell}}} K(\nu)(t,l,x,y,z).
    \end{equation*}
\end{corollary}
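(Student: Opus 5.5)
The plan is to reduce the statement to an elementary fact about maximizing a continuous function over a union of two compact intervals. Fix $(t,l,x,y)$, write $b:=b(l,x,y)\in[0,1]$ and $k(z):=K(\nu)(t,l,x,y,z)$. By the equivalences recorded in the proof of Lemma~\ref{lem:value_regularity}, $z\ge b$ iff $\theta(l,x,y,z)\ge 0$, and $z<b$ iff $\theta(l,x,y,z)<0$. Hence on $[b,1]$ the function $k$ coincides with the buy objective $k_{\mathrm{buy}}$ defining $R_{\mathrm{buy}}$, and on $[0,b)$ it coincides with the sell objective $k_{\mathrm{sell}}$ defining $R_{\mathrm{sell}}$.

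The first thing to check is the endpoint $z=b$, since $R_{\mathrm{sell}}$ is taken over the closed interval $[0,b]$ while $k$ uses the buy branch there. At $z=b$ we have $\theta=0$. Both branches of $\mathcal{K}$ then reduce to $\nu(t,l,x,y)\cdot \cU(x+ye^l)/\cU(x+ye^l)=\nu(t,l,x,y)$, so $k_{\mathrm{sell}}(b)=k_{\mathrm{buy}}(b)=k(b)$. Thus $k_{\mathrm{sell}}=k$ on all of $[0,b]$ and $k_{\mathrm{buy}}=k$ on $[b,1]$.

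Next, continuity. Theorem~\ref{Verification} uses that $(s,l,x,y,z)\mapsto K(\nu)$ is continuous, via $\nu\in C_{bc}$ together with the continuity of $\theta$ and $\cU$. So $k_{\mathrm{buy}}$ and $k_{\mathrm{sell}}$ are continuous on the compact intervals $[b,1]$ and $[0,b]$. Both argmax sets are therefore nonempty and compact, as is $\hat Z$. Set $M_{\mathrm{buy}}:=\max_{[b,1]}k$ and $M_{\mathrm{sell}}:=\max_{[0,b]}k$. Since $[0,1]=[0,b]\cup[b,1]$, we get $\max_{[0,1]}k=\max\{M_{\mathrm{buy}},M_{\mathrm{sell}}\}=:M$.

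Then prove the set equality by double inclusion. For "$\subseteq$": take $z\in\hat Z$, so $k(z)=M$. If $z\in[b,1]$, then $M_{\mathrm{buy}}\ge k(z)=M\ge M_{\mathrm{buy}}$, hence $z\in R_{\mathrm{buy}}$; similarly, if $z\in[0,b]$, then $z\in R_{\mathrm{sell}}$. Moreover $\max_{R_{\mathrm{buy}}\cup R_{\mathrm{sell}}}k=M$, because $R_{\mathrm{buy}}\cup R_{\mathrm{sell}}\subseteq[0,1]$ and it contains $z$. So $z$ lies in the right-hand argmax. For "$\supseteq$": the right-hand maximum equals $\max\{M_{\mathrm{buy}},M_{\mathrm{sell}}\}=M$, since $k$ takes the values $M_{\mathrm{buy}}$ on $R_{\mathrm{buy}}$ and $M_{\mathrm{sell}}$ on $R_{\mathrm{sell}}$. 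Any maximizer there therefore attains $M$ on $[0,1]$ and belongs to $\hat Z$.

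The only delicate point is the boundary value at $z=b$ and the branch convention in $\mathcal{K}$; the explicit computation that both branches give $\nu(t,l,x,y)$ at $\theta=0$ resolves it. Continuity, needed for the maxima to be attained, comes directly from the regularity in Lemma~\ref{lem:value_regularity}.
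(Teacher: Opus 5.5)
Your proposal is correct and takes the same route as the paper, whose proof just observes that the maximum of $K(\nu)$ over $[0,1]$ equals the larger of its maxima over $[b,1]$ and $[0,b]$. Your explicit check that both branches reduce to $\nu(t,l,x,y)$ at $z=b$ (where $\theta=0$), together with the double-inclusion argument, supplies details the paper leaves implicit.
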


\begin{proof}
    We observe that $R_{\mathrm{buy}}$ and $R_{\mathrm{sell}}$ are well defined due to \lemref{lem:value_regularity}. We observe that for $(z_{\mathrm{buy}}, z_{\mathrm{sell}}) \in R_{\mathrm{buy}} \times R_{\mathrm{sell}}$, 
    \begin{equation*}
        \max_{z \in [0,1]} K(\nu)(t,l,x,y,z) = \max \Big\{ K(\nu)(t,l,x,y,z_{\mathrm{buy}}), K(\nu)(t,l,x,y,z_{\mathrm{sell}}) \Big\}.
    \end{equation*}
\end{proof}

\begin{remark}[Interpretation of an optimal trade]
    For an optimal selector $\hat{z}$, the corresponding trade is a purchase, a sale, or no trade according as $\hat{z}>b(l,x,y)$, $\hat{z}<b(l,x,y)$, or $\hat{z}=b(l,x,y)$. The optimal selector is chosen as the case among the three above that yields the largest post-trade value $K(\nu)(t,l,x,y,\hat{z})$, so that an investor maximizes \eqref{Objective}. Since these argmax sets need not be singletons for a general utility $U$ (for an extreme example, $U$ is constant), the optimal selector may not be unique. The numerical examples in the next section illustrate that, for nonconcave utilities and at fixed time and pre-trade wealth, the computed buying, selling, and no-trade sets need not be single intervals or follow the usual ordering. They also exhibit abrupt switches in the optimal post-trade risky weight.
\end{remark}

\bigskip

\section{Numerical examples}    \label{sec:numerical}

To observe unusual phenomena in optimal trading, we consider the following utility functions 
\begin{align*}
  U_{1}(w) & = - \frac{1}{w} + 6 \, \Phi \Big( \frac{w-1.2}{0.1} \Big) + 3.5 \, \Phi \Big( \frac{w-1.8}{0.2} \Big), \\
  U_{2}(w) & = - \frac{1}{w^{2}} + 8 \, \Phi \Big( \frac{w-1.8}{0.15} \Big), \\
  U_{3}(w) & = - \frac{1}{w} + 6 \, \Phi \Big( \frac{w-1.6}{0.02} \Big) + 12 \, \Phi \Big( \frac{w-2.1}{0.05} \Big)
\end{align*}
where $\Phi$ is the cumulative distribution function of the standard normal distribution. It is straightforward to check that each $U_i$ is strictly increasing but neither concave nor convex in $w$. All three utilities satisfy Assumption \ref{U_assumption} with $p=2$ and $\delta_U=1$. Indeed,
\[
    \max_{i\in\{1,2,3\}}\sup_{w>0}\frac{|U_i(w)|+w|U_i'(w)|}{\mathcal{U}_2(w)}<\infty,
\]
which implies that $x\mapsto U_i(e^x)/\mathcal{U}_2(e^x)$ is bounded and Lipschitz continuous.

\begin{figure}[ht]
\centering
\includegraphics[width=.7\columnwidth]{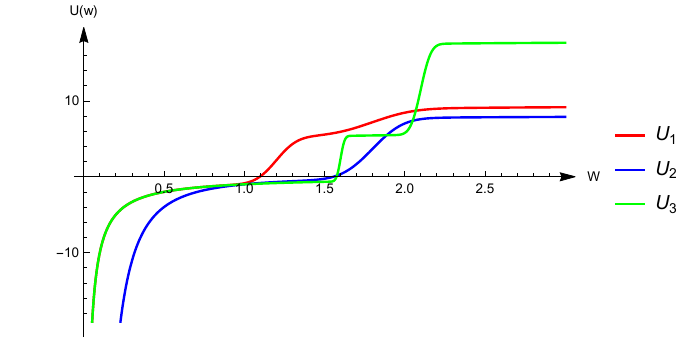}
\caption{The figure shows the form of the utility functions $U_{1}, U_{2}, U_{3}$ as functions of total wealth, used in the numerical examples.} \label{Utility_function}
\end{figure}

Throughout this section, $\mu$, $\sigma$, $\lambda$, $\overline{\epsilon}$, and $\underline{\epsilon}$ are constants:
\[
    \mu = 0.06, \quad \sigma = 0.35, \quad \lambda = 20, \quad T = 1,
\]
and the value of $\overline{\epsilon} = \underline{\epsilon}$ will be specified later. The dependence of $\nu(t,l,x,y)$ on $(l,y)$ then enters only through the risky dollar position $m:=ye^l$, so we write $\nu(t,l,x,y)=\widetilde{\nu}(t,m,x)$.

For $(m,x)\in\mathbb{D}$ and $z\in[0,1]$, define the post-trade risky position and cash position by
\[
    m_z:=z\Big(\frac{x}{1+\overline{\epsilon}}+m\Big), \qquad
    x_z:=x-(1+\overline{\epsilon})(m_z-m)^++(1-\underline{\epsilon})(m_z-m)^-.
\]
Here, $m_z-m=e^l\theta(l,x,y,z)$ is the dollar amount traded. Thus $m_z>m$, $m_z<m$, and $m_z=m$ correspond to a purchase, a sale, and no trade, respectively. For a test function $\phi$, define the reduced normalized intervention operator by
\[
    \overline{K}(\phi)(t,m,x,z):=\phi(t,m_z,x_z)\frac{\cU(x_z+m_z)}{\cU(x+m)}.
\]
The coefficients $\widetilde{\mu}$ and $\gamma$ in the equation below are understood after substituting $ye^l=m$; they generally remain dependent on $(m,x)$. The reduced function $\widetilde{\nu}$ satisfies
\begin{equation*}
    \begin{cases}
    0 = \dfrac{\partial \tilde{\nu}}{\partial t}(t,m,x) + \left( \widetilde{\mu} +  \dfrac{\sigma^{2}}{2} \right) m \dfrac{\partial \tilde{\nu}}{\partial m}(t,m,x) + \dfrac{\sigma^{2} m^{2}}{2} \dfrac{\partial^{2}\tilde{\nu}}{\partial m^{2}}(t,m,x) + \big( -\alpha - \lambda + \gamma \big) \tilde{\nu}(t,m,x)
    \\[0.8em]
    \qquad + \lambda\displaystyle\sup_{z\in[0,1]}\overline{K}(\widetilde{\nu})(t,m,x,z), \hfill (t,m,x)\in[0,T)\times (0, \infty) \times [0, \infty),
    \\[0.8em]
    0 = \dfrac{\partial \tilde{\nu}}{\partial t}(t,0,x) - (\alpha + \lambda) \widetilde{\nu}(t,0,x)+\lambda\displaystyle\sup_{z\in[0,1]}\overline{K}(\widetilde{\nu})(t,0,x,z), \hfill (t,x) \in [0, T) \times (0, \infty),
    \\[0.8em]
    \tilde{\nu}(T,m,x) = e^{\alpha T} \tilde{g}(m,x), \hfill (m,x) \in \mathbb{D},
    \end{cases}
\end{equation*}
where 
\[
  \tilde{g}(m,x) := \frac{U ( x+[1-\underline{\epsilon}] m )}{\cU (x + m)}.
\]
The code is implemented by the finite difference method (FDM) with an Implicit-Explicit Method (IMEX) using Python, where the spatial domain $(m,x)$ is uniformly partitioned into 2000 subintervals and the time interval $t$ into 4000 subintervals.

For each fixed pre-trade total wealth $w>0$, we vary the pre-trade risky weight $\pi\in[0,1]$ by setting
\[
    m=w\pi,\qquad x=w(1-\pi).
\]
For a maximizing control parameter $\hat{z}$, the corresponding post-trade risky weight is
\[
    \hat{\pi}:=\frac{m_{\hat{z}}}{x_{\hat{z}}+m_{\hat{z}}}, \qquad
    \hat{z}\in\argmax_{z\in[0,1]}\overline{K}(\widetilde{\nu})(t,m,x,z).
\]
When the optimizer is not unique, $\hat{\pi}$ denotes the post-trade risky weight corresponding to the maximizer returned by the numerical procedure.

The figures plot a numerical approximation of $\hat{\pi}$ as a function of $\pi$ at fixed pre-trade wealth $w$. A purchase, a sale, and no trade correspond respectively to $\hat{\pi}>\pi$, $\hat{\pi}<\pi$, and $\hat{\pi}=\pi$. We evaluate the optimal rebalancing rule at $t=0$, interpreting it as the portfolio adjustment that would be chosen if a trading opportunity were available at that state. This does not represent an executed trade at time $0$, which is excluded by the model. These plots show one-dimensional slices at fixed time and pre-trade wealth; they do not by themselves establish disconnectedness of the corresponding regions in the full state space.

\begin{figure}[ht]
\centering
\includegraphics[width=.45\columnwidth]{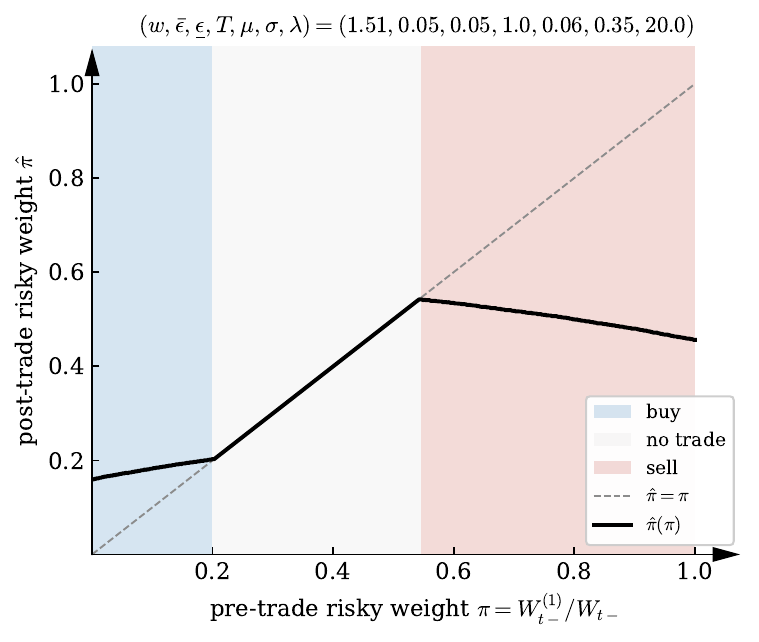} \quad \includegraphics[width=.45\columnwidth]{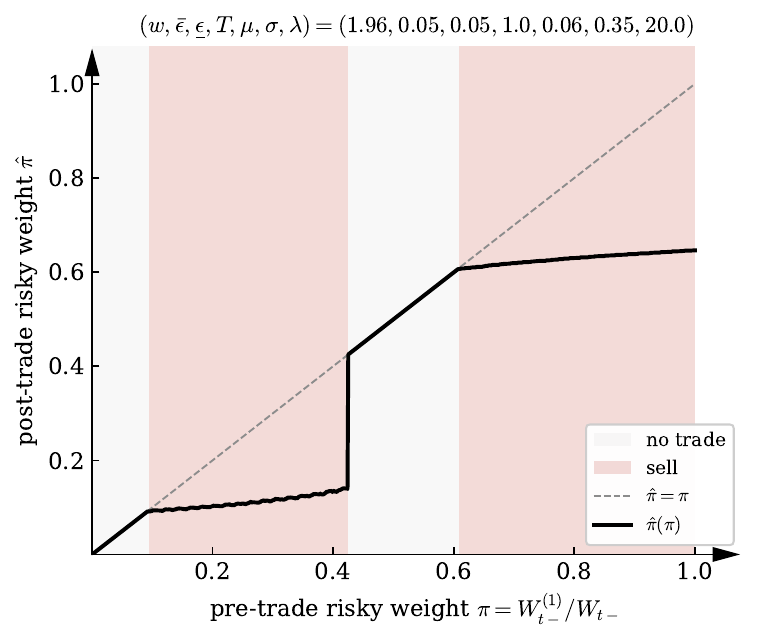}
\caption{Numerically computed optimal post-trade risky weight $\hat{\pi}$ as a function of the pre-trade risky weight $\pi$, evaluated at $t=0$ for $U=U_1$, $\overline{\epsilon}=\underline{\epsilon}=0.05$, and pre-trade wealth $w=1.51$ (left) and $w=1.96$ (right). For $w=1.51$, the selected actions follow the usual buy--no-trade--sell ordering. For $w=1.96$, two no-trade intervals alternate with two selling intervals, and no buying is selected.}
\label{Utility_function_1}
\end{figure}

\begin{figure}[ht]
\centering
\includegraphics[width=.45\columnwidth]{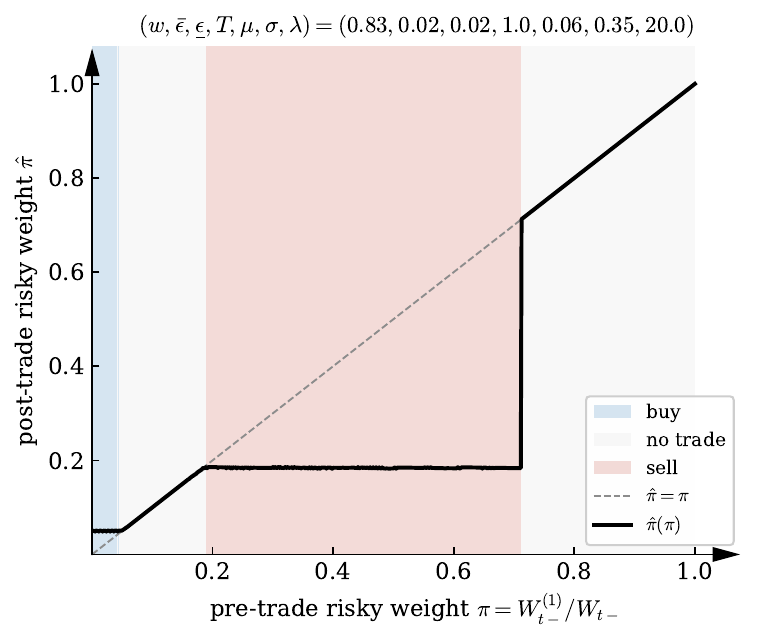} \quad \includegraphics[width=.45\columnwidth]{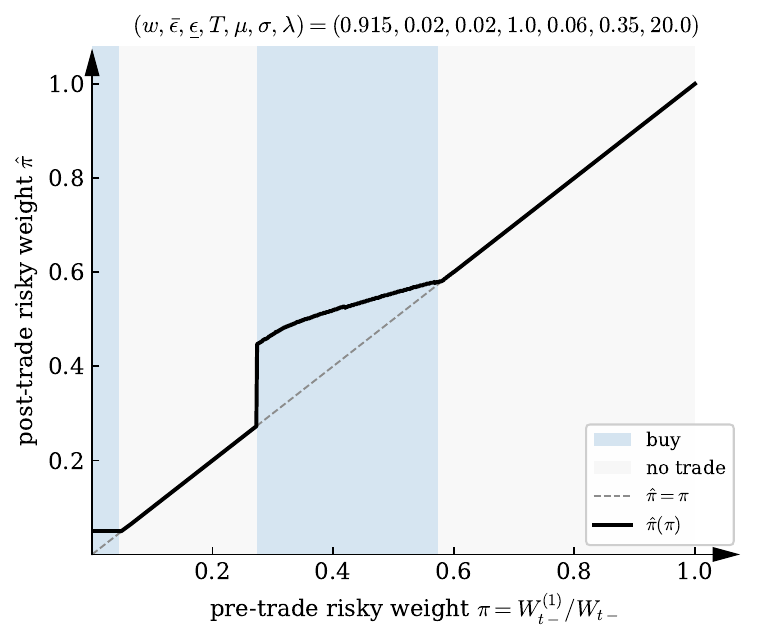}
\caption{Numerically computed optimal post-trade risky weight $\hat{\pi}$ as a function of the pre-trade risky weight $\pi$, evaluated at $t=0$ for $U=U_2$, $\overline{\epsilon}=\underline{\epsilon}=0.02$, and pre-trade wealth $w=0.83$ (left) and $w=0.915$ (right). For $w=0.83$, the usual buy--no-trade--sell pattern is followed by an additional no-trade interval at high pre-trade risky weights. For $w=0.915$, two buying intervals alternate with two no-trade intervals, and no selling is selected.}
\label{Utility_function_2}
\end{figure}

\begin{figure}[ht]
\centering
\includegraphics[width=.45\columnwidth]{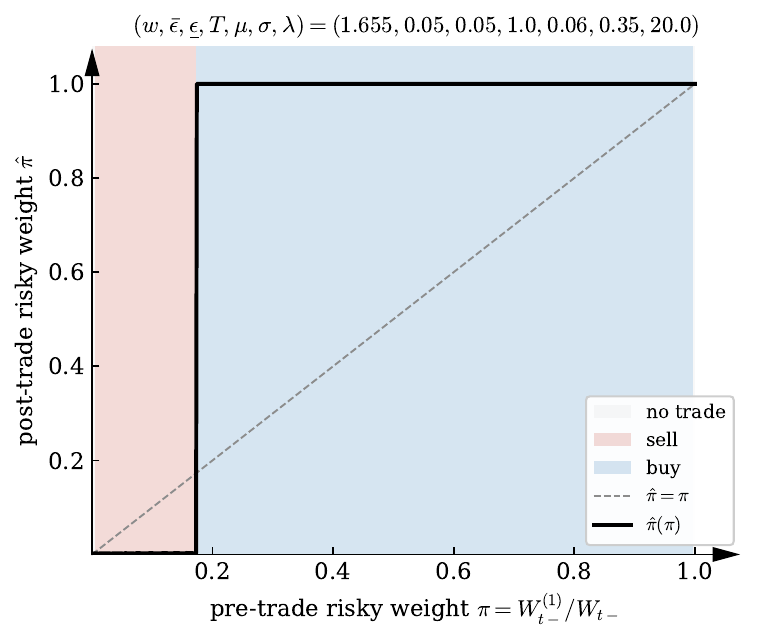}
\caption{Numerically computed optimal post-trade risky weight $\hat{\pi}$ as a function of the pre-trade risky weight $\pi$, evaluated at $t=0$ for $U=U_3$, $\overline{\epsilon}=\underline{\epsilon}=0.05$, and pre-trade wealth $w=1.655$. As $\pi$ increases through $(0,1)$, the selected action switches from selling to buying, reversing the usual ordering. At the endpoints $\pi=0$ and $\pi=1$, the displayed policy satisfies $\hat{\pi}=\pi$ and therefore selects no trade.}
\label{Utility_function_3}
\end{figure}

We comment on the economic interpretation of $U_1,U_2,U_3$. Each of them is the sum of a CRRA-type component ($-w^{-1}$ or $-w^{-2}$, i.e.,\ relative risk aversion $2$ or $3$) and one or two smoothed upward steps of the form $h\,\Phi((w-\bar w)/\eta)$ with height $h>0$, level $\bar w>0$, and width $\eta>0$. For every $w\neq \bar w$, $\lim_{\eta\downarrow 0} \Phi((w-\bar w)/\eta)=\mathbf 1_{\{w>\bar w\}}$. For instance, set
\[
U_{2,\eta}(w):=-\frac{1}{w^{2}}+8\,\Phi\Big(\frac{w-1.8}{\eta}\Big),\qquad \eta>0,
\]
so that $U_2=U_{2,0.15}$. As $\eta\downarrow 0$, $U_{2,\eta}$ converges pointwise on $(0,\infty)\setminus\{1.8\}$ to $-w^{-2}+8\cdot\mathbf 1_{\{w>1.8\}}$, a CRRA utility plus an upward jump at the aspiration level $1.8$. Such discontinuous limits are aspiration utilities in the sense of \cite{DiecidueVandeVen2008}, which have been applied to investment decisions in \cite{AristidouGigaLeeZapatero2025} and to portfolio choice in \cite{DaiKouQianWan2022,QianYang2026}; $U_2$ may thus be viewed as a smoothed aspiration utility. Without the CRRA component, the limiting step $\mathbf 1_{\{w>\bar w\}}$ corresponds to maximizing the probability of reaching the target wealth $\bar w$, as in \cite{Browne1999}.
 
The convexity intervals lie close to, but need not end exactly at, the aspiration levels because the curvature of each utility also includes the CRRA component and the contributions of the other smoothed steps. Numerically, the convexity intervals are approximately $(0.87,1.2)$ and $(1.48,1.8)$ for $U_1$, $(1.31,1.8)$ for $U_2$, and $(1.51,1.6)$ and $(1.87,2.1)$ for $U_3$; the utilities are concave outside these intervals. Hence $U_2$ is concave--convex--concave, the shape proposed by \cite{FriedmanSavage1948}, while $U_1$ and $U_3$, which have two aspiration levels, are concave--convex--concave--convex--concave. This recalls the alternating curvature of the utility of \cite{Markowitz1952}, which, however, has three rather than four inflection points and is defined over gains and losses relative to customary wealth rather than over terminal wealth. The coexistence of a low and a high aspiration level is also reminiscent of the layered portfolios of the multiple account version of the behavioral portfolio theory of \cite{ShefrinStatman2000}, which builds on the security-potential/aspiration theory of \cite{Lopes1987}; there, however, the two aspiration levels belong to separate mental accounts rather than to a single utility function. Several wealth goals also appear in the goal-based wealth management framework of \cite{CapponiZhang2024}, where they are pursued sequentially over time.
 
The smoothing is needed for our analysis: a genuine jump in $U$ violates the H\"older condition in Assumption~\ref{U_assumption}, and discontinuous utilities of this type have been treated with discontinuous viscosity solutions in \cite{DaiKouQianWan2022,QianYang2026}. For each fixed $\eta>0$, the smoothed utilities stay within the classical solution theory of Section~4, while small widths, such as $\eta=0.02$ in the first step of $U_3$, make them close to the corresponding step utilities. Since the constants in the proof of Lemma~\ref{lem:value_regularity} depend, through the bound on the terminal data $g$ in (iv) of Lemma~\ref{lem:coeff}, on the H\"older norm in Assumption~\ref{U_assumption}, and this norm blows up as $\eta\downarrow 0$, we do not claim any convergence of the value functions or policies in this limit.
 
The buy--no-trade--sell structure, established for CRRA utilities under continuous trading by \cite{DavisNorman1990,MagillConstantinides1976,ShreveSoner1994} and under search frictions by \cite{TJ2026}, can break down for non-concave utilities: \cite{QianYang2026} finds disjoint no-trade regions, and without transaction costs \cite{DaiKouQianWan2022} shows examples of discontinuous optimal strategies. In our model, with no borrowing or short-selling, Figures~2--4 show analogues along fixed-wealth slices: several disjoint intervals on which the same action is selected (right of Figure~2, Figure~3); departures from the usual ordering, most strikingly a selling interval to the left of a buying interval (Figure~4); and jumps in the selected post-trade weight (right of Figure~2, Figures~3--4), including in Figure~4 a switch between the bounds $\hat\pi=0$ and $\hat\pi=1$, a counterpart under transaction costs and search frictions of the discontinuities in \cite{DaiKouQianWan2022}.
In Figure~4, the pre-trade wealth $w=1.655$ lies just above the first aspiration level $1.6$ of $U_3$ and below the second level $2.1$. The selling interval can thus be read as securing the first level, and the buying interval with $\hat\pi=1$ as gambling for the second. Consistently, the switch near $\pi\approx 0.18$ occurs roughly where a full sale would leave $w(1-\underline{\epsilon}\,\pi)\approx 1.64$, two step widths above the first level.
 
Such jumps are made possible by the non-concavity of $U$. At a trading time, the new position maximizes $z\mapsto \overline K(\widetilde\nu)(t,m,x,z)$ over the whole segment of reachable post-trade positions (Theorem~\ref{Verification} and Corollary~\ref{Optimal_portfolio}); when $U$ is not concave, this map may have several local maximizers, and as $\pi$ varies the global maximizer can move from one to another. We call $\pi$, for fixed $t$ and $w$, a switching point if $\operatorname*{argmax}_{z\in[0,1]}\overline K(\widetilde\nu)(t,w\pi,w(1-\pi),z)$ contains at least two maximizers with different post-trade weights. Under continuous trading with concave $U$, the state leaves the closure of the no-trade region only at the initial time; here it may lie deep inside a trading region at every arrival, so such switches can recur.
 
Moreover, every jump of $\hat\pi$ occurs at a switching point. Fix $t$ and $w$, let $\hat\pi(\pi)$ be the post-trade weight of the selected maximizer, and suppose that $\hat\pi$ is discontinuous at $\pi_0$. By compactness, there exist $\pi_n\to\pi_0$ with $\hat\pi(\pi_n)\to\bar\pi\neq\hat\pi(\pi_0)$ and, along a subsequence, corresponding maximizers $\hat z_n\to\bar z\in[0,1]$. Since $\widetilde\nu(t,m,x)=\nu(t,l,x,me^{-l})$ for any fixed $l\in\mathbb R$, Lemma~\ref{lem:value_regularity} makes $\widetilde\nu(t,\cdot,\cdot)$ continuous on $\mathbb D$, including at $m=0$, so $(\pi,z)\mapsto \overline K(\widetilde\nu)(t,w\pi,w(1-\pi),z)$ is continuous on $[0,1]^2$ and $\bar z$ is a maximizer at $\pi_0$, as in the proof of Lemma~\ref{lem:measurable}. As $m_z/(x_z+m_z)$ is continuous in $(\pi,z)$, because $x_z+m_z\ge(1-\epsilon_M)w>0$ by (i) of Lemma~\ref{lem:post_trade}, $\bar z$ yields $\bar\pi\neq\hat\pi(\pi_0)$, so $\pi_0$ is a switching point. The abrupt changes in Figures~\ref{Utility_function_1}--\ref{Utility_function_3} are consistent with switches between competing global maximizers. Distinguishing such switches from discretization effects requires numerical refinement and comparison of the competing objective values.

\bigskip

\appendix
\section*{Appendix}
\refstepcounter{section}

\begin{lemma}[Borel measurable selection theorem] \label{lem:measurable}
Let $F:[0,T] \times \R \times \D \times [0,1] \to \R$ be a continuous function. Then the function 
\begin{align}
(t,l,x,y) \mapsto f(t,l,x,y):=\max \bigg\{ z: z \in \argmax_{z \in [0,1]} F(t,l,x,y,z) \bigg\},
\end{align}  
is upper semicontinuous (hence Borel measurable). 
\end{lemma}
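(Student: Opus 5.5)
The plan is the standard Berge-type argument. Write $\mathbf{p}=(t,l,x,y)\in P:=[0,T]\times\R\times\D$ and $M(\mathbf{p}):=\max_{z\in[0,1]}F(\mathbf{p},z)$. First I will check that $f$ is well defined. For each fixed $\mathbf{p}$, the map $z\mapsto F(\mathbf{p},z)$ is continuous on the compact set $[0,1]$, so the argmax set $A(\mathbf{p})$ is nonempty. It is also closed, being the preimage of the closed set $\{M(\mathbf{p})\}$ under a continuous map. Hence $A(\mathbf{p})$ is a nonempty compact subset of $[0,1]$, its maximum exists, and $f(\mathbf{p})\in A(\mathbf{p})$.

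The second step is to show that $M$ is continuous on $P$. Lower semicontinuity is immediate: for $\mathbf{p}_n\to\mathbf{p}$ and any $z_0\in A(\mathbf{p})$,
\[
\liminf_n M(\mathbf{p}_n)\ge\lim_n F(\mathbf{p}_n,z_0)=M(\mathbf{p}).
\]
For upper semicontinuity, pick $z_n\in A(\mathbf{p}_n)$ and pass to a subsequence realizing the $\limsup$ along which $z_n\to\bar z$. Joint continuity of $F$ then gives $\limsup_n M(\mathbf{p}_n)=F(\mathbf{p},\bar z)\le M(\mathbf{p})$. Since $P$ is a subset of $\R^4$, sequences suffice throughout.

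The main step is upper semicontinuity of $f$. Take $\mathbf{p}_n\to\mathbf{p}$, set $z_n:=f(\mathbf{p}_n)$, and choose a subsequence with $z_{n_k}\to\bar z:=\limsup_n f(\mathbf{p}_n)$, which is possible because $[0,1]$ is compact. Then
\[
F(\mathbf{p},\bar z)=\lim_k F(\mathbf{p}_{n_k},z_{n_k})=\lim_k M(\mathbf{p}_{n_k})=M(\mathbf{p}),
\]
using joint continuity of $F$ and continuity of $M$. So $\bar z\in A(\mathbf{p})$, and therefore $\limsup_n f(\mathbf{p}_n)=\bar z\le\max A(\mathbf{p})=f(\mathbf{p})$. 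The only delicate point is that we need the limit point to land in the limiting argmax set, and this is exactly what continuity of $M$ provides. Choosing the largest maximizer is what makes the inequality go the right way; the smallest maximizer would instead give a lower semicontinuous selector.

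Finally, Borel measurability follows because each superlevel set $\{f\ge c\}$ of an upper semicontinuous function is closed in $P$. The sets $\{f<c\}$ are therefore relatively open, which gives Borel measurability with respect to the Borel $\sigma$-algebra of $P\subset\R^4$.
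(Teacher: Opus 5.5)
Your proposal is correct and follows essentially the same route as the paper. Both arguments establish well-definedness by compactness, use continuity of the maximum value $M$, and use a subsequence argument showing that the limit of the largest maximizers lies in the limiting argmax set. The only difference is cosmetic: you prove continuity of $M$ directly by a sequential lower/upper semicontinuity argument, whereas the paper obtains it from uniform continuity of $F$ on a compact neighborhood of the limit point.
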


\begin{proof}
Note first that $f$ is well defined: for each $(t,l,x,y)$ the set $\argmax_{z \in [0,1]} F(t,l,x,y,z)$ is nonempty and compact by continuity of $F$ and compactness of $[0,1]$, so it contains its supremum. Let $(t_{n},l_{n},x_{n},y_{n}) \to (t,l,x,y)$ in $[0,T] \times \R \times \D$ and set $z_{n} := f(t_{n},l_{n},x_{n},y_{n})$. Passing to a subsequence, we may assume $z_{n} \to z^{*} \in [0,1]$ and $z_{n} \to \limsup_{n} f(t_{n},l_{n},x_{n},y_{n})$. Since the points $(t_{n},l_{n},x_{n},y_{n},z_{n})$ eventually lie in a compact subset of $[0,T] \times \R \times \D \times [0,1]$, on which $F$ is uniformly continuous, the maximum function $(t,l,x,y) \mapsto \max_{z \in [0,1]} F(t,l,x,y,z)$ is continuous there, and 
\[
F(t,l,x,y,z^{*}) = \lim_{n \to \infty} F(t_{n},l_{n},x_{n},y_{n},z_{n}) = \lim_{n \to \infty} \max_{z \in [0,1]} F(t_{n},l_{n},x_{n},y_{n},z) = \max_{z \in [0,1]} F(t,l,x,y,z).
\]
Hence $z^{*} \in \argmax_{z \in [0,1]} F(t,l,x,y,z)$ and therefore $\limsup_{n} f(t_{n},l_{n},x_{n},y_{n}) = z^{*} \le f(t,l,x,y)$, which is upper semicontinuity. See also Lemma F.2 of \cite{TJ2026} for a closely related statement.
\end{proof}

\begin{notation}\label{not:master}
From now on, we introduce the notation used in the lemmas below. For $\bx=(l,x,y)\in \R \times \D$, write
\[
w(\bx):=x+ye^{l},\qquad
\pi(\bx):=\frac{ye^{l}}{w(\bx)}\in[0,1],\qquad
r(\bx):=\ln w(\bx),\qquad
F(r):=\ln\cU(e^{r}),\ r \in \R,
\]
and denote by
\[
\rho_{0}(\bx_{1},\bx_{2})
:=|l_{1}-l_{2}|
+\frac{|x_{1}-x_{2}|+e^{l_{1}\vee l_{2}}|y_{1}-y_{2}|}
       {w(\bx_{1})\wedge w(\bx_{2})}
\]
the spatial part of $\rho$, so that
$\rho := \rho((t_{1},\bx_{1}),(t_{2},\bx_{2}))=\sqrt{|t_{1}-t_{2}|}+\rho_{0}(\bx_{1},\bx_{2})$.
For $\bp_{i}=(x_{i},y_{i})$ ($i \in \{ 1,2 \}$) and $l \in \R$, we set
\[
D_{\bp_{1},\bp_{2}}(l)
:=\frac{|x_{1}-x_{2}|+e^{l}|y_{1}-y_{2}|}
        {(x_{1}+y_{1}e^{l})\wedge(x_{2}+y_{2}e^{l})}.
\]
We write $K_{\be}$ and $K_{\ue}$ for the Lipschitz constants of $l\mapsto\be(e^{l})$ and $l\mapsto\ue(e^{l})$, which are finite by Assumption \ref{Assumption_1}.
From $\cU(w)=1+w^{p}+w^{-p}$, we have the elementary bounds
\begin{equation}\label{eq:Fbounds}
  \bigg\| \frac{d^{k} F}{d r^{k}} \bigg\|_{\infty} < \infty \quad \text{ for all } k \in \{ 1,2,3\}.
\end{equation}
Also we have the identity
\begin{equation}\label{eq:master}
  e^{l} \theta(l,x,y,z) = w(\bx)\,q(l,\pi,z), \quad \text{ where } \quad q(l,\pi,z):=\frac{(1+\be \pi)(z-b)}{1+\be}
\in\Big[-\pi,\ \tfrac{1-\pi}{1+\be}\Big],\quad z\in[0,1],
\end{equation}
with simplified notations $\ue=\ue(e^{l})$, $\be=\be(e^{l})$, and
\[
b=b(\bx)=\frac{ye^{l}[1+\be(e^{l})]}{ye^{l}[1+\be(e^{l})]+x}=\frac{(1+\be)\pi}{1+\be \pi},
\]
where the two expressions for $b$ coincide upon dividing the numerator and the denominator of the latter by $w(\bx)$.

For simplicity, 
\begin{align*}
  L^i & := L^{(t_i,\mathbf{x}_i)}, \quad \rho := \rho((t_{1}, \mathbf{x}_{1}), (t_{2}, \mathbf{x}_{2})), \\[2pt]
  A_i(u) &:= \exp \bigg( \int_{t_i}^{u}\!\Big[-\alpha-\lambda\big(a,e^{L^{i}_a}\big)
        +\gamma\big(a,L^{i}_a,x_i,y_i\big)\Big]\,da \bigg)
        \lambda\big(u,e^{L^{i}_u}\big),
         \\[2pt]
  K^{(n)}_i(u) &:= \sup_{z\in[0,1]} K(w_n)\big( u,L^{i}_u,x_i,y_i,z \big), \\[2pt]
  \mathcal{E}_{i}(u) &:= \exp \bigg( \int_{t_{i}}^{u}\big[-\alpha-\lambda(a,e^{L^{i}_{a}})
+\gamma(a,L^{i}_{a},x_{i},y_{i})\big]\,da \bigg).
\end{align*}
\end{notation}

The next lemma isolates, once and for all, the mean-value argument along segments in $\D$ that we repeatedly need for functions of the intrinsic variables $(l,\pi,r)$.

\begin{lemma}[Composition estimates]\label{lem:composition}
Let $c_{A}>0$ and let $A:\R\times[0,1]\times\R\times[0,1]\to\R$, $(l,\pi,r,z)\mapsto A(l,\pi,r,z)$, satisfy $\|A\|_{\infty}\le c_{A}$ and be Lipschitz continuous in each of its arguments with constant $c_{A}$. For $\bx=(l,x,y)\in\R\times\D$ and $z\in[0,1]$, set
\[
\Psi(\bx,z):=A\big(l,\pi(\bx),r(\bx),z\big).
\]
Then for all $\bx_{1},\bx_{2}\in\R\times\D$ with $\bp_{i}=(x_{i},y_{i})$ and all $z,z'\in[0,1]$:
\begin{enumerate}
    \item[(i)] for every $l\in\R$ with $(l,\bp_{1}),(l,\bp_{2})\in\R\times\D$,
    \begin{equation}\label{eq:comp-fixed-l}
    \big|\Psi((l,\bp_{1}),z)-\Psi((l,\bp_{2}),z)\big|
    \ \le\ 2c_{A}\,\min\big\{1,\,D_{\bp_{1},\bp_{2}}(l)\big\};
    \end{equation}
    \item[(ii)] for every $l\in\R$, $j \in \{ 1,2 \}$, and $h'\in\R$,
    \begin{equation}\label{eq:comp-shift}
    D_{\bp_{1},\bp_{2}}(l+h')\ \le\ e^{2|h'|}\,D_{\bp_{1},\bp_{2}}(l),
    \qquad
    D_{\bp_{1},\bp_{2}}(l_{j})\ \le\ e^{|l_{1}-l_{2}|}\,\rho_{0}(\bx_{1},\bx_{2}),
    \end{equation}
    and consequently, for every $l\in\R$,
    \begin{equation}\label{eq:comp-min}
    \big|\Psi((l,\bp_{1}),z)-\Psi((l,\bp_{2}),z)\big|
    \ \le\ 2e\,c_{A}\,\min\big\{1,\,e^{2|l-l_{2}|}\rho_{0}(\bx_{1},\bx_{2})\big\};
    \end{equation}
    \item[(iii)] for every $\bp=(x,y)$ with $(l',\bp),(l'',\bp)\in\R\times\D$,
    \[
    \big|\Psi((l',\bp),z)-\Psi((l'',\bp),z)\big|\ \le\ 3c_{A}\,|l'-l''|;
    \]
    \item[(iv)] if $\rho_{0}(\bx_{1},\bx_{2})\le 1$, then
    \begin{equation*}
    \big|\Psi(\bx_{1},z)-\Psi(\bx_{2},z)\big|\ \le\ (2e+3)\,c_{A}\,\rho_{0}(\bx_{1},\bx_{2}).
    \end{equation*}
    Furthermore, suppose that $A$ is bounded by $c_{A}$ and, instead of being Lipschitz, $\delta$-H\"older continuous in each of its arguments with constant $c_{A}$, for some $\delta \in (0,1]$. Then $\rho_{0}(\bx_{1},\bx_{2})\le 1$ implies 
    \begin{equation*}
    \big|\Psi(\bx_{1},z)-\Psi(\bx_{2},z)\big|\ \le\ 3^{1 - \delta} (2e+3)^{\delta}\,c_{A}\,[\rho_{0}(\bx_{1},\bx_{2})]^{\delta};
    \end{equation*}
    \item[(v)] $\big|\Psi(\bx,z)-\Psi(\bx,z')\big|\le c_{A}|z-z'|$.
\end{enumerate}
\end{lemma}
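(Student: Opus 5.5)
The argument rests on three facts: the intrinsic coordinates $l$, $\pi(\bx)$ and $r(\bx)$ are well behaved along segments, the ratio $w(\bx_2)/w(\bx_1)$ is controlled by $\rho_0$, and a triangle inequality splits a general displacement into a change in $\bp$ at fixed $l$ followed by a change in $l$ at fixed $\bp$.

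For (i), fix $l$ and write $w_i=x_i+y_ie^l$. The bound by $2c_A$ is immediate from $\|A\|_\infty\le c_A$, so I may assume $D:=D_{\bp_1,\bp_2}(l)<1$. Two estimates do the work:
\[
|\pi_1-\pi_2|=\bigg|\frac{y_1e^l w_2-y_2e^l w_1}{w_1w_2}\bigg|\le \frac{e^l|y_1-y_2|}{w_1\wedge w_2}+\frac{|w_1-w_2|}{w_1\wedge w_2}\le 2D,
\]
and
\[
|\ln w_1-\ln w_2|\le \frac{|w_1-w_2|}{w_1\wedge w_2}\le D.
\]
For the first, I add and subtract $y_2e^lw_2$ and use $y_ie^l\le w_i$. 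For the second, I use the mean value theorem for $\ln$ between $w_1$ and $w_2$. The Lipschitz property of $A$ in the $\pi$ and $r$ slots then gives at most $c_A(2D+D)$. To land on the constant $2c_A$, I would refine the first estimate by handling the $\pi$ and $r$ numerators jointly. If this cannot be sharpened, the same argument still yields a constant such as $3c_A$, which suffices for every later use.

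For (ii), shifting $l\mapsto l+h'$ multiplies the numerator of $D$ by at most $e^{|h'|}$. It changes each denominator $x_i+y_ie^{l}$ by at most a factor $e^{|h'|}$, and the minimum of the denominators by the same factor. This gives the factor $e^{2|h'|}$. The second inequality compares $D(l_j)$ with $\rho_0$, whose numerator uses $e^{l_1\vee l_2}\ge e^{l_j}$ and whose denominator uses $w(\bx_i)$ evaluated at $l_i$ rather than at $l_j$. These differ by at most $e^{|l_1-l_2|}$. Combining the two inequalities with (i), and using $\min\{1,ab\}\le a\min\{1,b\}$ for $a\ge1$, gives \eqref{eq:comp-min}.

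For (iii), $\partial_l\pi=\pi(1-\pi)\in[0,1/4]$ and $\partial_l r=\pi\in[0,1]$, while the $l$-slot contributes one more unit. The total is $c_A(1+1/4+1)\le 3c_A$. The derivative computation is valid where $y>0$; the case $y=0$ is trivial, since then $\pi\equiv0$ and $r$ does not depend on $l$.

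For (iv), I write $\Psi(\bx_1)-\Psi(\bx_2)=[\Psi(l_1,\bp_1)-\Psi(l_1,\bp_2)]+[\Psi(l_1,\bp_2)-\Psi(l_2,\bp_2)]$. I apply (i) together with the second inequality of (ii) at $j=1$, using $|l_1-l_2|\le\rho_0\le1$ so that $e^{|l_1-l_2|}\le e$; this bounds the first bracket by $2ec_A\rho_0$. I apply (iii) to the second bracket, which gives $3c_A|l_1-l_2|\le 3c_A\rho_0$. One point needs checking: $(l_1,\bp_2)\in\R\times\D$ holds automatically, because $\D$ does not depend on $l$.

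For the H\"older variant, I would redo (i) and (iii) with exponent $\delta$ in each slot. Alternatively, I would interpolate, using the fact that the differences are bounded by $c_A$ times sums of slot increments, each raised to the power $\delta$. The Jensen-type inequality $a^\delta+b^\delta+c^\delta\le 3^{1-\delta}(a+b+c)^\delta$ then produces the stated constant $3^{1-\delta}(2e+3)^\delta$. Part (v) is simply the Lipschitz property in $z$.

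I expect the main obstacle to be the bookkeeping of exact constants, namely the $2$ in (i) and the $e^{2|l-l_2|}$ in \eqref{eq:comp-min}. The structural argument goes through regardless of these values.
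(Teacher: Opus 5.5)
Your overall architecture matches the paper's: increment bounds at fixed $l$, the two shift inequalities for $D_{\bp_1,\bp_2}$, the split through $(l_1,\bp_2)$ in (iv), and the concavity inequality $a^\delta+b^\delta+c^\delta\le 3^{1-\delta}(a+b+c)^\delta$ for the H\"older case. However, the proposal does not prove the lemma with its stated constants, and there are two concrete gaps.

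The first gap is in (i). Your bound $|\pi_1-\pi_2|\le 2D$ is lossy and only yields $3c_A$, and you leave the sharpening open. The missing step is the identity $\pi_1-\pi_2=e^{l}(y_1x_2-x_1y_2)/(w_1w_2)$ together with $y_1x_2-x_1y_2=y_1(x_2-x_1)+x_1(y_1-y_2)$. Since $y_1e^{l}\le w_1$ and $x_1\le w_1$, this gives $|\pi_1-\pi_2|\le(|x_1-x_2|+e^{l}|y_1-y_2|)/w_2\le D$. Combined with your $|\ln w_1-\ln w_2|\le D$, you get $2c_AD$. The paper obtains the same bound from $|\partial_x\pi|\le 1/w$ and $|\partial_y\pi|\le e^{l}/w$ along the segment $\bp_\tau$, on which the affine map $w(l,\cdot)$ stays above $w_1\wedge w_2$. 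The factor matters for the statement. With your constant, the first bracket in (iv) is only bounded by $3ec_A\rho_0$, so you get at best $(3e+\tfrac94)c_A\rho_0$ instead of $(2e+3)c_A\rho_0$. Similarly, you get $3e$ in place of $2e$ in \eqref{eq:comp-min}, and the wrong H\"older constant.

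The second gap is in the derivation of \eqref{eq:comp-min}. Chaining the two shift inequalities with (i) gives $2c_A\min\{1,e^{|l_1-l_2|}e^{2|l-l_2|}\rho_0\}$. Applying $\min\{1,ab\}\le a\min\{1,b\}$ with $a=e^{|l_1-l_2|}$ then leaves the factor $e^{|l_1-l_2|}$. This factor is not bounded by $e$, because nothing in (ii) restricts $|l_1-l_2|$. You need a case split:
\begin{itemize}
\item if $e^{2|l-l_2|}\rho_0\ge1$, the claim follows from the trivial bound $2c_A$;
\item otherwise $\rho_0<1$, and since $|l_1-l_2|$ is a summand of $\rho_0$, you get $e^{|l_1-l_2|}\le e$.
\end{itemize}
You use exactly this observation in (iv), where $\rho_0\le1$ is assumed, but it is missing here.

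With these two repairs, the rest of your argument goes through as in the paper. This includes (iii), where your bound $\pi(1-\pi)\le\tfrac14$ is sharper than needed, and (v).
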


\begin{proof}
(i) The bound by $2c_{A}\ge 2\|A\|_{\infty}$ is trivial, so it suffices to bound the increment by $2c_{A}D_{\bp_{1},\bp_{2}}(l)$. At fixed $l$ we have
\begin{equation} \label{pi_r_derivatives_1}
  \bigg| \frac{\partial \pi}{\partial x} \bigg| = \frac{\pi}{w} \le \frac{1}{w}, \quad
\bigg| \frac{\partial \pi}{\partial y} \bigg| = \frac{e^{l}(1-\pi)}{w} \le \frac{e^{l}}{w}, \quad
\bigg| \frac{\partial r}{\partial x} \bigg| = \frac{1}{w}, \quad
\bigg| \frac{\partial r}{\partial y} \bigg| = \frac{e^{l}}{w}.
\end{equation}
The segment $\bp_{\tau}:=(1-\tau)\bp_{2}+\tau\bp_{1}$, $\tau\in[0,1]$, has nonnegative coordinates and, since $w(l,\cdot)$ is affine, satisfies $w(l,\bp_{\tau})\ge(x_{1}+y_{1}e^{l})\wedge(x_{2}+y_{2}e^{l})>0$; in particular $\bp_{\tau}\in\D$. Hence the mean value theorem along $\bp_{\tau}$ together with the Lipschitz continuity of $A$ in $(\pi,r)$ yields
\[
  \big|\Psi((l,\bp_{1}),z)-\Psi((l,\bp_{2}),z)\big|
\le c_{A}\big(|\pi(l,\bp_{1})-\pi(l,\bp_{2})|+|r(l,\bp_{1})-r(l,\bp_{2})|\big)
\le 2c_{A}\,D_{\bp_{1},\bp_{2}}(l).
\]

(ii) For any $h'\in\R$, the numerator of $D_{\bp_{1},\bp_{2}}(l+h')$ is at most $e^{|h'|}$ times that of $D_{\bp_{1},\bp_{2}}(l)$, and each term $x_{i}+y_{i}e^{l+h'}$ of the denominator is at least $e^{-|h'|}(x_{i}+y_{i}e^{l})$; this proves the first display of \eqref{eq:comp-shift}. For the second, the numerator of $D_{\bp_{1},\bp_{2}}(l_{j})$ is at most $|x_{1}-x_{2}|+e^{l_{1}\vee l_{2}}|y_{1}-y_{2}|$, while $x_{i}+y_{i}e^{l_{j}}\ge e^{-|l_{1}-l_{2}|}(x_{i}+y_{i}e^{l_{i}})\ge e^{-|l_{1}-l_{2}|}\big(w(\bx_{1})\wedge w(\bx_{2})\big)$. For \eqref{eq:comp-min}: if $e^{2|l-l_{2}|}\rho_{0}\ge1$, the minimum equals $1$ and \eqref{eq:comp-min} follows from \eqref{eq:comp-fixed-l}. Otherwise $\rho_{0}<1$, hence $|l_{1}-l_{2}|\le\rho_{0}<1$ (as $|l_{1}-l_{2}|$ is a summand of $\rho_{0}$), and \eqref{eq:comp-shift} gives
$D_{\bp_{1},\bp_{2}}(l)\le e^{2|l-l_{2}|}D_{\bp_{1},\bp_{2}}(l_{2})\le e^{2|l-l_{2}|}e^{|l_{1}-l_{2}|}\rho_{0}\le e\,e^{2|l-l_{2}|}\rho_{0}$,
so \eqref{eq:comp-min} follows from \eqref{eq:comp-fixed-l}.

(iii) At fixed $\bp$, the maps $l\mapsto\pi(l,\bp)$ and $l\mapsto r(l,\bp)$ satisfy 
\begin{equation}  \label{pi_r_derivatives_2}
    \Big|\frac{\partial \pi}{\partial l}\Big|=\pi(1-\pi)\le 1, \quad \Big|\frac{\partial r}{\partial l}\Big|=\pi \le 1,
\end{equation}
so, accounting also for the explicit $l$-dependence of $A$,
$\big|\Psi((l',\bp),z)-\Psi((l'',\bp),z)\big|\le 3 c_{A} |l'-l''|$.

(iv) We recall the notation $\bp_{i}=(x_{i},y_{i})$ for $i \in \{ 1,2 \}$ again. We use \eqref{pi_r_derivatives_1} for $\pi(l_{1}, \bp_{1}) - \pi(l_{1}, \bp_{2})$ and $r(l_{1}, \bp_{1}) - r(l_{1}, \bp_{2})$, and use \eqref{pi_r_derivatives_2} for $\pi(l_{1}, \bp_{2}) - \pi(l_{2}, \bp_{2})$ and $r(l_{1}, \bp_{2}) - r(l_{2}, \bp_{2})$. In addition to using $|l_{1}-l_{2}|\le\rho_{0}\le1$ so that $D_{\bp_{1},\bp_{2}}(l_{1})\le e\,\rho_{0}$, we have 
\[
  |l_{1} - l_{2}| + |\pi(\bx_{1}) - \pi(\bx_{2})| + |r(\bx_{1}) - r(\bx_{2})| \le (2e+3) \rho_{0}(\bx_{1}, \bx_{2}).
\]
Therefore, 
\[
  |\Delta \Psi| \le c_{A} (|\Delta l|^{\delta} + |\Delta \pi|^{\delta} + |\Delta r|^{\delta}) \le 3^{1-\delta}(2e+3)^{\delta} c_{A} \rho_{0}^{\delta}.
\]
The case $\delta = 1$ takes charge of the Lipschitz hypothesis.

(v) This proof is immediate from the Lipschitz continuity of $A$ in $z$.
\end{proof}

\begin{lemma}\label{lem:post_trade}
Let 
\begin{align}
    J_{1}(l,x,y,z) &:= x - [1+\overline\epsilon(e^{l})] e^{l} \theta(l,x,y,z) = x - [1+\overline\epsilon(e^{l})] \bigg[ z \bigg( \frac{x}{1 + \overline{\epsilon}(e^{l})} + y e^{l} \bigg) - y e^{l} \bigg], \label{J1} \\
    J_{2}(l,x,y,z) &:= x - [1-\underline\epsilon(e^{l})] e^{l} \theta(l,x,y,z) = x - [1-\underline\epsilon(e^{l})] \bigg[ z \bigg( \frac{x}{1 + \overline{\epsilon}(e^{l})} + y e^{l} \bigg) - y e^{l} \bigg], \label{J2} \\
    J_{3}(l,x,y,z) &:= y+\theta(l,x,y,z) = z \bigg( \frac{x e^{-l}}{1 + \overline{\epsilon}(e^{l})} + y \bigg), \label{J3} \\
    U_{1}(l,x,y,z) &:= \frac{\cU \big( x + y e^{l} - \overline{\epsilon}(e^{l}) e^{l} \theta(l,x,y,z) \big)}{\cU \big( x + y e^{l} \big)} = \frac{\cU \Big( x + y e^{l} - \overline{\epsilon}(e^{l}) \Big[ z \Big( \frac{x}{1 + \overline{\epsilon}(e^{l})} + y e^{l} \Big) - y e^{l} \Big] \Big)}{\cU \big( x + y e^{l} \big)}, \\
    U_{2}(l,x,y,z) &:= \frac{\cU \big( x + y e^{l} + \underline{\epsilon}(e^{l}) e^{l} \theta(l,x,y,z) \big)}{\cU \big( x + y e^{l} \big)} = \frac{\cU \Big( x + y e^{l} + \underline{\epsilon}(e^{l}) \Big[ z \Big( \frac{x}{1 + \overline{\epsilon}(e^{l})} + y e^{l} \Big) - y e^{l} \Big] \Big)}{\cU \big( x + y e^{l} \big)}.
\end{align}
For any $\bx \in \mathbb{R} \times \D$ and $z\in[0,1]$,
\begin{enumerate}
\item[(i)] the post-trade to pre-trade wealth ratios satisfy
\begin{align*}
\frac{x+ye^{l}-\be(e^{l})e^{l}\theta(l,x,y,z)}{w(\bx)} &= 1-\be q(l,\pi,z)
 \ \in\ \Big[1-\tfrac{\be(1-\pi)}{1+\be},\ 1+\be\pi\Big] \subset \big[(1+\eM)^{-1},\,1+\eM\big], \\
\frac{x+ye^{l}+\ue(e^{l})e^{l}\theta(l,x,y,z)}{w(\bx)} &= 1+\ue q(l,\pi,z)
 \ \in\ \Big[1-\ue\pi,\ 1+\tfrac{\ue(1-\pi)}{1+\be}\Big] \subset \big[1-\eM,\,1+\eM\big];
\end{align*}
in particular, 
\[
    (1-\epsilon_M)w(\mathbf{x}) \le J_i(\mathbf{x},z)+e^lJ_3(\mathbf{x},z) \le (1+\epsilon_M)w(\mathbf{x}), \qquad i\in\{1,2\};
\]

\item[(ii)] $J_{1}(\bx,z), J_{2}(\bx,z), J_{3}(\bx,z) \ge 0$ and
$(J_{i}(\bx,z),J_{3}(\bx,z))\in \D$ for $i \in \{ 1,2 \}$;
\item[(iii)] $\|U_{1}\|_{\infty}\vee\|U_{2}\|_{\infty}\le c_{\cU} :=(1+\eM)^{p}+(1-\eM)^{-p}$;
\item[(iv)] if $\rho_{0}(\bx_{1},\bx_{2})\le 1$, then for $i\in\{1,2\}$,
\begin{equation}
\rho_{0}\big((l_{1},J_{i}(\bx_{1},z),J_{3}(\bx_{1},z)),\,
             (l_{2},J_{i}(\bx_{2},z),J_{3}(\bx_{2},z))\big)
\ \le\ \hat C'\,\rho_{0}(\bx_{1},\bx_{2}); \label{rho_inequality}
\end{equation}
\item[(v)]
$\rho_{0}\big((l,J_{1}(\bx,z),J_{3}(\bx,z)),\,
              (l,J_{2}(\bx,z),J_{3}(\bx,z))\big) \le \hat C'\,|z-b(\bx)|$;
\item[(vi)] If $\rho_0(\mathbf{x}_1,\mathbf{x}_2)\le 1$, then
\begin{equation*}
    \bigl|b(\mathbf{x}_1)-b(\mathbf{x}_2)\bigr| \le \hat{C}'\,\rho_0(\mathbf{x}_1,\mathbf{x}_2);
\end{equation*}
\item[(vii)] If $\rho_0(\mathbf{x}_1,\mathbf{x}_2) \le 1$, then for $i\in\{1,2\}$ and every $z\in[0,1]$,
\begin{equation*}
    \bigl|U_i(\mathbf{x}_1,z)-U_i(\mathbf{x}_2,z)\bigr| \le \hat{C}'\,\rho_0(\mathbf{x}_1,\mathbf{x}_2);
\end{equation*}
\item[(viii)] for $i\in\{1,2\}$, every $\mathbf{x}\in\mathbb{R}\times \D$ and $z,z'\in[0,1]$,
  \begin{equation*}
    \bigl|U_i(\mathbf{x},z)-U_i(\mathbf{x},z')\bigr| \le \hat{C}'\,|z-z'|
  \end{equation*}
\end{enumerate}
for constant $\hat C'>0$.
\end{lemma}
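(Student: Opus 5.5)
Everything reduces to the intrinsic variables $(l,\pi,r)$ of Notation~\ref{not:master} and the identity \eqref{eq:master}, $e^{l}\theta=w(\bx)\,q(l,\pi,z)$, after which Lemma~\ref{lem:composition} supplies the Lipschitz bounds.

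\textbf{Items (i)--(iii).} For (i), substitute \eqref{eq:master}: the two post-trade wealths divided by $w(\bx)$ are $1-\be q$ and $1+\ue q$. Since $q$ is affine in $z$, with range $[-\pi,(1-\pi)/(1+\be)]$, the endpoint ranges follow directly. Bounding $\be,\ue\le\eM$ and $\pi\in[0,1]$ gives the stated enclosures, using $1-\be/(1+\be)=(1+\be)^{-1}\ge(1+\eM)^{-1}$. The "in particular'' claim then holds because $J_1+e^lJ_3$ and $J_2+e^lJ_3$ equal these two wealths. For (ii), $J_3=z(\cdot)\ge0$ is immediate. Expanding $J_1$ shows it equals $(1-z)x+(1+\be)(1-z)ye^l\ge0$, and $J_2\ge J_1$ when $\theta\ge0$; when $\theta<0$, $J_2=x+(1-\ue)e^l|\theta|\ge0$. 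Membership in $\D$ follows from positivity of the post-trade wealth in (i). For (iii), write $\cU(\kappa w)/\cU(w)$ with $\kappa\in[1-\eM,1+\eM]$ and bound it termwise by $\max(\kappa^p,\kappa^{-p},1)\le c_\cU$.

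\textbf{Items (iv)--(viii).} The plan is to express each quantity as $A(l,\pi(\bx),r(\bx),z)$ with $A$ bounded and Lipschitz, and then apply Lemma~\ref{lem:composition}(iv),(v).
\begin{itemize}
\item For (vi), $b=(1+\be)\pi/(1+\be\pi)$ is smooth in $(\be,\pi)$ on a compact set, and $l\mapsto\be(e^l)$ is Lipschitz with constant $K_{\be}$.
\item For (vii), $U_1=\exp\{F(r+\ln(1-\be q))-F(r)\}$, and $\ln(1-\be q)$ is bounded and Lipschitz by (i). Since $F'$ is bounded by \eqref{eq:Fbounds}, the difference $F(r+h)-F(r)$ is bounded by $\|F'\|_\infty|h|$ and Lipschitz in $(r,h)$, so $U_1=A(l,\pi,r,z)$ as required. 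The same argument handles $U_2$.
\item For (viii), apply part (v) of Lemma~\ref{lem:composition}.
\end{itemize}

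\textbf{Item (iv).} This step I expect to be the main obstacle, since $\rho_0$ is not a function of $(l,\pi,r)$ alone. I would compute the post-trade coordinates in normalized form: $J_i(\bx,z)/w(\bx)$ and $e^{l}J_3(\bx,z)/w(\bx)$ are Lipschitz functions $a_i(l,\pi,z)$ and $a_3(l,\pi,z)$. Their denominators $\min(J_i+e^lJ_3)$ are at least $(1-\eM)\min w(\bx_j)$ by (i). For the numerators, I would write
\[
J_i(\bx_1)-J_i(\bx_2)=w_1a_i(\cdot_1)-w_2a_i(\cdot_2)
\]
and bound it by $|w_1-w_2|\,\|a_i\|_\infty+w_2|a_i(\cdot_1)-a_i(\cdot_2)|$. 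Dividing by $\min w$ then requires $|w_1-w_2|/\min w\lesssim\rho_0$ and $w_2/\min w\le e^{\rho_0}\cdot$const; both hold when $\rho_0\le1$, via the argument of Lemma~\ref{lem:composition}(ii). The $y$-term carries the factor $e^{l_1\vee l_2}$, which I would handle by writing $e^{l_1\vee l_2}|J_3^1-J_3^2|$ through $e^{l_j}J_3^j$ at cost $e^{|l_1-l_2|}\le e$, together with Lemma~\ref{lem:composition}(iv) applied to $a_3$.

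\textbf{Item (v).} At fixed $\bx$, the two post-trade points share $l$ and $J_3$, and differ only through $J_1-J_2=-(\be+\ue)e^l\theta=-(\be+\ue)w\,q$. Since $|q|\le|z-b|$, dividing by $\min(J_i+e^lJ_3)\ge(1-\eM)w$ gives (v). Finally, collect all the constants into $\hat C'$.
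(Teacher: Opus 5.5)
Your proposal is correct. Items (i), (iii) and (v)--(viii) follow the paper's proof. The paper also obtains (v) from $|J_1-J_2|=(\ue+\be)e^{l}|\theta|\le 2w(\bx)|z-b(\bx)|$. For (vi)--(viii) it likewise writes $b=B(l,\pi)$ and $U_i=e^{F(r+g_i)-F(r)}$ with $g_i$ Lipschitz in $(l,\pi,z)$, and then invokes \lemref{lem:composition}(iv),(v). Your case split on the sign of $\theta$ in (ii) and your termwise bound in (iii) are harmless variants of the paper's explicit formulas.

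The genuine difference is in (iv).

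\textbf{The paper's route.} It stays in unnormalized coordinates, writing $J_1=(1-z)(x+\varphi(l)y)$, $J_2=(1-z\kappa(l))x+(1-z)\chi(l)y$ and $J_3=z(\psi(l)x+y)$. The mean value theorem in $l$ then bounds the numerator of $\rho_0$ by $\hat C\big(|x_1-x_2|+e^{l_1\vee l_2}|y_1-y_2|+w(\bx_2)|l_1-l_2|\big)$. After that it only needs $w(\bx_2)\le(e+1)\,w(\bx_1)\wedge w(\bx_2)$.

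\textbf{Your route.} You factor $J_i=w\,a_i(l,\pi,z)$ and $e^{l}J_3=w\,a_3(l,\pi,z)$, for instance $a_1=(1-z)(1+\be\pi)$ and $a_3=z\big(\tfrac{1-\pi}{1+\be}+\pi\big)$. You split $w_1a^{1}-w_2a^{2}$ by the product rule and feed the bounded Lipschitz functions $a_i$ into \lemref{lem:composition}(iv). This needs the same comparability of $w(\bx_1)$ and $w(\bx_2)$, plus $|w(\bx_1)-w(\bx_2)|\le C\rho_0\,w(\bx_1)\wedge w(\bx_2)$; both hold when $\rho_0\le 1$.

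\textbf{What each buys.} Your route runs the whole lemma through the single composition lemma and avoids separate derivative bounds for $\varphi,\chi,\psi,\kappa$. The paper's route gives an explicit increment estimate in the original variables.

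\textbf{One step to make precise.} The term $e^{l_1\vee l_2}|J_3^{1}-J_3^{2}|$ is not controlled by a multiplicative factor $e^{|l_1-l_2|}$ alone. Passing to $e^{l_1}J_3^{1}-e^{l_2}J_3^{2}$ produces an additive correction $|e^{l_1}-e^{l_2}|\,J_3^{j}$, where $j$ is the index with the smaller log-price. This is at most $(e-1)|l_1-l_2|\,e^{l_j}J_3^{j}\le(e-1)|l_1-l_2|\,w(\bx_j)$, hence again $O\big(\rho_0\,w(\bx_1)\wedge w(\bx_2)\big)$, so the argument closes.
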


\begin{proof}
(i) By \eqref{eq:master} and direct computation,
\begin{equation} \label{eq:posttrade-wealth}
  \begin{split}
  & J_{1}(\bx,z)+e^{l}J_{3}(\bx,z)=x+ye^{l}-\be e^{l}\theta=w(\bx)\big(1-\be q\big), \\
  & J_{2}(\bx,z)+e^{l}J_{3}(\bx,z)=x+ye^{l}+\ue e^{l}\theta=w(\bx)\big(1+\ue q\big),
  \end{split}
\end{equation}
with $q=q(l,\pi,z)\in\big[-\pi,\frac{1-\pi}{1+\be}\big]$. The stated ranges of $1-\be q$ and $1+\ue q$ follow immediately, and the inclusions follow from
$1-\frac{\be(1-\pi)}{1+\be}\ge1-\frac{\be}{1+\be}=(1+\be)^{-1}\ge(1+\eM)^{-1}$
together with $\be\pi\le\eM$, $\ue\pi\le\eM$, and $\frac{\ue(1-\pi)}{1+\be}\le\eM$.

(ii) From \eqref{J1}, \eqref{J2}, \eqref{J3},
$J_{1}=(1-z)\big(x+(1+\be)ye^{l}\big) \ge 0$,
$J_{3}=z\big(\frac{xe^{-l}}{1+\be}+y\big) \ge 0$, and
$J_{2}=x\big(1-\frac{(1-\ue)z}{1+\be}\big)+(1-\ue)(1-z)ye^{l} \ge 0$
since $\frac{(1-\ue)z}{1+\be} \le 1$. Moreover, $(J_{i}(\bx,z),J_{3}(\bx,z))\in \D$ follows from the identities \eqref{eq:posttrade-wealth} above together with (i), which show that the post-trade total wealth $J_{i}(\bx,z)+e^{l}J_{3}(\bx,z)$ is strictly positive.

(iii) By (i) and the definitions of $U_{1}, U_{2}$, for each $(\bx,z)$ there exists $c\in[1-\eM,1+\eM]$ such that $U_{i}(\bx,z)=\cU(c\,w(\bx))/\cU(w(\bx))$. Since 
\[
    \cU(cw)=1+c^{p}w^{p}+c^{-p}w^{-p}
    \le (1+\eM)^{p}(1+w^{p})+(1-\eM)^{-p}w^{-p}
    \le c_{\cU}\,\cU(w),
\]
the claim follows.

(iv) The numerator of the left-hand side of \eqref{rho_inequality} is
$|J_{i}(\bx_{1},z)-J_{i}(\bx_{2},z)|+e^{l_{1}\vee l_{2}}|J_{3}(\bx_{1},z)-J_{3}(\bx_{2},z)|$.
We first claim that, whenever $|l_{1}-l_{2}|\le1$, this numerator is bounded by
\begin{equation}\label{eq:J-increment}
\hat C\big(|x_{1}-x_{2}|+e^{l_{1}\vee l_{2}}|y_{1}-y_{2}|
+(x_{2}+y_{2}e^{l_{2}})|l_{1}-l_{2}|\big)
\end{equation}
for a constant $\hat C=\hat C(K_{\be},K_{\ue})$. Indeed, by the proof of (ii),
\[
J_{1}=(1-z)\big(x+\varphi(l)y\big),\qquad
J_{2}=\big(1-z\kappa(l)\big)x+(1-z)\chi(l)y,\qquad
J_{3}=z\big(\psi(l)x+y\big),
\]
where $\varphi(l):=[1+\be(e^{l})]e^{l}$, $\chi(l):=[1-\ue(e^{l})]e^{l}$, $\psi(l):=e^{-l}/[1+\be(e^{l})]$, and $\kappa(l):=[1-\ue(e^{l})]/[1+\be(e^{l})]$. By Assumption \ref{Assumption_1},
\[
|\varphi'|\le(2+K_{\be})e^{l},\qquad
|\chi'|\le(2+K_{\ue})e^{l},\qquad
|\psi'|\le(1+K_{\be})e^{-l},\qquad
|\kappa'|\le K_{\ue}+K_{\be}.
\]
Since $|l_{1}-l_{2}|\le1$, every intermediate point $l$ between $l_{1}$ and $l_{2}$ satisfies $e^{l}\le e\,e^{l_{2}}$ and $e^{-l}\le e\,e^{-(l_{1}\wedge l_{2})}$, so the mean value theorem in $l$ gives, for instance,
\begin{align*}
  |J_{1}(\bx_{1},z)-J_{1}(\bx_{2},z)| & \le |x_{1}-x_{2}| + \varphi(l_{1}) |y_{1}-y_{2}|+y_{2}|\varphi(l_{1})-\varphi(l_{2})| \\
  & \le |x_{1}-x_{2}|+2e^{l_{1}\vee l_{2}}|y_{1}-y_{2}|+(2+K_{\be})e\,y_{2}e^{l_{2}}|l_{1}-l_{2}|,
\end{align*}
and using $e^{l_{1}\vee l_{2}}\psi(l_{1})\le e^{|l_{1}-l_{2}|}\le e$ and $e^{l_{1}\vee l_{2}}e^{-(l_{1}\wedge l_{2})}=e^{|l_{1}-l_{2}|}\le e$,
\[
e^{l_{1}\vee l_{2}}|J_{3}(\bx_{1},z)-J_{3}(\bx_{2},z)|
\le e\,|x_{1}-x_{2}|+(1+K_{\be})e^{2}\,x_{2}|l_{1}-l_{2}|+e^{l_{1}\vee l_{2}}|y_{1}-y_{2}|.
\]
The estimate for $J_{2}$ is analogous, which proves the claim.

Next, assume without loss of generality $w(\bx_{1})=w(\bx_{1})\wedge w(\bx_{2})$. When $\rho_{0}:=\rho_{0}(\bx_{1},\bx_{2})\le1$ we have $|l_{1}-l_{2}|\le1$ and hence 
\begin{align*}
  w(\bx_{2}) &= x_{2}+y_{2}e^{l_{2}}
\le x_{1}+y_{1}e^{l_{2}}+|x_{1}-x_{2}|+e^{l_{1}\vee l_{2}}|y_{1}-y_{2}| \\
  & \le e^{|l_{1}-l_{2}|}w(\bx_{1})+\rho_{0}\big(w(\bx_{1})\wedge w(\bx_{2})\big)
\le(e+1)\big(w(\bx_{1})\wedge w(\bx_{2})\big),
\end{align*}
where the second inequality uses $e^{l_{2}}\le e^{|l_{1}-l_{2}|}e^{l_{1}}$ and the definition of $\rho_{0}$, and the third uses $\rho_{0}\le1$. Combining this with \eqref{eq:J-increment} and $|x_{1}-x_{2}|+e^{l_{1}\vee l_{2}}|y_{1}-y_{2}|\le\rho_{0}\big(w(\bx_{1})\wedge w(\bx_{2})\big)$, the numerator of the left-hand side of \eqref{rho_inequality} is bounded by
$C\big(w(\bx_{1})\wedge w(\bx_{2})\big)\rho_{0}(\bx_{1},\bx_{2})$.
The denominator of the left-hand side of \eqref{rho_inequality} is, by \eqref{eq:posttrade-wealth} and (i), greater than or equal to 
$(1-\eM)\big(w(\bx_{1})\wedge w(\bx_{2})\big)$, and the $l$-components are unchanged. Combining the three observations gives (iv).

(v) Note that the two post-trade states share the same $l$ and $J_{3}$, while by
\eqref{eq:master}
\[
|J_{1}(\bx,z)-J_{2}(\bx,z)|
=(\ue+\be)\,e^{l}|\theta(l,x,y,z)|
=(\ue+\be)\,\frac{1+\be\pi}{1+\be}\,w(\bx)\,|z-b(\bx)|
\le 2\,w(\bx)\,|z-b(\bx)|.
\]
Dividing by the denominator, which by \eqref{eq:posttrade-wealth} and (i) is greater than or equal to $(1-\eM)w(\bx)$, gives (v).

(vi) Write $b(\mathbf{x})=B\bigl(l,\pi(\mathbf{x})\bigr)$ with 
\begin{equation*}
  B(l,\pi):=\frac{\bigl(1+\be(e^{l})\bigr)\pi}{1+\be(e^{l})\pi},
  \qquad (l,\pi)\in\mathbb{R}\times[0,1].
\end{equation*}
Then $B$ takes values in $[0,1]$ and
\begin{equation}\label{eq:B-derivatives}
  \frac{\partial B}{\partial \pi} =\frac{1+\be}{(1+\be\pi)^{2}} \in (0,1+\eM] \subseteq (0,2], \qquad 
  \bigg| \frac{\partial B}{\partial l} \bigg| =\bigg| \frac{d}{dl}\be(e^{l}) \bigg|\,\frac{\pi\,(1-B)}{1+\be\pi} \le K_{\be},
\end{equation}
so $B$ is bounded by $1$ and Lipschitz on $\mathbb{R} \times [0,1]$ with constant $\max\{2,K_{\be}\}$. Applying (iv) of \lemref{lem:composition} to $A(l,\pi,r,z):=B(l,\pi)$ proves (vi).

(vii), (viii) By \eqref{eq:posttrade-wealth}, the post-trade to pre-trade wealth ratios appearing in $U_1,U_2$ are
$1-\be q$ and $1+\ue q$ respectively, where $q=q(l,\pi,z)$, and by (i) both are bounded away from $0$ by $(1+\eM)^{-1}\wedge(1-\eM)=1-\eM$.
Hence, setting
\begin{equation*}
  g_1:=\ln\bigl(1-\be q\bigr),
  \qquad
  g_2:=\ln\bigl(1+\ue q\bigr),
\end{equation*}
both take values in the compact interval
$I_{\eM}:=\bigl[\ln(1-\eM),\,\ln(1+\eM)\bigr]$,
and with $r=r(\bx)=\ln w(\bx)$ we may factor
\[
  U_i(\bx,z)=e^{F(r+g_i)-F(r)},\qquad i\in\{1,2\},
\]
where, as before, $F(r)=\ln \cU(e^{r})$. Since $\|F'\|_\infty\le p$, the map $(r,g)\mapsto e^{F(r+g)-F(r)}$ is bounded by $c_{\cU}$ on $\R\times I_{\eM}$ (cf.\ (iii)) and has partial derivatives
$\bigl(F'(r+g)-F'(r)\bigr)e^{F(r+g)-F(r)}$ and $F'(r+g)\,e^{F(r+g)-F(r)}$, both bounded by
$2p\,c_{\cU}$; hence it is Lipschitz on $\R\times I_{\eM}$ with constant $2p\,c_{\cU}$.
It remains to bound the partial derivatives of $(l,\pi,z)\mapsto g_i$. From
$q(l,\pi,z)=\frac{(1+\be\pi)(z-B(l,\pi))}{1+\be}$, $|z-B|\le 1$,
$1+\be\pi\le 2$, $1+\be\ge 1$, and the bounds \eqref{eq:B-derivatives}, we have 
\begin{align*}
  \bigg| \frac{\partial q}{\partial z} \bigg|
    &=\frac{1+\be\pi}{1+\be}\le 2, \qquad \qquad
  \bigg| \frac{\partial q}{\partial \pi} \bigg|
    \le \frac{1}{1+\be} \bigg( \be\,|z-B|+(1+\be\pi) \bigg| \frac{\partial B}{\partial \pi} \bigg| \bigg)
     \le 5,\\
  \bigg| \frac{\partial q}{\partial l} \bigg|
    & \le \frac{1}{1+\be} \bigg( \pi K_{\be} |z-B|+(1+\be\pi) \bigg| \frac{\partial B}{\partial l} \bigg| \bigg)
       +\frac{(1+\be\pi)|z-B|\,K_{\be}}{(1+\be)^{2}}
     \le K_{\be}+2K_{\be}+2K_{\be}=5K_{\be}.
\end{align*}
Since $|\be q|\vee|\ue q|\le \eM$ (as $|q|\le 1$ by \eqref{eq:master}) and, by (i),
the arguments $1-\be q$ and $1+\ue q$ of the logarithm are bounded away from $0$
by $1-\eM$, the chain rule yields
\[
  \bigg| \frac{\partial g_i}{\partial z} \bigg|
  +\bigg| \frac{\partial g_i}{\partial \pi} \bigg|
  +\bigg| \frac{\partial g_i}{\partial l} \bigg|
  \;\le\; c \big( \eM,K_{\ue},K_{\be} \big)<\infty,
  \qquad i\in\{1,2\},
\]
where in $\frac{\partial g_i}{\partial l}$ the explicit $l$-dependence of $\be(e^{l})$, $\ue(e^{l})$ contributes at most $(K_{\be}\vee K_{\ue})(1-\eM)^{-1}$. Consequently, $U_i(\bx,z)=A_i(l,\pi(\bx),r(\bx),z)$ for functions $A_1,A_2$ that are bounded by $c_{\cU}$ and Lipschitz on $\R\times[0,1]\times\R\times[0,1]$, with constant $c'=c'(p,\eM,K_{\ue},K_{\be})$. Items (vii) and (viii) now follow from (iv) and (v) of \lemref{lem:composition} applied to $A_{i}$ with $c_{A}=c_{\cU}\vee c'$. Enlarging $\hat{C}'$ so that it dominates all of the above constants completes the proof.
\end{proof}

\begin{lemma}[Coefficient estimates]    \label{lem:coeff}
    Recall $\tilde\mu,\gamma$ from \eqref{eq:mu_tilde}, \eqref{eq:gamma_define}, Assumption \ref{U_assumption}. There exists a constant $C_*>0$, depending only on the market coefficient bounds and Lipschitz constants, $\epsilon_M$, $p$, $\delta_U$, and the bounded H\"older norm of $r\mapsto U(e^r)/\mathcal{U}(e^r)$, such that for all $t',t''\in[0,T]$, $l,l',l'' \in \R$, and $\bx_{1}, \bx_{2} \in \R \times \D$ with $\bp_{i}=(x_{i},y_{i})$, 
    \begin{enumerate}
        \item[(i)] $\|\tilde\mu\|_{\infty}+\|\gamma\|_{\infty}\le C_{*}$;
        \item[(ii)] $|\tilde\mu(t',l',\bp_{1})-\tilde\mu(t'',l'',\bp_{1})|
                    +|\gamma(t',l',\bp_{1})-\gamma(t'',l'',\bp_{1})|
                    \le C_{*}(|t'-t''|+|l'-l''|)$;
        \item[(iii)] for $h\in\{\tilde\mu,\gamma\}$,
        \begin{equation}\label{eq:param}
            |h(t,l,\bp_{1})-h(t,l,\bp_{2})| \ \le\ C_{*}\,\min\!\big\{1,\ e^{2|l-l_{2}|}\rho_{0}(\bx_{1},\bx_{2})\big\}.
        \end{equation}
        \item[(iv)] $\|g\|_{\infty}\le C_{*}$ and 
        \begin{equation}\label{eq:terminal-ratio}
            |g(\bx_{1})-g(\bx_{2})|\ \le\ C_{*}\,[\rho_{0}(\bx_{1},\bx_{2})]^{\delta_{U}} \qquad \text{whenever } \rho_{0}(\bx_{1},\bx_{2})\le 1.
        \end{equation}
        In particular, $[g]_{\sn} < \infty$.
    \end{enumerate}
\end{lemma}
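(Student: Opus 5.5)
The four items are proved by direct computation, always working in the intrinsic variables $(l,\pi,r)$ of Notation~\ref{not:master} so that Lemma~\ref{lem:composition} and Lemma~\ref{lem:post_trade} can be invoked. The key observation is that with $w=x+ye^{l}$ and $\pi = ye^{l}/w$ the coefficients factor as
\[
\frac{ye^{l}\cU'(w)}{\cU(w)} = \pi\,F'(r), \qquad \frac{(ye^{l})^{2}\cU''(w)}{\cU(w)} = \pi^{2}\big(F''(r)+F'(r)^{2}-F'(r)\big),
\]
using $w\cU'(w)/\cU(w)=F'(r)$ and $w^{2}\cU''/\cU = F''+(F')^{2}-F'$. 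Hence
\[
\tilde\mu = \mu-\tfrac{\sigma^2}{2}+\sigma^2\pi F'(r),\qquad
\gamma = \tfrac{\sigma^2}{2}\big[\pi F'(r)+\pi^2(F''+F'^2-F')(r)\big]+\big(\mu-\tfrac{\sigma^2}{2}\big)\pi F'(r).
\]

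\emph{Item (i).} Boundedness follows immediately from \eqref{eq:Fbounds}, the bound $\pi\in[0,1]$, and the boundedness of $\mu$ and $\sigma$.

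\emph{Item (ii).} Fix $\bp$. The map $t\mapsto h$ is Lipschitz because only $\mu(t,e^{l})$ and $\sigma(t,e^{l})$ depend on $t$. For the $l$-dependence, both the market coefficients and $(\pi,r)$ vary with $l$; by \eqref{pi_r_derivatives_2}, $|\partial_l\pi|,|\partial_l r|\le1$. Combined with the bounds on $F'$, $F''$, $F'''$ from \eqref{eq:Fbounds}, the chain rule yields a uniform Lipschitz constant.

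\emph{Item (iii).} Write $h(t,l,\bp)=A(l,\pi,r)$ at fixed $t$, where $A$ is bounded and Lipschitz in $(\pi,r)$ by \eqref{eq:Fbounds}. Applying \eqref{eq:comp-min} from Lemma~\ref{lem:composition} gives the bound $\min\{1,e^{2|l-l_2|}\rho_0\}$. Formally, Lemma~\ref{lem:composition} requires Lipschitz dependence in $l$ as well, but part (i)/(ii) only uses the $(\pi,r)$-Lipschitz property at fixed $l$. Alternatively, one can truncate by noting that the explicit $l$-dependence through $\mu,\sigma$ is bounded Lipschitz by assumption.

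\emph{Item (iv).} This is the main step. Let $G(s):=U(e^{s})/\cU(e^{s})$, which is bounded and $\delta_U$-H\"older by Assumption~\ref{U_assumption}. The liquidated wealth is $x+(1-\ue)ye^{l}=w(1-\ue\pi)$, so
\[
g(\bx)=G\big(r+\ln(1-\ue(e^{l})\pi)\big)\,\exp\!\big(F(r+\ln(1-\ue\pi))-F(r)\big).
\]
Since $1-\ue\pi\ge1-\eM$, the shift $\ln(1-\ue\pi)$ lies in a compact interval and is Lipschitz in $(l,\pi)$, using the Lipschitz constant $K_{\ue}$. The exponential factor is bounded and Lipschitz, exactly as in the proof of Lemma~\ref{lem:post_trade}(vii), while the $G$-factor is bounded and $\delta_U$-H\"older in $(l,\pi,r)$. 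The product of a bounded $\delta_U$-H\"older function with a bounded Lipschitz function is bounded and H\"older on increments at most $1$, which gives a function $A(l,\pi,r)$ that is bounded and $\delta_U$-H\"older in each argument. The H\"older version of Lemma~\ref{lem:composition}(iv) then yields \eqref{eq:terminal-ratio} whenever $\rho_0\le1$.

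Finally, $[g]_{\sn}<\infty$ follows because $g$ is independent of $t$, and $\rho_0\le\rho\le1$ together with $\delta\le\delta_U$ gives $\rho_0^{\delta_U}\le\rho^{\delta}$. The only delicate point is checking that the H\"older product estimate is uniform over all of $\R\times\D$, including regions where $w$ is small or large. This uniformity holds precisely because every quantity has been expressed through $r$ with globally bounded derivatives of $F$.
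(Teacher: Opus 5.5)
Your proposal is correct and follows the paper's proof essentially step by step. It uses the same rewriting of $\tilde\mu,\gamma$ through $\pi F'(r)$ and $\pi^{2}(F''-F'+F'^{2})$, Lemma~\ref{lem:composition}(iii) and \eqref{eq:comp-min} for (ii)--(iii), and the same factorization $g=G(r+\zeta(l,\pi))\,e^{F(r+\zeta)-F(r)}$ with the H\"older version of Lemma~\ref{lem:composition}(iv) for (iv). Your caveat about explicit $l$-dependence in (iii) is harmless: the paper fixes $(t,l)$, lets the $a_0$ term cancel, and applies \eqref{eq:comp-min} only to the $l$-free factors $H_j(\pi,r)$.
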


\begin{proof}
With $r:=\ln w$ and $\pi$, the identities $w\cU'(w)/\cU(w)=F'(r)$ and $w^{2}\cU''(w)/\cU(w)=F''(r)-F'(r)+F'(r)^{2}$ give
\[
    \tilde\mu=\mu-\tfrac{\sigma^{2}}{2}+\sigma^{2}\,\pi F'(r), \qquad \gamma=\tfrac{\sigma^{2}}{2}\big[\pi F'(r)+\pi^{2}\big(F''(r)-F'(r)+F'(r)^{2}\big)\big]+\big(\mu-\tfrac{\sigma^{2}}{2}\big)\pi F'(r),
\]
where $\mu,\sigma$ are evaluated at $(t,e^{l})$.
Thus each coefficient is a finite sum of terms of the form $a(t,l)H(\pi,r)$, where $a$ is bounded and Lipschitz continuous in $(t,l)$ and $H$ is smooth on $[0,1]\times\mathbb{R}$ with bounded first derivatives from \eqref{eq:Fbounds}:
\begin{equation}\label{eq:Hbounds}
    \|H\|_{\infty} + \bigg\| \frac{\partial H}{\partial \pi} \bigg\|_{\infty} + \bigg\| \frac{\partial H}{\partial r} \bigg\|_{\infty} \le c(p) < \infty.
\end{equation}

(i) It is straightforward due to \eqref{eq:Hbounds}.

(ii) Apply Lemma~\ref{lem:composition}(iii) to each function $H_j$ in the finite-sum representation. Combining the resulting bounds with the boundedness and Lipschitz continuity of the coefficients $a_j$, and summing over the finitely many terms, gives (ii).

(iii) Fix $(t,l)$ and write $h(t,l,\mathbf{p})=a_0(t,l)+\sum_{j=1}^{2}a_j(t,l)H_j(\pi,r)$, where $a_{0}, a_{1}, a_{2}$ are bounded and Lipschitz continuous in $(t,l)$ and $H_{1}, H_{2}$ are smooth on $[0,1]\times\mathbb{R}$ with bounded first derivatives. The term $a_0$ cancels when comparing $\mathbf{p}_1$ and $\mathbf{p}_2$. Applying \eqref{eq:comp-min} to each $H_j$ and summing the resulting estimates proves \eqref{eq:param}.

(iv) Since $x+[1-\ue(e^{l})]ye^{l} = w(\bx)\big(1-\ue(e^{l})\pi\big)$ with $1-\ue(e^{l})\pi\in[1-\eM,1]$, setting $\zeta(l,\pi):=\ln\big(1-\ue(e^{l})\pi\big)\in[\ln(1-\eM),0]$ we may factor
\[
    g(\bx)=\Phi\big(r+\zeta(l,\pi)\big)\,e^{\,F(r+\zeta(l,\pi))-F(r)}, \qquad \Phi(\varrho):=\frac{U(e^{\varrho})}{\cU(e^{\varrho})},
\]
with $r=\ln w(\bx)$ and $\pi=\pi(\bx)$ as before. We record the bounds on the three ingredients. First, Assumption \ref{U_assumption} precisely states that 
\[
    \| \Phi \|_{\infty} + [ \Phi ]_{\delta_{U}} < \infty, \qquad \text{where} \quad [ \Phi ]_{\delta_{U}} := \sup_{\varrho_{1} \neq \varrho_{2}} \frac{|\Phi(\varrho_{1}) - \Phi(\varrho_{2})|}{|\varrho_{1} - \varrho_{2}|^{\delta_{U}}}.
\]
Second, the map $(r,\zeta)\mapsto F(r+\zeta)-F(r)$ takes values in
$[-p|\zeta|, p|\zeta|]\subseteq[p\ln(1-\eM),\,-p\ln(1-\eM)]$ and has partial
derivatives $F'(r+\zeta)-F'(r)$ and $F'(r+\zeta)$, bounded by $2p$; hence
$(r,\zeta)\mapsto e^{F(r+\zeta)-F(r)}$ is Lipschitz and bounded by $(1-\eM)^{-p}$. Third, since $1-\ue\pi\ge1-\eM$,
\[
  \bigg| \frac{\partial \zeta}{\partial \pi} \bigg| = \frac{\ue}{1- \ue \pi}\le\frac{\eM}{1-\eM}, \qquad \bigg| \frac{\partial \zeta}{\partial l} \bigg| = \frac{\pi\,\big|\frac{d}{dl}\ue(e^{l})\big|}{1-\ue\pi} \le \frac{K_{\ue}}{1-\eM}.
\]
Consequently $g(\bx)=A(l,\pi,r)$ for a function $A$ that is bounded and $\delta_{U}$-H\"older continuous on $\R \times[0,1] \times \R$, with constant $c'$ depending only on $p, \, \eM, \, \| \Phi \|_{\infty} + [ \Phi ]_{\delta_{U}}, \, K_{\ue}$. Now (iv) of \lemref{lem:composition}, applied to $(l,\pi,r,z)\mapsto A(l,\pi,r)$ with $c_{A}=\|g\|_{\infty}\vee c'$, yields \eqref{eq:terminal-ratio}. Finally, $[g]_{\sn}<\infty$ follows since $g$ does not depend on $t$ and, $\rho\le 1$ implies $|g(\bx_{1})-g(\bx_{2})|\le C_{*}\rho_{0}^{\delta_{U}}\le C_{*}\rho^{\delta_{U}} \le C_{*}\rho^{\delta}$.
\end{proof}

\begin{lemma}[From the seminorm to moduli of continuity]\label{lem:seminorm}
Let $f : [0,T] \times \R \times \D \to \R$ satisfy
$\|f\|_{\infty}<\infty$ and $[f]_{\sn}<\infty$. Then:
\begin{enumerate}
\item[(i)] for all pairs $P_{1},P_{2} \in [0,T] \times \R \times \D$,
\begin{equation}\label{eq:patch}
|f(P_{1})-f(P_{2})|\ \le \big([f]_{\sn}+2\|f\|_{\infty}\big)\,
\min\{1,\rho(P_{1},P_{2})\}^{\delta};
\end{equation}
\item[(ii)] for any $(x,y) \in \D$, $f(\cdot,\cdot,x,y) \in C^{\frac{\delta}{2},\delta}([0,T] \times \R)$, with 
\[
\sup_{(x,y)\in \D}\,[f(\cdot,\cdot,x,y)]_{\frac\delta2,\delta}
\ \le\ [f]_{\sn}+2\|f\|_{\infty}.
\]
\end{enumerate}
\end{lemma}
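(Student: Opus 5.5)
For (i), I will split into two cases according to the size of $\rho:=\rho(P_1,P_2)$. If $0<\rho\le 1$, the definition of $[f]_{\sn}$ gives directly
\[
|f(P_1)-f(P_2)|\le [f]_{\sn}\,\rho^{\delta}=[f]_{\sn}\min\{1,\rho\}^{\delta}.
\]
If $\rho=0$, then $t_1=t_2$, $l_1=l_2$, $x_1=x_2$, and $y_1=y_2$, because $e^{l_1\vee l_2}>0$ and the denominator in $\rho_0$ is finite. Hence $P_1=P_2$ and there is nothing to prove. If $\rho>1$, then $\min\{1,\rho\}^{\delta}=1$, and the trivial bound $|f(P_1)-f(P_2)|\le 2\|f\|_\infty$ suffices. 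In all cases the increment is at most $([f]_{\sn}+2\|f\|_\infty)\min\{1,\rho\}^\delta$, which is \eqref{eq:patch}.

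For (ii), I fix $(x,y)\in\D$ and take $P_i=(t_i,l_i,x,y)$. Since $x_1=x_2$ and $y_1=y_2$, the fraction in $\rho_0$ vanishes, so $\rho(P_1,P_2)=\sqrt{|t_1-t_2|}+|l_1-l_2|$. I then use the elementary inequality $\min\{1,a+b\}^\delta\le a^\delta+b^\delta$ for $a,b\ge0$ and $\delta\in(0,1]$. This follows from $\min\{1,a+b\}^\delta\le (a+b)^\delta$ and the subadditivity of the concave map $s\mapsto s^\delta$ on $[0,\infty)$.

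Combining this with (i) gives
\[
|f(t_1,l_1,x,y)-f(t_2,l_2,x,y)|\le \big([f]_{\sn}+2\|f\|_\infty\big)\big(|t_1-t_2|^{\delta/2}+|l_1-l_2|^{\delta}\big).
\]
This is exactly the bound $[f(\cdot,\cdot,x,y)]_{\frac\delta2,\delta}\le[f]_{\sn}+2\|f\|_\infty$. The constant does not depend on $(x,y)$, so taking the supremum over $\D$ proves (ii). Boundedness of $f(\cdot,\cdot,x,y)$ is inherited from $\|f\|_\infty<\infty$, so $f(\cdot,\cdot,x,y)\in C^{\frac\delta2,\delta}([0,T]\times\R)$.

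None of these steps should cause difficulty. The only point needing care is the degenerate case $\rho=0$ in (i), which lies outside the supremum defining $[f]_{\sn}$ and has to be treated separately as above.
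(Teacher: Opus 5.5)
Your proof is correct and follows the paper's own argument. For (i) you use the same case split, $\rho\le 1$ via the definition of $[f]_{\sn}$ and $\rho>1$ via $2\|f\|_\infty$; for (ii) you use the same reduction $\rho=\sqrt{|t_1-t_2|}+|l_1-l_2|$ at fixed $(x,y)$ together with subadditivity of $s\mapsto s^\delta$. The differences are cosmetic: you handle the degenerate case $\rho=0$ explicitly, and you deduce (ii) from (i) via $\min\{1,a+b\}^\delta\le a^\delta+b^\delta$ instead of repeating the case split.
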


\begin{proof}
(i) If $\rho\le1$ use the definition of $[\cdot]_{\sn}$, otherwise
$|f(P_{1})-f(P_{2})|\le2\|f\|_{\infty}=2\|f\|_{\infty}\min\{1,\rho\}^{\delta}$.

(ii) Fix $(x,y)$ and two points $(t_{1},l_{1}),(t_{2},l_{2})$.  Since the
weighted numerator in $\rho_{0}$ vanishes for equal $(x,y)$,
$\rho=\sqrt{|t_{1}-t_{2}|}+|l_{1}-l_{2}|$ exactly.
If $\rho\le1$, subadditivity $(a+b)^{\delta}\le a^{\delta}+b^{\delta}$ gives
$|\Delta f|\le[f]_{\sn}\,\rho^{\delta}
\le[f]_{\sn}(|t_{1}-t_{2}|^{\delta/2}+|l_{1}-l_{2}|^{\delta})$.
If $\rho>1$, then
$|\Delta f|\le2\|f\|_{\infty}\le2\|f\|_{\infty}
(|t_{1}-t_{2}|^{\delta/2}+|l_{1}-l_{2}|^{\delta})$.
\end{proof}

\begin{lemma}\label{lem:random}
Let $L^{(t,\bx)}$ be given by \eqref{L_process}.
\begin{enumerate}
\item[(i)] For any $k \ge 0$,
\[
\sup_{(t,\bx)\in[0,T]\times\mathbb{R}\times \D}
\mathbb{E}\bigg[\exp\Big(k\sup_{s\in[t,T]}\big|L^{(t,\bx)}_{s}-l\big|\Big)\bigg]
=:C_{k}<\infty,
\]
with $C_{k}$ depending only on $k,\|\tilde\mu\|_{\infty},\|\sigma\|_{\infty},T$.
\item[(ii)] Suppose $\rho((t_{1},\bx_{1}),(t_{2},\bx_{2})) \le 1$ with $t_{1} \le t_{2}$. Then for any $u \in [t_{2},T]$,
\[
\mathbb{E}\Big[\min\big\{1,\
\rho_{0}\big((L^{1}_{u},x_{1},y_{1}),(L^{2}_{u},x_{2},y_{2})\big)
\big\}^{\delta}\Big] \le C\,\rho((t_{1},\bx_{1}),(t_{2},\bx_{2}))^{\delta},
\]
where $C$ depends only on $\delta$, the constant of \eqref{L_expectation_Holder} and $C_{4}$ from (i). Consequently, for $f$ as in \lemref{lem:seminorm}, 
\begin{equation}\label{eq:randomf}
\mathbb{E}\big|f(u,L^{1}_{u},x_{1},y_{1})-f(u,L^{2}_{u},x_{2},y_{2})\big| \le C\big([f]_{\sn}+2\|f\|_{\infty}\big)\,
\rho((t_{1},\bx_{1}),(t_{2},\bx_{2}))^{\delta}.
\end{equation}
\end{enumerate}
\end{lemma}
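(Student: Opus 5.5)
For part (i), I would derive a pathwise exponential-moment bound from the SDE \eqref{L_process}. Write $L^{(t,\bx)}_s-l=\int_t^s\tilde\mu\,da+M_s$ with $M_s:=\int_t^s\sigma(a,e^{L_a})\,dB_a$. By (i) of \lemref{lem:coeff}, the drift part is bounded by $C_*T$. Hence
\[
\sup_{s\in[t,T]}|L^{(t,\bx)}_s-l|\le C_*T+\sup_{s\in[t,T]}|M_s|.
\]
The martingale $M$ is continuous with quadratic variation at most $\|\sigma\|_\infty^2T$. I would bound $\exp(k\sup|M|)\le \exp(k\sup M)+\exp(k\sup(-M))$ and treat each term the same way. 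Since $e^{kM}$ is a nonnegative submartingale, Doob's $L^2$ inequality gives $\mathbb E[\sup e^{kM}]\le 2\,\mathbb E[e^{2kM_T}]^{1/2}$. Then Novikov, or equivalently the exponential martingale $e^{2kM-2k^2\langle M\rangle}$, gives $\mathbb E[e^{2kM_T}]\le e^{2k^2\|\sigma\|_\infty^2T}$. These constants depend only on $k,\|\tilde\mu\|_\infty,\|\sigma\|_\infty,T$ and are uniform in $(t,\bx)$.

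For part (ii), I would bound the random quantity $\rho_0$ at the two terminal points. Set $\Delta_u:=|L^1_u-L^2_u|$; note that $L^i_u$ is evaluated at the same time $u$ for both processes. By the second bound in \eqref{eq:comp-shift},
\[
\rho_0\big((L^1_u,\bp_1),(L^2_u,\bp_2)\big)\le \Delta_u+e^{\Delta_u}D_{\bp_1,\bp_2}(L^2_u).
\]
Applying the first bound in \eqref{eq:comp-shift} with $h'=L^2_u-l_2$, and then the second bound with $j=2$, gives
\[
D_{\bp_1,\bp_2}(L^2_u)\le e^{2|L^2_u-l_2|}\,e^{|l_1-l_2|}\rho_0(\bx_1,\bx_2)\le e^{1+2|L^2_u-l_2|}\rho.
\]
The last step uses $|l_1-l_2|\le\rho\le1$.

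It remains to control the factor $e^{\Delta_u}$ inside $\min\{1,\cdot\}$.
\begin{itemize}
\item On $\{\Delta_u\le1\}$, $\min\{1,\rho_0\}\le \Delta_u+e^{2+2|L^2_u-l_2|}\rho$.
\item On $\{\Delta_u>1\}$, $\min\{1,\rho_0\}\le 1\le\Delta_u$.
\end{itemize}
So in both cases $\min\{1,\rho_0\}\le \Delta_u+e^{2+2|L^2_u-l_2|}\rho$.

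Raise this to the power $\delta$ and use subadditivity of $x\mapsto x^\delta$. Then use Jensen's inequality ($\delta\le1$) twice: once with \eqref{L_expectation_Holder}, which gives $\mathbb E[\Delta_u^\delta]\le(C\rho)^\delta$, and once with (i), which gives $\mathbb E[e^{2\delta|L^2_u-l_2|}]\le C_2$. This yields $C\rho^\delta$, with $C$ depending only on $\delta$, the constant of \eqref{L_expectation_Holder}, and a moment from (i); the constant $C_4$ dominates $C_2$ by Jensen, which accounts for $C_4$ in the statement.

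Finally, \eqref{eq:randomf} follows from \eqref{eq:patch} in \lemref{lem:seminorm}, applied at the points $P_i=(u,L^i_u,x_i,y_i)$ with equal times. For these points $\rho(P_1,P_2)=\rho_0\big((L^1_u,\bp_1),(L^2_u,\bp_2)\big)$. Taking expectations and using the estimate just proved gives the claim.

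The main obstacle is the factor $e^{\Delta_u}$ together with the fact that $L^2_u$ may drift far from $l_2$. The truncation by $\min\{1,\cdot\}$ handles the first, and the uniform exponential moments from (i) handle the second.
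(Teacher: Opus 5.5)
Your proof is correct and follows the paper's overall structure. For (i) you use the same drift-plus-martingale decomposition. For (ii) you dominate the random $\rho_{0}$ by $|L^{1}_{u}-L^{2}_{u}|$ plus an exponential factor times $\rho$, then apply Jensen, \eqref{L_expectation_Holder}, the moments from (i), and finally \eqref{eq:patch}. The differences are technical.

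In (i), the paper uses the Gaussian tail bound $\mathbb{P}(\sup|M|\ge a)\le 2e^{-a^{2}/(2\|\sigma\|_{\infty}^{2}T)}$ and the tail-integration formula. You instead use Doob's $L^{2}$ inequality for the submartingales $e^{\pm kM}$ together with the exponential martingale, which gives the explicit constant $4e^{k\|\tilde\mu\|_{\infty}T+k^{2}\|\sigma\|_{\infty}^{2}T}$.

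In (ii), the paper compares each $L^{i}_{u}$ with its own starting point via $\Xi_{u}=|L^{1}_{u}-l_{1}|+|L^{2}_{u}-l_{2}|$. This yields $\rho_{0}^{\mathrm{rand}}\le|L^{1}_{u}-L^{2}_{u}|+e^{2\Xi_{u}}\rho_{0}$ without any truncation, but then needs Cauchy--Schwarz and $C_{4}$ to control $\mathbb{E}\,e^{2\Xi_{u}}$. You compare both processes at the common level $L^{2}_{u}$ and split on $\{\Delta_{u}\le 1\}$, using the truncation $\min\{1,\cdot\}$ to neutralize $e^{\Delta_{u}}$. So a single-process moment ($C_{2}$) suffices and Cauchy--Schwarz is not needed, at the cost of a little casework.

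One justification needs fixing. The inequality $\rho_{0}\big((L^{1}_{u},\bp_{1}),(L^{2}_{u},\bp_{2})\big)\le\Delta_{u}+e^{\Delta_{u}}D_{\bp_{1},\bp_{2}}(L^{2}_{u})$ does not follow from the second bound in \eqref{eq:comp-shift}. That bound goes the other way: it controls $D$ by $e^{|l_{1}-l_{2}|}\rho_{0}$, not $\rho_{0}$ by $D$. Your inequality is still true by a direct two-case check:
\begin{itemize}
\item If $L^{1}_{u}\ge L^{2}_{u}$, the numerator grows by at most a factor $e^{\Delta_{u}}$, while $x_{1}+y_{1}e^{L^{1}_{u}}\ge x_{1}+y_{1}e^{L^{2}_{u}}$.
\item If $L^{1}_{u}<L^{2}_{u}$, the numerator is unchanged and $x_{1}+y_{1}e^{L^{1}_{u}}\ge e^{-\Delta_{u}}\big(x_{1}+y_{1}e^{L^{2}_{u}}\big)$.
\end{itemize}
Even the cruder factor $e^{2\Delta_{u}}$ would suffice for your argument, since on $\{\Delta_{u}\le1\}$ it only changes the constant.
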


\begin{proof}
(i) Write $L_{s}-l=\int_{t}^{s}\tilde\mu\,da+M_{s}$ with
$M_{s}:=\int_{t}^{s}\sigma(a,e^{L_{a}})\,dB_{a}$, so $\sup_{s \in [t,T]}|L_{s}-l|\le\|\tilde\mu\|_{\infty}T+\sup_{s \in [t,T]}|M_{s}|$. Since $\langle M\rangle_{T}\le\|\sigma\|_{\infty}^{2}T$ a.s., the time-change of exponential martingale inequality gives $\mathbb{P}(\sup_{s \in [t,T]}|M_{s}|\ge a) \le 2\exp(-a^{2}/(2\|\sigma\|_{\infty}^{2}T))$ for $a \ge 0$. Hence, by the tail-integration formula,
\begin{equation} \label{Tail_integration}
    \mathbb{E}\Big[\exp\Big(k\sup_{s \in [t,T]}|M_{s}|\Big)\Big] = 1 + k\int_{0}^{\infty}e^{ka}\,\mathbb{P}\Big(\sup_{s \in [t,T]}|M_{s}|\ge a\Big)\,da \le 1 + 2k\int_{0}^{\infty}e^{ka-a^{2}/(2\|\sigma\|_{\infty}^{2}T)}\,da<\infty,
\end{equation}
and combining with the drift bound gives $C_{k}\le e^{k\|\tilde\mu\|_{\infty}T}\big(1 + 2k\int_{0}^{\infty}e^{ka-a^{2}/(2\|\sigma\|_{\infty}^{2}T)}da\big)<\infty$; note that $\|\tilde\mu\|_{\infty}\le C_{*}$ by (i) of \lemref{lem:coeff}.

(ii) Set $\Xi_{u}:=|L^{1}_{u}-l_{1}|+|L^{2}_{u}-l_{2}|$ and denote by 
\[
\rho_{0}^{\mathrm{rand}}(u)
:=\rho_{0}\big((L^{1}_{u},x_{1},y_{1}),(L^{2}_{u},x_{2},y_{2})\big)
=\big|L^{1}_{u}-L^{2}_{u}\big|
+\frac{|x_{1}-x_{2}|+e^{L^{1}_{u}\vee L^{2}_{u}}|y_{1}-y_{2}|}
       {\big(x_{1}+y_{1}e^{L^{1}_{u}}\big)\wedge\big(x_{2}+y_{2}e^{L^{2}_{u}}\big)}.
\]
Since $e^{L^{1}_{u}\vee L^{2}_{u}}\le e^{l_{1}\vee l_{2}}e^{\Xi_{u}}$ and
$x_{i}+y_{i}e^{L^{i}_{u}}\ge e^{-\Xi_{u}}\,w(\bx_{i})$,
\[
\rho_{0}^{\mathrm{rand}}
\ \le\ |L^{1}_{u}-L^{2}_{u}|
+e^{2\Xi_{u}}\,\rho_{0}(\bx_{1},\bx_{2}).
\]
Using $\min\{1,r\}^{\delta}\le r^{\delta}$, Jensen's inequality
($\delta<1$), \eqref{L_expectation_Holder}, Cauchy-Schwarz, and (i) of this lemma (take $k=4$), we get 
\[
\mathbb{E}\big[\min\{1,\rho_{0}^{\mathrm{rand}}\}^{\delta}\big]
\le\big(\mathbb{E}\,\rho_{0}^{\mathrm{rand}}\big)^{\delta}
\le\Big(\mathbb{E}|L^{1}_{u}-L^{2}_{u}|
        +\rho_{0}\,\mathbb{E}\,e^{2\Xi_{u}}\Big)^{\delta}
\le\big(C\rho+C_{4}\rho_{0}\big)^{\delta}
\le C\rho^{\delta}.
\]
The display \eqref{eq:randomf} follows by combining with \eqref{eq:patch}
at the equal-time pair $(u,L^{1}_{u},x_{1},y_{1})$ and
$(u,L^{2}_{u},x_{2},y_{2})$.
\end{proof}

\begin{lemma}\label{lem:pathfun}
Assume $\rho\le1$ and $0 \le t_{1} \le t_{2} \le T$. Then there is $C_{\alpha}>0$ (depending on $\alpha$) with
\begin{equation}\label{eq:weights}
    \sup_{u\in[t_{2},T]} \mathbb{E}\big|\mathcal{E}_{1}(u)-\mathcal{E}_{2}(u)\big| \le C_{\alpha}\,\rho^{\delta}.
\end{equation}
In particular, $[w_{1}]_{\sn}<\infty$.
\end{lemma}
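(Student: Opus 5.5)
We must prove $\sup_{u\in[t_2,T]}\mathbb{E}|\mathcal{E}_1(u)-\mathcal{E}_2(u)|\le C_\alpha\rho^\delta$, and deduce $[w_1]_{\sn}<\infty$. The plan is to split the difference into an initial-time piece and an integrand piece. Write $\beta_i(a):=-\alpha-\lambda(a,e^{L^i_a})+\gamma(a,L^i_a,x_i,y_i)$. Then
\[
\mathcal{E}_1(u)=\exp\Big(\int_{t_1}^{t_2}\beta_1\,da\Big)\exp\Big(\int_{t_2}^{u}\beta_1\,da\Big),\qquad \mathcal{E}_2(u)=\exp\Big(\int_{t_2}^{u}\beta_2\,da\Big).
\]
Since $|\beta_i|\le\alpha+\Lambda+C_*$ by Lemma~\ref{lem:coeff}(i), all exponentials are bounded by $e^{(\alpha+\Lambda+C_*)T}$. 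We use $|e^a-e^b|\le e^{\max(a,b)}|a-b|$ on each piece.

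\textbf{First piece.} This is $|e^{\int_{t_1}^{t_2}\beta_1}-1|\le C_\alpha|t_2-t_1|\le C_\alpha\rho^2\le C_\alpha\rho^\delta$, since $\rho\le1$.

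\textbf{Second piece.} It suffices to bound $\int_{t_2}^{u}\mathbb{E}|\beta_1(a)-\beta_2(a)|\,da$. The $\alpha$ terms cancel. The $\lambda$ term is Lipschitz in $l$, so it is bounded by $C\,\mathbb{E}|L^1_a-L^2_a|\le C\rho$ by \eqref{L_expectation_Holder}. For the $\gamma$ term, insert $\gamma(a,L^2_a,x_1,y_1)$:
\begin{itemize}
\item The change in $l$ at fixed $\bp_1$ is at most $C_*|L^1_a-L^2_a|$ by Lemma~\ref{lem:coeff}(ii).
\item The change in $\bp$ at fixed $l=L^2_a$ is at most $C_*\min\{1,\rho_0((L^1_a,\bp_1),(L^2_a,\bp_2))\}$. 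This follows by applying \eqref{eq:param} at the pair of random points $(L^1_a,\bp_1),(L^2_a,\bp_2)$ with $l=L^2_a$, so that the weight $e^{2|l-l_2|}$ becomes $1$.
\end{itemize}
Since $\min\{1,r\}\le\min\{1,r\}^\delta$, Lemma~\ref{lem:random}(ii) bounds the expectation of the latter term by $C\rho^\delta$. Integrating over $[t_2,u]\subset[0,T]$ gives the second piece $\le C_\alpha\rho^\delta$, uniformly in $u$, which proves \eqref{eq:weights}.

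\textbf{Consequence for $w_1$.} Recall $w_1(t,\bx)=\mathbb{E}[\mathcal{E}(T)e^{\alpha T}g(L_T,x,y)]$. Then
\[
|w_1(t_1,\bx_1)-w_1(t_2,\bx_2)|\le e^{\alpha T}\Big(\|g\|_\infty\,\mathbb{E}|\mathcal{E}_1(T)-\mathcal{E}_2(T)|+\sup|\mathcal{E}_1|\;\mathbb{E}|g(L^1_T,\bp_1)-g(L^2_T,\bp_2)|\Big).
\]
The first term is handled by \eqref{eq:weights}. For the second, $g$ is time-independent with $[g]_{\sn}<\infty$ by Lemma~\ref{lem:coeff}(iv), so it is controlled by \eqref{eq:randomf} with $f=g$ and $u=T$. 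Together with $\|g\|_\infty\le C_*$, this gives $[w_1]_{\sn}\le C_\alpha$.

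\textbf{Main obstacle.} The delicate step is the $\bp$-increment of $\gamma$, where the weight $e^{2|l-l_2|}$ in \eqref{eq:param} must not explode. Evaluating at $l=L^2_a$ reduces it to $\rho_0$ of the random pair. The growth of that random $\rho_0$ is then absorbed through the exponential moments in Lemma~\ref{lem:random}.
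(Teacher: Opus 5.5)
Your proof is correct and follows the paper's argument. It uses the same split into a time-shift term of order $|t_1-t_2|\le\rho^2$ and an integrand term. It splits $\Delta\gamma_a$ into an $l$-increment, handled by Lemma~\ref{lem:coeff}(ii) and \eqref{L_expectation_Holder}, and a parameter increment. It then deduces $[w_1]_{\sn}<\infty$ via \eqref{eq:randomf} in the same way.

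The only variation is in the parameter increment. You apply \eqref{eq:param} at the random pair $(L^1_a,\bp_1),(L^2_a,\bp_2)$, so the weight becomes $1$, and then invoke Lemma~\ref{lem:random}(ii). The paper instead uses the deterministic pair, keeps the weight $e^{2|L^2_a-l_2|}$, and absorbs it with the exponential moments of Lemma~\ref{lem:random}(i). That gives $C\rho$ rather than your $C\rho^\delta$ for this term, and either bound suffices.
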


\begin{proof}
First we easily have $0<\mathcal{E}_{i}(u)\le e^{\|\gamma\|_{\infty}T}$ for $u\in[t_{i},T]$. The integrand $\mathcal{E}_{i}$ of exponent term is bounded by $\alpha+\Lambda+\|\gamma\|_{\infty}$ so 
\[
|\ln(\mathcal{E}_{1})(u)-\ln(\mathcal{E}_{2})(u)|\le
\underbrace{(\alpha+\Lambda+\|\gamma\|_{\infty})\,|t_{1}-t_{2}|}_{\text{time shift}}
+\int_{t_{2}}^{u}\big(|\Delta\lambda_{a}|+|\Delta\gamma_{a}|\big)da,
\]
and the elementary inequality $|e^{a}-e^{b}|\le e^{a \vee b}|a-b|$ together with $\ln(\mathcal{E}_{i})(u)\le\|\gamma\|_{\infty}T$ give
\[
    \big|\mathcal{E}_{1}(u)-\mathcal{E}_{2}(u)\big| \le e^{\|\gamma\|_{\infty}T}\Big( (\alpha+\Lambda+\|\gamma\|_{\infty})\,|t_{1}-t_{2}| + \int_{t_{2}}^{u}\big(|\Delta\lambda_{a}|+|\Delta\gamma_{a}|\big)da\Big),
\]
where
\[
    \Delta\gamma_{a} =\underbrace{\gamma(a,L^{1}_{a},p_{1})-\gamma(a,L^{2}_{a},p_{1})}_{=:\,\Delta\gamma_{a}^{(l)}} +\underbrace{\gamma(a,L^{2}_{a},p_{1})-\gamma(a,L^{2}_{a},p_{2})}_{=:\,\Delta\gamma_{a}^{(\mathrm{param})}}.
\]
The upper bound of time shift term is $C_{\alpha}\rho^{2}\le C_{\alpha}\rho^{\delta}$. By Assumption \ref{Assumption_1} and (ii) of \lemref{lem:coeff}, $|\Delta\lambda_{a}|+|\Delta\gamma_{a}|^{(l\text{-part})} \le C|L^{1}_{a}-L^{2}_{a}|$, while by (iii) of \lemref{lem:coeff}, $|\Delta\gamma_{a}|^{(\text{param})} \le C_{*}\min\{1,e^{2|L^{2}_{a}-l_{2}|}\rho_{0}\}$. Taking expectations, the first contribution is $\le C\rho$ by \eqref{L_expectation_Holder}; the second is, by (i) of \lemref{lem:random}, $\le C_{*}\rho_{0}\,\mathbb{E}\,e^{2|L^{2}_{a}-l_{2}|}\le C\rho$. Since $\rho\le\rho^{\delta}$ for $\rho\le1$, \eqref{eq:weights} follows.

Finally, we check $[w_{1}]_{\sn}<\infty$. The representation of $w_{1}$ gives 
\begin{align*}
    |w_{1}(t_{1},\bx_{1})-w_{1}(t_{2},\bx_{2})| & \le e^{\alpha T} \|g\|_{\infty}\,\mathbb{E}\big|\mathcal{E}_{1}(T)-\mathcal{E}_{2}(T)\big| +e^{\alpha T + \|\gamma\|_{\infty}T}\,\mathbb{E}\big|g(L^{1}_{T},x_{1},y_{1})-g(L^{2}_{T},x_{2},y_{2})\big| \\
    & \le e^{\alpha T}C_{*}C_{\alpha}\,\rho^{\delta} + e^{(\alpha+\|\gamma\|_{\infty})T}C\big([g]_{\sn}+2\|g\|_{\infty}\big)\rho^{\delta}
\end{align*}
where the last inequality is derived from \eqref{eq:weights}, (iv) of \lemref{lem:coeff}, and \eqref{eq:randomf} applied to the time-independent function $g$, whose seminorm is finite by \eqref{eq:terminal-ratio}. Hence $[w_{1}]_{\sn}<\infty$.
\end{proof}

\begin{lemma}\label{lem:patchedK}
    Let $f \in C_{bc}([0,T] \times \R \times \D)$ with $[f]_{\sn} < \infty$. Then $\|\sup_{z \in [0,1]}K(f)\|_{\infty}\le c_{\cU}\|f\|_{\infty}$ and for all $(t_{1},\bx_{1}),(t_{2},\bx_{2}) \in [0,T] \times \R \times \D$,
    \begin{equation}\label{eq:patchedK}
        \bigg|\sup_{z\in[0,1]}K(f)(t_{1},\bx_{1},z) -\sup_{z\in[0,1]}K(f)(t_{2},\bx_{2},z)\bigg| \le C\big([f]_{\sn}+2\|f\|_{\infty}\big) \min\{1,\rho\}^{\delta}
    \end{equation}
    for a constant $C > 0$ independent of $f$. In particular, the conclusion holds for $f = w_{n}$, $n \in \N$, and for $f = \hat{w}$.
\end{lemma}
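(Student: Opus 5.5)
The sup-norm bound comes directly from \lemref{lem:post_trade}(iii). For every $z$, $K(f)(t,\bx,z)$ equals $f$ evaluated at a post-trade state times $U_1$ or $U_2$, and both of these are bounded by $c_{\cU}$. Hence $|\sup_z K(f)|\le c_{\cU}\|f\|_\infty$.

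For \eqref{eq:patchedK}, two reductions come first.
\begin{itemize}
\item If $\rho>1$, the difference is at most $2c_{\cU}\|f\|_\infty$, which is of the required form.
\item For $\rho\le1$, I use $|\sup_z a(z)-\sup_z b(z)|\le\sup_z|a(z)-b(z)|$. It then suffices to bound $|K(f)(t_1,\bx_1,z)-K(f)(t_2,\bx_2,z)|$ uniformly in $z\in[0,1]$.
\end{itemize}

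The difficulty is that $K$ is defined piecewise, and the switching threshold $b(\bx_i)$ depends on the point. I would handle this with a patching device. Define the two branches on all of $[0,1]$:
\[
K_1(t,\bx,z):=f\big(t,l,J_1(\bx,z),J_3(\bx,z)\big)U_1(\bx,z),\qquad
K_2(t,\bx,z):=f\big(t,l,J_2(\bx,z),J_3(\bx,z)\big)U_2(\bx,z).
\]
By \lemref{lem:post_trade}(ii) both are well defined on $[0,1]$. Then $K(f)=K_1$ on $\{z\ge b\}$ and $K(f)=K_2$ on $\{z<b\}$.

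\textbf{Step 1 (each branch separately).} Write $f(P_1)U(\bx_1)-f(P_2)U(\bx_2)=(f(P_1)-f(P_2))U(\bx_1)+f(P_2)(U(\bx_1)-U(\bx_2))$. The first term is controlled by \eqref{eq:patch}, with the post-trade distance $\rho\le\sqrt{|t_1-t_2|}+\hat C'\rho_0$ supplied by \lemref{lem:post_trade}(iv). The second term is controlled by \lemref{lem:post_trade}(vii). Together these give $|K_i(t_1,\bx_1,z)-K_i(t_2,\bx_2,z)|\le C([f]_{\sn}+2\|f\|_\infty)\rho^\delta$, using $\rho\le\rho^\delta$ for $\rho\le1$.

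\textbf{Step 2 (the two branches near the threshold).} At a common point, \lemref{lem:post_trade}(v) and \eqref{eq:patch} bound the $f$-factor difference by $([f]_{\sn}+2\|f\|_\infty)(\hat C'|z-b(\bx)|)^\delta$. For the weight factors, $U_1$ and $U_2$ are expressed through $q(l,\pi,z)$ with $q=0$ at $z=b$, and both equal $1$ there. Each is Lipschitz in $z$ by \lemref{lem:post_trade}(viii), so $|U_1-U_2|\le 2\hat C'|z-b|$. Combining, $|K_1-K_2|(t,\bx,z)\le C([f]_{\sn}+2\|f\|_\infty)|z-b(\bx)|^\delta$ whenever $|z-b|\le1$, which always holds.

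\textbf{Step 3 (the mismatched case).} It remains to treat $z$ lying between $b(\bx_1)$ and $b(\bx_2)$, so that the two points use different branches. Say $z\ge b(\bx_1)$ and $z<b(\bx_2)$. I would split
\[
K_1(1)-K_2(2)=\big[K_1(1)-K_1(2)\big]+\big[K_1(2)-K_2(2)\big],
\]
where $(i)$ abbreviates $(t_i,\bx_i,z)$. The first bracket is Step 1. The second is Step 2 at $\bx_2$, with $|z-b(\bx_2)|\le|b(\bx_1)-b(\bx_2)|\le\hat C'\rho_0$ by \lemref{lem:post_trade}(vi). The opposite configuration is symmetric.

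Step 3 is the main obstacle, and it rests entirely on the continuity at $z=b$ established in Step 2. Concretely, this requires checking that the post-trade states coincide at $z=b$ (zero trade, so $J_1=J_2=x$), which is exactly what (v) encodes. The constant $C$ depends only on $\hat C'$, $c_{\cU}$ and $\delta$, not on $f$. The final clause of the lemma then follows from Step 2 of \lemref{lem:value_regularity}, where $[w_n]_{\sn}<\infty$ and $[\hat w]_{\sn}<\infty$ were shown and all these functions lie in $C_{bc}$.
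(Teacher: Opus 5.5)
Your proposal is correct and is essentially the paper's proof. You use the same reduction to $\rho\le 1$ and to $\sup_z|\cdot|$, and the same three-way split according to the position of $z$ relative to $b(\bx_1)$ and $b(\bx_2)$. You also handle the mismatched region the same way, via (v), (vi), (viii) of Lemma~\ref{lem:post_trade} together with $U_1(\bx,b(\bx))=U_2(\bx,b(\bx))=1$.

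The only difference is bookkeeping. You isolate the branch-switch bound $|K_1-K_2|(t,\bx,z)\le C([f]_{\sn}+2\|f\|_\infty)|z-b(\bx)|^\delta$ as a standalone estimate and then telescope through $K_1(t_2,\bx_2,z)$. The paper instead writes the cross term as one three-term decomposition, inserting $U_1(\bx_1,b(\bx_1))$ and $U_2(\bx_2,b(\bx_2))$ directly.
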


\begin{proof}
The bound $\|\sup_{z \in [0,1]}K(f)\|_{\infty}\le c_{\cU}\|f\|_{\infty}$ is immediate from (iii) of \lemref{lem:post_trade}. For \eqref{eq:patchedK}, if $\rho := \rho((t_{1},\bx_{1}),(t_{2},\bx_{2})) > 1$, then the left-hand side is at most $2c_{\cU}\|f\|_{\infty} \le C ([f]_{\sn}+2\|f\|_{\infty}) \min\{1,\rho\}^{\delta}$ for $C \ge c_{\cU}$, so we may and do assume $\rho \le 1$, hence also $\rho_{0} := \rho_{0}(\bx_{1},\bx_{2}) \le 1$. Assume moreover, without loss of generality, $b(\mathbf{x}_{1}) \le b(\mathbf{x}_{2})$; the other case is treated symmetrically. Then
\begin{align}
  & \bigg| \sup_{z \in [0,1]} K(f)(t_{1}, \mathbf{x}_{1}, z) - \sup_{z \in [0,1]} K(f)(t_{2}, \mathbf{x}_{2}, z) \bigg| \nonumber \\
  & \le \sup_{z \in [0,1]} \bigg| f \big( t_{1},l_{1}, J_{1}(\mathbf{x}_{1},z), J_{3}(\mathbf{x}_{1},z) \big) U_{1}(\mathbf{x}_{1},z) \mathbf{1}_{ \{ z \ge b(\mathbf{x}_{1}) \} }  \nonumber \\
  & \qquad \quad + f \big( t_{1},l_{1}, J_{2}(\mathbf{x}_{1},z), J_{3}(\mathbf{x}_{1},z) \big) U_{2}(\mathbf{x}_{1},z) \mathbf{1}_{ \{ z < b(\mathbf{x}_{1}) \} } \nonumber \\
  & \qquad \quad - f \big( t_{2},l_{2}, J_{1}(\mathbf{x}_{2},z), J_{3}(\mathbf{x}_{2},z) \big) U_{1}(\mathbf{x}_{2},z) \mathbf{1}_{ \{ z \ge b(\mathbf{x}_{2}) \} }  \nonumber \\
  & \qquad \quad - f \big( t_{2},l_{2}, J_{2}(\mathbf{x}_{2},z), J_{3}(\mathbf{x}_{2},z) \big) U_{2}(\mathbf{x}_{2},z) \mathbf{1}_{ \{ z < b(\mathbf{x}_{2}) \} } \bigg| \nonumber \\
  & \le \sup_{z \in [0,1]} \bigg| f \big( t_{1},l_{1}, J_{1}(\mathbf{x}_{1},z), J_{3}(\mathbf{x}_{1},z) \big) [U_{1}(\mathbf{x}_{1},z) - U_{1}(\mathbf{x}_{2},z)] \nonumber \\
  & \quad + [f \big( t_{1},l_{1}, J_{1}(\mathbf{x}_{1},z), J_{3}(\mathbf{x}_{1},z) \big) - f \big( t_{2},l_{2}, J_{1}(\mathbf{x}_{2},z), J_{3}(\mathbf{x}_{2},z) \big)] U_{1}(\mathbf{x}_{2},z) \bigg| \mathbf{1}_{ \{ z \ge b(\mathbf{x}_{2}) \} } \nonumber \\
  & \quad + \sup_{z \in [0,1]} \bigg| f \big( t_{1},l_{1}, J_{2}(\mathbf{x}_{1},z), J_{3}(\mathbf{x}_{1},z) \big) [U_{2}(\mathbf{x}_{1},z) - U_{2}(\mathbf{x}_{2},z)] \nonumber \\
  & \quad + [f \big( t_{1},l_{1}, J_{2}(\mathbf{x}_{1},z), J_{3}(\mathbf{x}_{1},z) \big) - f \big( t_{2},l_{2}, J_{2}(\mathbf{x}_{2},z), J_{3}(\mathbf{x}_{2},z) \big)] U_{2}(\mathbf{x}_{2},z) \bigg| \mathbf{1}_{ \{ z < b(\mathbf{x}_{1}) \} } \nonumber \\
  & \quad + \sup_{z \in [0,1]} \bigg| \big[ f \big( t_{1},l_{1}, J_{1}(\mathbf{x}_{1},z), J_{3}(\mathbf{x}_{1},z) \big) - f \big( t_{2},l_{2}, J_{1}(\mathbf{x}_{2},z), J_{3}(\mathbf{x}_{2},z) \big) \big] U_{1}(\mathbf{x}_{1},z) \nonumber \\
  & \quad + f \big( t_{2},l_{2}, J_{1}(\mathbf{x}_{2},z), J_{3}(\mathbf{x}_{2},z) \big) [U_{1}(\mathbf{x}_{1},z) - U_{1}(\mathbf{x}_{1},b(\mathbf{x}_{1})) + U_{2}(\mathbf{x}_{2},b(\mathbf{x}_{2})) - U_{2}(\mathbf{x}_{2},z)] \nonumber \\
  & \quad + \big[ f \big( t_{2},l_{2}, J_{1}(\mathbf{x}_{2},z), J_{3}(\mathbf{x}_{2},z) \big) - f \big( t_{2},l_{2}, J_{2}(\mathbf{x}_{2},z), J_{3}(\mathbf{x}_{2},z) \big) \big] U_{2}(\mathbf{x}_{2},z) \bigg| \mathbf{1}_{ \{ b(\mathbf{x}_{1}) \le z < b(\mathbf{x}_{2}) \} }. \label{K_Holder}
\end{align}
Note that the second bracket of the third supremum in \eqref{K_Holder} rests on the identity $U_{1}(\bx,b(\bx))=U_{2}(\bx,b(\bx))=1$ for every $\bx$, which holds because $\theta(l,x,y,b(\bx))=0$.

We now estimate the three supremum in \eqref{K_Holder} separately, writing $c_{f}:=[f]_{\sn}+2\|f\|_{\infty}$ and $\hat{C}'$ for the constant of \lemref{lem:post_trade}. On the regions $\{z \ge b(\mathbf{x}_{2})\}$ and $\{z < b(\mathbf{x}_{1})\}$: by (vii) of \lemref{lem:post_trade}, $|U_{i}(\mathbf{x}_{1},z)-U_{i}(\mathbf{x}_{2},z)| \le \hat{C}'\rho_{0} \le \hat{C}'\rho^{\delta}$ (using $\rho_{0} \le \rho \le \rho^{\delta}$ for $\rho \le 1$); by (iv) of \lemref{lem:post_trade}, the post-trade states satisfy
\[
\rho\big((t_{1},l_{1},J_{i}(\mathbf{x}_{1},z),J_{3}(\mathbf{x}_{1},z)),\,(t_{2},l_{2},J_{i}(\mathbf{x}_{2},z),J_{3}(\mathbf{x}_{2},z))\big)
\le \sqrt{|t_{1}-t_{2}|}+\hat{C}'\rho_{0} \le (1 \vee \hat{C}')\rho,
\]
so \eqref{eq:patch} gives $|f(t_{1},l_{1},J_{i}(\mathbf{x}_{1},z),J_{3}(\mathbf{x}_{1},z))-f(t_{2},l_{2},J_{i}(\mathbf{x}_{2},z),J_{3}(\mathbf{x}_{2},z))| \le (1 \vee \hat{C}')^{\delta} c_{f}\rho^{\delta}$. Together with $\|f\|_{\infty} \le c_{f}$ and $\|U_{i}\|_{\infty} \le c_{\cU}$ from (iii) of \lemref{lem:post_trade}, each of the first two supremum in \eqref{K_Holder} is bounded by $Cc_{f}\rho^{\delta}$.

On the middle region $\{b(\mathbf{x}_{1}) \le z < b(\mathbf{x}_{2})\}$: here $|z-b(\mathbf{x}_{i})| \le b(\mathbf{x}_{2})-b(\mathbf{x}_{1}) \le \hat{C}'\rho_{0}$ for $i \in \{1,2\}$ by (vi) of \lemref{lem:post_trade}, so (viii) of \lemref{lem:post_trade} gives
\[
|U_{1}(\mathbf{x}_{1},z)-U_{1}(\mathbf{x}_{1},b(\mathbf{x}_{1}))|+|U_{2}(\mathbf{x}_{2},b(\mathbf{x}_{2}))-U_{2}(\mathbf{x}_{2},z)| \le 2(\hat{C}')^{2}\rho_{0} \le 2(\hat{C}')^{2}\rho^{\delta}.
\]
The first bracket of the third supremum in \eqref{K_Holder} is estimated exactly as above via (iv) of \lemref{lem:post_trade} and \eqref{eq:patch}. For the last bracket, (v) of \lemref{lem:post_trade} gives 
\[
  \rho_{0}\big((l_{2},J_{1}(\mathbf{x}_{2},z),J_{3}(\mathbf{x}_{2},z)),(l_{2},J_{2}(\mathbf{x}_{2},z),J_{3}(\mathbf{x}_{2},z))\big) \le \hat{C}'|z-b(\mathbf{x}_{2})| \le (\hat{C}')^{2}\rho_{0},
\]
so \eqref{eq:patch} (at equal times $t_{2}$) yields 
\[
|f(t_{2},l_{2},J_{1}(\mathbf{x}_{2},z),J_{3}(\mathbf{x}_{2},z))-f(t_{2},l_{2},J_{2}(\mathbf{x}_{2},z),J_{3}(\mathbf{x}_{2},z))| \le (1 \vee \hat{C}')^{2\delta} c_{f}\rho^{\delta}.
\]
Combining the three regions and enlarging $C$ proves \eqref{eq:patchedK}.
\end{proof}

\begin{lemma}[Uniform square integrability] \label{lem:uniform-integrability}
    For every $(t,l,x,y)\in[0,T]\times\mathbb{R}\times\mathbb{D}$, 
    \begin{equation} \label{Square_integrable}
        \sup_{N\in\mathcal{N}(t,l,x,y)} \mathbb{E}_{t,l,x,y} \bigg[ \sup_{u \in [t,T]} \cU^{2} \big( W_{u}^{N} \big) \bigg] < \infty
    \end{equation}
    where $W_u^{N}$ denotes the total wealth under the admissible strategy $N$. In particular, the estimate \eqref{Square_integrable} holds for the original initial state.
\end{lemma}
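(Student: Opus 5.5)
The plan is to dominate $\sup_{u\in[t,T]}W_u^N$ and $\sup_{u\in[t,T]}(W_u^N)^{-1}$ pathwise by quantities that do not depend on $N$, and then to use exponential moments of the log-price. Since $\cU^2(w)\le 3(1+w^{2p}+w^{-2p})$, it suffices to show
\[
\sup_{N}\mathbb{E}_{t,l,x,y}\Big[\sup_{u\in[t,T]}(W^N_u)^{2p}+\sup_{u\in[t,T]}(W^N_u)^{-2p}\Big]<\infty .
\]

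\textbf{Step 1: wealth between trades and at trades.} Between consecutive jump times the holdings $(W^{(0)},H)$ are frozen, so for $u\in[\kappa_n,\kappa_{n+1})$ we have $W_u=W^{(0)}_{\kappa_n}+H_{\kappa_n}e^{L_u}$. This gives the two-sided bound
\[
e^{-|L_u-L_{\kappa_n}|}\,W_{\kappa_n}\le W_u\le e^{|L_u-L_{\kappa_n}|}\,W_{\kappa_n}.
\]
At a jump time $\kappa_n$, Lemma~\ref{lem:post_trade}(i) applies with $z=Z_{\kappa_n}\in[0,1]$. It shows that the post-trade total wealth $J_i+e^lJ_3$ lies in $[(1-\eM)W_{\kappa_n-},(1+\eM)W_{\kappa_n-}]$.

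\textbf{Step 2: pathwise bound uniform in $N$.} Iterate Step 1 over the jumps in $[t,u]$. Let $M:=\sup_{s\in[t,T]}|L_s-l|$, so each inter-jump log-increment is bounded by $2M$. This yields
\[
x e^{-l}\wedge\cdots:\qquad (1-\eM)^{P_T-P_t}e^{-2M(P_T-P_t+1)}\,w_0\;\le\;W_u\;\le\;(1+\eM)^{P_T-P_t}e^{2M(P_T-P_t+1)}\,w_0,
\]
where $w_0=x+ye^l$. A cleaner version is to compare directly with the starting point: $W_u\le e^{|L_u-l|}\cdot(\text{sum of positions})$. However, the factor coming from the number of jumps is unavoidable, since each trade may cost up to a factor $1\pm\eM$. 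The right-hand side does not depend on $N$, because $P$ and $L$ are uncontrolled: $L$ is the log-price under the original dynamics \eqref{eq:logstock}, and $P$ is driven by $\lambda(\cdot,S)$.

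\textbf{Step 3: moments.} Raising the bound to the power $\pm 2p$ and applying Cauchy--Schwarz separates the problem into two pieces.

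\emph{Jump factor.} We need $\mathbb{E}[c^{\,P_T-P_t}]$ for $c=(1-\eM)^{-4p}$ or $(1+\eM)^{4p}$. Conditionally on $B$, the count $P_T-P_t$ is Poisson with parameter $A_T-A_t\le\Lambda T$, so $\mathbb{E}[c^{P_T-P_t}]\le e^{\Lambda T(c-1)}$.

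\emph{Interaction term.} The factor $e^{4pM(P_T-P_t+1)}$ couples the two sources. I would rather avoid that coupling. The better route is to note that the inter-jump log-increments $|L_u-L_{\kappa_n}|$ sum to at most $\int$-type variation, not $2M$ each. So use instead the multiplicative structure
\[
W_u\le (1+\eM)^{P_u-P_t}\Big(\sup_{s\in[t,u]}\tfrac{S_s}{S_t}\vee 1\Big)\cdots,
\]
obtained by noting that the value of any no-leverage portfolio is at most its initial value times the running maximum ratio of the asset. More precisely, $W_u\le w_0\max(1,\sup_s S_s/S_t)$ times the cost factor, and symmetrically $W_u\ge w_0\min(1,\inf_s S_s/S_t)$ times $(1-\eM)^{P}$. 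This holds because a long-only portfolio between trades is a convex combination of cash and stock, and rebalancing does not change this comparison. Then $\mathbb{E}[e^{\pm 4pM}]<\infty$ follows as in Lemma~\ref{lem:random}(i), since $\mu$ and $\sigma$ are bounded.

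\textbf{Main obstacle.} The crux is justifying the running-max/min comparison across trades rigorously, that is, showing $W_u/W_t$ lies between $(1-\eM)^{P}\inf(S/S_t\wedge1)$ and $(1+\eM)^{P}\sup(S/S_t\vee1)$. I would attempt this by induction on jumps using the convex-combination form $W_u=W_{\kappa_n}\big((1-\pi)+\pi S_u/S_{\kappa_n}\big)$. If the induction breaks, because ratios relative to $S_{\kappa_n}$ rather than $S_t$ appear, I would fall back on the cruder bound of Step 2. For that bound I would control $\mathbb{E}[e^{aM P}]$ by conditioning on $B$, using independence of the Poisson clock $\overline P$, which gives $\mathbb{E}[\exp(\Lambda T(e^{aM}-1))]$. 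This quantity is not finite in general, which is why the refined comparison is needed.
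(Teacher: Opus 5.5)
Your proposal has a genuine gap. Everything hinges on the ``running-max/min comparison'' $W_u\le w_0\max(1,\sup_s S_s/S_t)\,(1+\eM)^{P_u-P_t}$ and its lower-bound analogue. This comparison is false, not merely unproved, because a long-only portfolio that rebalances can compound price oscillations.

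To see this, start in cash at $S_t=1$. Use the admissible feedback rule ``buy as many shares as possible at an arrival with $S\le\beta$, sell everything at an arrival with $S\ge1$.'' Each round trip multiplies wealth by at least $(1-\eM)/((1+\eM)\beta)$. This exceeds the allowance $(1+\eM)^2$ for two arrivals once $\beta<(1-\eM)/(1+\eM)^3$. Now consider an event of positive probability with three properties: the only arrivals in $[t,T]$ alternate between prices $\le\beta$ and prices in $[1,1+\eta]$; there are enough round trips; and the price never exceeds $1+\eta$. On that event your inequality fails. The reversed rule (buy at $S\ge1/\beta$, sell at $S\le1$) breaks the lower bound in the same way. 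So the induction you sketch cannot succeed.

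More generally, any pathwise bound that is uniform in $N$ and expressed through the uncontrolled pair $(L,P)$ must dominate the best perfectly timed rebalancing along each path. The valid bounds of this kind fall short for different reasons:
\begin{itemize}
\item Your Step~2 estimate is not integrable, as you correctly note.
\item The bound $W_u\le w_0\exp\big(\sum_n|L_{\kappa_{n+1}\wedge u}-L_{\kappa_n\vee t}|\big)$ is valid, but using it requires a separate and delicate exponential-moment estimate under the state-dependent Cox clock, which the proposal does not supply.
\end{itemize}
Neither branch therefore yields \eqref{Square_integrable}.

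The missing idea is to obtain uniformity in $N$ from the controlled wealth dynamics rather than from a pathwise comparison. By \eqref{SDE_W0} and \eqref{eq:W1dyn}, total wealth solves $dW_s=\pi_{s-}W_{s-}\big(\mu\,ds+\sigma\,dB_s\big)-\big(\be e^{L_s}N_s^++\ue e^{L_s}N_s^-\big)dP_s$. Here $\pi_{s-}=H_{s-}e^{L_s}/W_{s-}\in[0,1]$, and the coefficients are evaluated at $(s,e^{L_s})$. Hence
\[
W_s=W_t\exp\Big(\int_t^s\big(\pi_{u-}\mu-\tfrac12\pi_{u-}^2\sigma^2\big)du+\int_t^s\pi_{u-}\sigma\,dB_u+\sum_{t<u\le s,\,\Delta P_u=1}\ln\frac{W_u}{W_{u-}}\Big),
\]
where every jump ratio $W_u/W_{u-}$ lies in $[1-\eM,1]$. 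Trading never raises mark-to-market wealth, so your factor $1+\eM$ is superfluous. Because borrowing and short-selling are prohibited, the martingale $M_s:=\int_t^s\pi_{u-}\sigma\,dB_u$ satisfies $\langle M\rangle_T\le\|\sigma\|_\infty^2(T-t)$ for \emph{every} admissible $N$. The exponential martingale inequality then gives Gaussian tails for $\sup_{u}|M_u|$ uniformly in $N$.

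To finish, raise the representation to the powers $\pm2p$ and apply Cauchy--Schwarz. Then use your conditional-Poisson bound for $\mathbb{E}[(1-\eM)^{-4p(P_T-P_t)}]$, which gives \eqref{Square_integrable}. This is the paper's argument. Your reduction via $\cU^2(w)\le3(1+w^{2p}+w^{-2p})$ and your jump-factor computation carry over unchanged.
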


\begin{proof}
The case $t=T$ is immediate, hence let $t \in [0, T)$, $s \in [t, T]$, and admissible strategy $N$ be given. For the sake of simplicity, we denote $W, W^{\mathrm{liq}}$ by the total wealth and liquidated total wealth equipped with admissible strategy $N$ respectively in this proof. First one can observe that 
\begin{equation} \label{W_property}
    (1 - \epsilon_{M}) W_{s} \le W_{s}^{\mathrm{liq}} \le W_{s} \quad \text{ and } \quad (1 - \epsilon_{M}) W_{s-} \le W_{s} \le W_{s-}
\end{equation}
and the dynamics \eqref{SDE_W0} and \eqref{eq:W1dyn} imply 
\begin{equation} \label{SDE_W}
  d W_{s} = \pi_{s-} W_{s-} [\mu(s,e^{L_{s}}) d s + \sigma(s,e^{L_{s}}) d B_{s}] - [\overline{\epsilon}(e^{L_{s}}) e^{L_{s}} N_{s}^{+} + \underline{\epsilon}(e^{L_{s}}) e^{L_{s}} N_{s}^{-}] d P_{s}
\end{equation}
where $\pi_{s} := \frac{W_{s}^{(1)}}{W_{s}} \in [0, 1]$. Apply It\^o's lemma to \eqref{SDE_W} to get 
\begin{equation} \label{W_expand}
  \begin{split}
  W_{s} & = W_{t} \exp \bigg( \int_{t}^{s} \Big[ \pi_{u-} \mu(u,e^{L_{u}}) - \frac{1}{2} \pi_{u-}^{2} \sigma^{2}(u,e^{L_{u}}) \Big] du + \int_{t}^{s} \pi_{u-} \sigma(u,e^{L_{u}}) d B_{u} \\
  & \qquad \qquad \qquad + \int_{t}^{s} \ln \Big( 1 - \frac{\overline{\epsilon}(e^{L_{u}}) e^{L_{u}} N_{u}^{+}}{W_{u-}} - \frac{\underline{\epsilon}(e^{L_{u}}) e^{L_{u}} N_{u}^{-}}{W_{u-}} \Big) dP_{u} \bigg).
  \end{split}
\end{equation}
Let $M_{s, t} := \int_{t}^{s} \pi_{u-} \sigma(u,e^{L_{u}}) d B_{u}$. Since the drift integrand of \eqref{W_expand} is bounded by $\| \mu \|_{\infty} + \| \sigma \|_{\infty}^{2}$, since by \eqref{W_property} the jump exponent in \eqref{W_expand} lies in $[(P_{T} - P_{t}) \ln (1 - \eM),\, 0]$, and since $e^{2 q M_{s, t}} \le e^{2 |q| |M_{s, t}|}$, we obtain from \eqref{W_expand}, for every $q \in \R$, 
\[
    \sup_{u \in [t,T]} W_{u}^{2 q} \le W_{t}^{2 q}\, e^{2 |q| (\| \mu \|_{\infty} + \| \sigma \|_{\infty}^{2})(T-t)}\, (1 - \epsilon_M)^{- 2 |q| (P_T - P_t)} \exp \Big( 2 |q| \sup_{u \in [t,T]} |M_{u, t}| \Big).
\]
Hence, by the Cauchy-Schwarz inequality,
\begin{equation}  \label{sup_inequality}
  \begin{split}
    \mathbb{E}_{t,l,x,y} \Big[ \sup_{u \in [t,T]} W_{u}^{2 q} \Big] & \le W_{t}^{2 q} e^{2 |q| (\| \mu \|_{\infty} + \| \sigma \|_{\infty}^{2}) (T-t)} \mathbb{E}_{t,l,x,y} \big[ (1 - \eM)^{- 4 |q| (P_{T} - P_{t})} \big]^{\frac{1}{2}} \\
    & \qquad \cdot \mathbb{E}_{t,l,x,y} \bigg[ \exp \Big( 4 |q| \sup_{u \in [t,T]} |M_{u, t}| \Big) \bigg]^{\frac{1}{2}}.
    \end{split}
\end{equation}
Let us first treat the expectation of \eqref{sup_inequality}. Conditional on the future intensity path $\{\lambda(u,e^{L_u}):u\in[t,T]\}$, the random variable $P_T-P_t$ is Poisson with parameter
\[
    \int_t^T\lambda(u,e^{L_u})\,du.
\]
Therefore, by the tower property,
\begin{align}
    \mathbb{E}_{t,l,x,y} \Big[ (1-\epsilon_M)^{-4|q|(P_T-P_t)} \Big] &= \mathbb{E}_{t,l,x,y} \bigg[ \exp\Big( \big[(1-\epsilon_M)^{-4|q|}-1\big] \int_t^T\lambda(u,e^{L_u})\,du \Big) \bigg]    \nonumber
    \\
    &\le \exp\Big( \Lambda(T-t) \big[(1-\epsilon_M)^{-4|q|}-1\big] \Big).   \label{Poisson_expectation}
\end{align}
For the second expectation of \eqref{sup_inequality}, since $\langle M_{\cdot,t}\rangle_T \le \|\sigma\|_\infty^2(T-t)$, the exponential maximal inequality for continuous martingales gives
\[
    \mathbb{P}_{t,l,x,y} \Big( \sup_{u\in[t,T]}|M_{u,t}|\ge a \Big) \le 2\exp\bigg( -\frac{a^2}{2\|\sigma\|_\infty^2(T-t)} \bigg), \qquad a\ge0,
\]
under the restarted law corresponding to $\mathbb{E}_{t,l,x,y}$, and as in \eqref{Tail_integration}
\begin{equation} \label{Stochastic_expectation}
    \mathbb{E}_{t,l,x,y} \bigg[ \exp \Big( 4 |q| \sup_{u \in [t,T]} |M_{u, t}| \Big) \bigg] \le 1 + 8 |q| \int_{0}^{\infty} \exp \bigg( 4 |q| a- \frac{a^{2}}{2\|\sigma\|_{\infty}^{2}(T-t)} \bigg) \,da.
\end{equation}

Now applying \eqref{Poisson_expectation} and \eqref{Stochastic_expectation} to \eqref{sup_inequality}, we have for any $q \in \R$, 
\begin{align*}
    \mathbb{E}_{t,l,x,y} \Big[ \sup_{u \in [t,T]} W_{u}^{2q} \Big] & \le W_{t}^{2 q} e^{2 |q| (\| \mu \|_{\infty} + \| \sigma \|_{\infty}^{2}) (T-t)} \exp \bigg( \frac{1}{2} \Lambda (T-t)[(1 - \eM)^{- 4 |q|} - 1] \bigg) \\
    & \quad \cdot \bigg( 1 + 8 |q| \int_{0}^{\infty} \exp \bigg( 4 |q| a- \frac{a^{2}}{2\|\sigma\|_{\infty}^{2}(T-t)} \bigg)\,da \bigg)^{\frac{1}{2}}.
\end{align*}
Since the Cauchy-Schwarz inequality implies $\cU^{2}(w) = (1 + w^{p} + w^{-p})^{2} \le 3 (1 + w^{2p} + w^{-2p})$, taking $q = p$ and $q = -p$ deduces the desired result.
\end{proof}

\bigskip

\section*{Funding}

Tae Ung Gang is supported by the National Research Foundation of Korea under Grant RS-2026-25468875, RS-2026-25487734, and RS-2025-00513609, funded by the Korean government (MSIT). Donghan Kim is supported by the National Research Foundation of Korea under Grant RS-2025-00513609, and by an Ewon Assistant Professorship Award at KAIST. 

\medskip

\section*{Conflict of Interest}
The authors declare no conflicts of interest.

\bigskip

\bibliographystyle{apacite}
\bibliography{references}

\end{document}